\documentclass[11pt,a4paper]{article}
\usepackage[margin=2.5cm]{geometry}
\usepackage{amsmath,amssymb,amsthm}
\usepackage{microtype}
\PassOptionsToPackage{hyphens}{url}
\usepackage[hidelinks]{hyperref}
\usepackage{enumitem}
\usepackage{booktabs}
\usepackage{multirow}
\usepackage{algorithm}
\usepackage{algpseudocode}
\usepackage{placeins}
\usepackage{graphicx}
\usepackage{url}

\newcommand{\PX}{P_X}
\newcommand{\QX}{Q_X}
\newcommand{\chisq}{\chi^2(\QX\,\|\,\PX)}
\newcommand{\chisqZ}{\chi^2(Q_Z\,\|\,P_Z)}
\newcommand{\Lp}[1]{L^p(#1)}
\newcommand{\norm}[1]{\left\|#1\right\|}
\newcommand{\abs}[1]{\left|#1\right|}
\newcommand{\Cov}{\operatorname{Cov}}
\newcommand{\E}{\mathbb{E}}
\newcommand{\Prob}{\mathbb{P}}
\newcommand{\R}{\mathbb{R}}
\newcommand{\Rbar}{\overline{\mathbb{R}}}
\newcommand{\TV}{\operatorname{TV}}
\newcommand{\KL}{\operatorname{KL}}
\newcommand{\Cset}{\widehat{\mathcal C}}
\newcommand{\Zc}{\mathcal Z}
\newcommand{\Qt}{\widetilde Q}
\newcommand{\ind}{\mathbf 1}
\DeclareMathOperator{\Qf}{Q}

\theoremstyle{plain}
\newtheorem{theorem}{Theorem}[section]
\newtheorem{proposition}[theorem]{Proposition}
\newtheorem{lemma}[theorem]{Lemma}
\newtheorem{corollary}[theorem]{Corollary}
\theoremstyle{definition}
\newtheorem{definition}[theorem]{Definition}
\newtheorem{example}[theorem]{Example}
\newtheorem{assumption}{Assumption}
\theoremstyle{definition}
\newtheorem{remark}[theorem]{Remark}

\title{Sketched Calibration for Conformal Prediction\\ under Covariate Shift}
\author{Mehrdad Pournaderi\\ Mofid Securities\\ \texttt{m.pournaderi@emofid.com}}
\date{}

\begin{document}
\maketitle

\begin{abstract}
Weighted conformal prediction corrects for covariate shift by reweighting calibration scores with the likelihood ratio between target and source covariates. Its cost grows with the chi-square divergence between the two covariate laws, typically exponentially in the size of the shift, and much of it can be paid for shift in directions that do not affect the response. We propose \emph{sketched calibration}: weighted conformal prediction with the ratio of compressed covariates $Z=T(X)$, used in the weights and optionally in the score. Compression never increases the shift-dependent calibration cost, and target coverage is at least $1-\alpha-\Delta_T$, where the \emph{leakage} $\Delta_T$ measures how much of the discarded shift reappears as a change in the law of the response, or of the score, given $Z$. The leakage is a covariance between the discarded shift and the response on the fibres of the sketch; it vanishes when the sketch is sufficient or retains the shift, and a minimax construction shows that no threshold rule for a fixed score can avoid it. In a heteroscedastic simulation where unweighted calibration fails, a one-dimensional sketch cuts the fraction of infinite prediction sets from $37.8\%$ to $8.0\%$ at mean coverage $92.3\%$ for a $90\%$ target; nonlinear and real-data experiments show when learned sketches succeed and when they fail.
\end{abstract}

\section{Introduction}

Split conformal prediction \cite{papadopoulos2002,vovk2005} turns any fitted score into a prediction set with finite-sample marginal coverage, provided calibration and test data are exchangeable. Under \emph{covariate shift}, where training and calibration pairs follow $\PX\otimes P_{Y|X}$ but the test pair follows $\QX\otimes P_{Y|X}$, exchangeability fails, and Tibshirani et al.\ \cite{tibshirani2019} showed that coverage is restored by reweighting the calibration scores with the likelihood ratio $w=\mathrm d\QX/\mathrm d\PX$. This is one of the most used tools in distribution-free inference, and also one of the most fragile in practice. The weight is a $d$-dimensional density ratio, hard to estimate when $d$ is moderate \cite{sugiyama2012}; and even when $w$ is known exactly, the calibration error of the weighted rule is of order $\sqrt{(1+\chisq)/m}$, where $\chisq=\E_{\PX}[w^2]-1$ is the chi-square divergence and $m$ the calibration size, and no threshold rule for a fixed score can do better on every problem (Theorem~\ref{thm:price} below; sharper constants are in \cite{isaev2026,pournaderi2026}). The weighted rule therefore behaves as if it had $m/(1+\chisq)$ calibration points, the population \emph{effective sample size} of importance weighting \cite{polo2022,gretton2009}. For a Gaussian mean shift of size $\mu$ the divergence is $e^{\mu^2}-1$, and for a product shift across $d$ coordinates it is $\prod_j(1+\chi^2_j)-1$, exponential in $d$ \cite[Sec.\ 4.1]{polo2022}. The symptom is familiar: the weighted threshold places mass $w(x)/(\sum_iw(X_i)+w(x))$ on $+\infty$, and whenever this exceeds $\alpha$ the prediction set is the whole real line. In the experiments below, with $\chisq\approx500$ and $m=500$, this happens for $38\%$ of test points.

\paragraph{A simple example.}
Let $X=(X_1,X_2)$ have independent standard Gaussian coordinates under the source, let $Y=X_1+\varepsilon$ with $\varepsilon$ independent noise, and let the target shift only the second coordinate, $X_2\sim N(\mu,1)$. The likelihood ratio is $w(x)=\exp(\mu x_2-\mu^2/2)$, its chi-square divergence is $e^{\mu^2}-1$, and full weighting reacts to the shift in full: at $\mu=2.5$ the divergence exceeds $500$ and the effective calibration sample is a few points. Yet the shift is irrelevant for prediction. Retaining $Z=X_1$ and discarding $X_2$ leaves the joint law of $(Z,Y)$ unchanged between source and target, so conformal prediction on $(Z,Y)$ is exactly valid with no weighting at all and uses the full calibration sample. The chi-square price was being paid for a direction that does not matter. The natural remedy is to calibrate on a compressed covariate $Z=T(X)$ that retains what matters; we call this \emph{sketched calibration}, with the linear case $Z=\Pi X$, $\Pi\in\R^{k\times d}$, as the leading example.

\paragraph{The danger.}
Discarding covariates is not free, and the subtlety is easy to miss: \emph{a covariate shift in $X$ is not a covariate shift in $Z$}.

\begin{example}[Two points]\label{ex:toy}
Let $\mathcal X=\{x_1,x_2\}$ with $T(x_1)=T(x_2)=z$, $P_{Y|X=x_1}=N(0,1)$, $P_{Y|X=x_2}=N(3,1)$, $\PX=\tfrac12\delta_{x_1}+\tfrac12\delta_{x_2}$, $\QX=\delta_{x_1}$. Then $w(x_1)=2$, $w(x_2)=0$, $\chisq=1$. The sketch collapses both points to $z$, so $P_Z=Q_Z=\delta_z$: in the sketched space there is \emph{no} covariate shift. But the law of $Y$ given $Z=z$ is $\tfrac12N(0,1)+\tfrac12N(3,1)$ under the source and $N(0,1)$ under the target. The sketched rule calibrates to the mixture, while the target response is $N(0,1)$; the discrepancy, $\TV\bigl(N(0,1),\tfrac12N(0,1)+\tfrac12N(3,1)\bigr)=\Phi(1.5)-\tfrac12\approx0.43$ in total variation, cannot be removed by any reweighting in $z$, because $z$ carries no information. The covariate shift has become a shift in $Y\mid Z$.
\end{example}

In general the conditional law of $Y$ given $Z=z$ is a mixture of $P_{Y|X=x}$ over the fibre $\{T(x)=z\}$, mixed according to the conditional law of $X$ given $Z$, and that fibre law differs between source and target. The part of the covariate shift that reappears as a shift of $Y\mid Z$ is what we call the \emph{leakage} of the sketch. The two examples together set the terms: compression can remove the whole chi-square price (first example) or convert it into an uncorrectable bias (second), and which happens depends on how the discarded shift interacts with the response.

\paragraph{This paper.}
Reduced-dimension importance weighting is an established idea in covariate-shift adaptation: Stojanov et al.\ \cite{stojanov2019} weight by the density ratio of a learned response-sufficient projection, and Polo and Vicente \cite{polo2022} prove that a linear reduction can only increase the effective sample size. What is new here is the analysis inside conformal prediction, where the question is whether the finite-sample coverage guarantee survives and exactly what is lost when the reduction is not sufficient. Three contributions answer it.
\begin{enumerate}[leftmargin=1.8em,itemsep=2pt]
\item \textbf{A coverage--calibration trade-off (Theorem~\ref{prop:main}).} Weighted conformal prediction run on $(T(X_i),Y_i)$ with the sketched ratio $w_Z=\mathrm dQ_Z/\mathrm dP_Z$ has target coverage at least $1-\alpha-\Delta_T$, where the leakage $\Delta_T$ is the average total-variation shift of $Y\mid Z$. The sketched ratio is the $L^2$-projection of $w$ onto sketch-measurable functions, so the sketched divergence, the supremum of a bounded ratio and, for $\alpha\le\tfrac12$, the leading variance proxy of the sharpest known calibration bound do not increase. The leakage is zero when the sketch is sufficient for the response or retains the shift. The same one-line argument gives bias bounds for estimated sketched ratios and for weight clipping, and a variant of the rule that keeps a predictor fitted on all covariates and sketches only the weights, whose leakage is governed by the conditional law of the \emph{score} rather than of the response (Theorem~\ref{thm:fullscore}). On each fibre of the sketch the leakage is exactly a covariance between the discarded shift and the response kernel, which yields an interaction bound through the maximal correlation of the two (Proposition~\ref{prop:cov}).
\item \textbf{The trade-off cannot be avoided (Theorem~\ref{thm:lb}).} For every leakage level $\Delta_T$ up to $\tfrac14\min\{\alpha,1-\alpha\}$ there is a shift problem on which any rule that chooses a threshold for a prescribed score from sketched data, even with unlimited calibration data and randomisation, has worst-case coverage error at least $\alpha\Delta_T/(\alpha+\Delta_T)$, and this value is attained. The restriction to threshold rules is essential, and we show why.
\item \textbf{The Gaussian single-index model, and evidence (Sections~\ref{sec:gauss} and \ref{sec:exp}).} Under a Gaussian mean shift and a linear single-index response the leakage is explicit: it is the interaction between the discarded part of the shift and the discarded part of the signal, so a sketch containing the index has zero leakage for every shift, the one-dimensional sketch along the index pays only for the component of the shift along the index, and a random sketch retains, on average, a $k/d$ fraction of the squared norm of both shift and signal. A clipped-weight version of the finite-sample calibration bound covers the lognormal ratios of this model (Theorem~\ref{thm:lognormal}). We also bound the error of the natural source-data certificate of a sketch, give an oracle inequality for selecting among candidate sketches with a few target labels, and propose a default residual-based sketch (Section~\ref{sec:choose}). Simulations in the Gaussian model and in a nonlinear multiple-index model, and two real-data studies with controlled covariate-dependent sampling, show the trade-off between weight variability, coverage and interval width, including the cases where a learned sketch fails.
\end{enumerate}

\paragraph{Organisation.} Section~\ref{sec:background} fixes notation and states the two background facts the paper rests on. Section~\ref{sec:sketch} defines sketched calibration, gives the procedure and the main theorem, draws out what compression changes, expresses the leakage as a covariance, introduces the full-score variant, and discusses the choice of the sketch. Section~\ref{sec:gauss} makes everything explicit in the Gaussian model, Section~\ref{sec:lb} proves the lower bound, Section~\ref{sec:exp} reports the experiments, and Section~\ref{sec:disc} discusses open problems. Appendices contain the proofs of the background results (\ref{app:background}), sharper calibration bounds and their behaviour under sketching (\ref{app:sharper}), the lower-bound construction (\ref{app:lb}), the Lipschitz-kernel calculations (\ref{app:lip}), the proof of the clipped-weight bound (\ref{app:lognormal}), auxiliary lemmas (\ref{app:aux}) and experimental details (\ref{app:exp}).

\section{Related work}

Weighted conformal prediction under known covariate shift is due to Tibshirani et al.\ \cite{tibshirani2019}; Lei and Cand\`es \cite{lei2021} applied it to counterfactual inference and quantified the effect of estimated weights, and Yang, Kuchibhotla and Tchetgen Tchetgen \cite{yang2024} obtained doubly robust versions. Isaev et al.\ \cite{isaev2026} derived finite-sample $L^p$ bounds on the realised conditional-coverage profile of weighted conformalised quantile regression \cite{romano2019} and proved that the $\sqrt{(1+\chi^2)/m}$ calibration rate is minimax for fixed-score threshold calibration; our Theorem~\ref{thm:price} gives a short self-contained version of both halves, and our work takes the lower bound as the motivation for reducing $\chi^2$. The companion paper \cite{pournaderi2026} gives sharp training-conditional bounds for the weighted quantile from a single concentration inequality at a fixed population quantile, with explicit constants and a variance proxy that can be much smaller than $1+\chi^2$; Appendix~\ref{app:sharper} relates our cruder bounds to those and shows that the variance proxy there does not increase under sketching. Wang and Goel \cite{wang2026} analyse weight clipping under unbounded shift and give an empirical correction for dataset-conditional coverage; Proposition~\ref{prop:clip} is a one-line clipping bound in our surrogate-target framework, included for comparison with the leakage bound. The impossibility of distribution-free conditional coverage \cite{vovk2012,barber2021} is the reason approximate and localised notions \cite{guan2023,gibbs2023} are studied.

Two recent methods also avoid estimating the full density ratio, by different routes. Likelihood-ratio regularised quantile regression (LR-QR) \cite{joshi2025} fits a quantile function by the pinball loss plus a data-dependent regulariser that involves only one-dimensional projections of the likelihood ratio, estimable from unlabelled source and target covariates, and obtains target coverage up to error terms that vanish with the sample sizes through a stability analysis; it changes the quantile fit rather than the calibration step, and its guarantee is approximate. Wasserstein-regularised conformal prediction (WR-CP) \cite{xu2025} addresses joint covariate and concept shift, bounds the coverage gap by a Wasserstein distance rather than by total variation, and combines importance weighting for the covariate part with representation learning that shrinks the concept-shift part during training, with an adaptive coverage bound in a multi-source setting. The setting here is pure covariate shift with a shared response kernel: the representation changes only the calibration space, the bias it creates is a total-variation quantity produced by the compression itself rather than a concept shift present in the data, and the guarantee is finite-sample and exact up to that explicit bias. Neither method addresses the question studied here, namely what is lost when a reduction is not sufficient and whether the loss can be avoided.
Importance weighting on a reduced representation is not new, and two predecessors are close. Stojanov et al.\ \cite{stojanov2019} prove (their Theorem 1) that if $X\perp Y\mid h(X)$ and the loss factors through $h(X)$, then weighting the source risk by the ratio of the laws of $h(X)$ gives the same target risk as weighting by the ratio of the laws of $X$; they learn a linear $h(x)=W^{\!\top}x$ by minimising the trace of a conditional covariance operator (a kernel sufficient-dimension-reduction criterion), estimate the reduced ratio by kernel mean matching, and show that the resulting generalisation bound avoids the exponential dependence on $d$ of the full-dimensional one. Polo and Vicente \cite{polo2022} formalise the effective sample size of importance weighting as $\mathrm{ESS}^\ast=\exp(-D_2)=1/(1+\chi^2)$, show that it cannot increase when features are added (their Theorem 4.1), and that linear dimensionality reduction or feature selection, even data-dependent, cannot decrease it (their Lemma 5.1 and Theorems 5.2--5.3, by data processing for R\'enyi divergence); they also note that such a reduction may discard response-relevant information without quantifying the loss. Our Theorem~\ref{prop:main}(iv)(b) is the conformal counterpart of Stojanov et al.'s Theorem 1, and Theorem~\ref{prop:main}(ii) sharpens Polo and Vicente's monotonicity to an exact Pythagorean identity through the projection structure of the sketched ratio. What neither paper addresses, and what this paper is about, is the price of a reduction that is \emph{not} sufficient: the leakage $\Delta_T$ in the finite-sample coverage guarantee, the lower bound showing it is unavoidable for fixed-score calibration (sharp to first order as $\Delta_T/\alpha\to0$), and the interaction structure between discarded shift and discarded signal that governs it. The zero-leakage condition ``$Y\perp X\mid T(X)$'' is the defining property of a sufficient dimension-reduction subspace \cite{li1991,cook1998}, which is why the estimators of \cite{stojanov2019} are natural sketches for Corollary~\ref{cor:informed}(b).

Sketching and random projection are classical in numerical linear algebra and statistics: Johnson and Lindenstrauss \cite{jl1984,dasgupta2003}, random projections for regression \cite{maillard2012}, sketched least squares \cite{raskutti2016}, and random-projection ensembles for classification \cite{cannings2017}. The combination with conformal inference is recent: Sesia and Favaro's conformal intervals for frequencies stored in a count-min sketch \cite{sesia2022}, and random-projection ensembles for conformal classification \cite{rpconformal2024}.
\section{Background: weighted conformal prediction and the price of shift}\label{sec:background}

\subsection{Notation and standing conventions}

All spaces are standard Borel, so regular conditional distributions exist and are unique up to null sets. $X$ takes values in $\mathcal X$ and $Y\in\R$. For a probability measure $\nu$ and a measurable map $T$, $T_\#\nu$ is the push-forward. For a nondecreasing right-continuous $H:\R\to[0,\infty)$ with $H(-\infty)=0$ and $u>0$, the generalised inverse is $\Qf(u;H)=\inf\{t\in\R:H(t)\ge u\}$ with $\inf\varnothing=+\infty$. $\TV(\mu,\nu)=\sup_A|\mu(A)-\nu(A)|$; for $\mu,\nu$ with densities $f,g$ with respect to a common measure, $\TV=\tfrac12\int|f-g|$ (Scheff\'e). $\KL(\mu\|\nu)=\int\log(\mathrm d\mu/\mathrm d\nu)\,\mathrm d\mu$, $\chi^2(\mu\|\nu)=\int(\mathrm d\mu/\mathrm d\nu-1)^2\mathrm d\nu$. $\Phi,\varphi$ are the standard normal CDF and density, $z_u=\Phi^{-1}(u)$.

\subsection{Setting}

Training data $\mathcal D^{\mathrm{tr}}_n$ and calibration data $\mathcal D^{\mathrm{cal}}_m=\{(X_i,Y_i)\}_{i=1}^m$ are independent i.i.d.\ samples from the \emph{source} $P=\PX\otimes P_{Y|X}$; the test pair $(X,Y)$ is independent of both and follows the \emph{target} $Q=\QX\otimes P_{Y|X}$, with the same response kernel. A score $S:\mathcal X\times\R\to\R$ is fitted on the training fold and held fixed throughout; all probabilities below are conditional on the training fold, and we write $S_i=S(X_i,Y_i)$. We use three nested assumptions on the shift.

\begin{assumption}[Known shift]\label{ass:L}
$\QX\ll\PX$ and the likelihood ratio $w=\mathrm d\QX/\mathrm d\PX$ is known.
\end{assumption}
\begin{assumption}[Finite chi-square]\label{ass:chi}
Assumption~\ref{ass:L} holds and $w\in L^2(\PX)$, so that $\chisq$, which equals $\E_{\PX}[w^2]-1$, is finite.
\end{assumption}
\begin{assumption}[Bounded ratio]\label{ass:bdd}
Assumption~\ref{ass:L} holds and $w\le w_{\max}<\infty$ $\PX$-a.s.; then $\chisq\le w_{\max}-1$.
\end{assumption}

Validity of weighted conformal prediction needs only Assumption~\ref{ass:L}; the projection identities of Section~\ref{sec:sketch} need Assumption~\ref{ass:chi}; the Bernstein-type concentration bound of Theorem~\ref{thm:price}(a) needs Assumption~\ref{ass:bdd}. The Gaussian examples of Section~\ref{sec:gauss} satisfy Assumption~\ref{ass:chi} but not \ref{ass:bdd}.

\subsection{Weighted conformal prediction}

\begin{algorithm}[!htb]
\caption{Weighted split conformal prediction \cite{tibshirani2019}}\label{alg:wcp}
\begin{algorithmic}[1]
\Require fitted score $S$; calibration pairs $(X_i,Y_i)_{i=1}^m$; weight function $w$; test covariate $x$; level $\alpha$
\State $S_i\gets S(X_i,Y_i)$ for $i=1,\dots,m$
\State $p_i(x)\gets w(X_i)\big/\bigl(\sum_{j=1}^m w(X_j)+w(x)\bigr)$ for $i=1,\dots,m$;\quad $p_{m+1}(x)\gets w(x)\big/\bigl(\sum_{j=1}^m w(X_j)+w(x)\bigr)$
\State $\widehat Q^w(x)\gets\Qf\bigl(1-\alpha;\ \sum_{i=1}^m p_i(x)\,\ind\{S_i\le\cdot\}\bigr)$\quad(the $(1-\alpha)$-quantile of $\sum_i p_i(x)\delta_{S_i}+p_{m+1}(x)\delta_{+\infty}$; it equals $+\infty$ iff $\sum_{i\le m}p_i(x)<1-\alpha$)
\State \Return $\Cset^w(x)=\{y\in\R:S(x,y)\le\widehat Q^w(x)\}$\quad(convention: if $\sum_{j\le m}w(X_j)+w(x)=0$, return $\Cset^w(x)=\R$)
\end{algorithmic}
\end{algorithm}

Algorithm~\ref{alg:wcp} is invariant to multiplying the weight function by a positive constant, since only the normalised $p_i(x)$ enter. This scale invariance is used repeatedly below.

\begin{theorem}[Validity under covariate shift; \cite{tibshirani2019}]\label{thm:wcp}
Under Assumption~\ref{ass:L}, Algorithm~\ref{alg:wcp} satisfies $\Prob_Q\{Y\in\Cset^w(X)\}\ge1-\alpha$, where the probability is over the calibration sample and the independent test pair $(X,Y)\sim Q$.
\end{theorem}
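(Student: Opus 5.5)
The plan is the weighted-exchangeability argument of \cite{tibshirani2019}, written for the split setting. Let $(X_{m+1},Y_{m+1})=(X,Y)$ be the test pair and $S_{m+1}=S(X,Y)$. The joint law of the $m+1$ pairs is
\[
\prod_{i\le m}P(\mathrm dx_i,\mathrm dy_i)\,Q(\mathrm dx_{m+1},\mathrm dy_{m+1}).
\]
By Assumption~\ref{ass:L}, this has density $w(x_{m+1})$ with respect to $P^{\otimes(m+1)}$, since the response kernels agree. Condition on the unordered multiset (the empirical measure) $E$ of the $m+1$ pairs. I will compute the conditional probability that the test pair is the point sitting at a given location of the multiset.

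The first step handles ties by assuming the $m+1$ values are a.s.\ distinct; otherwise one passes to multiplicities, with the same conclusion. Because $P^{\otimes(m+1)}$ is exchangeable, conditioning the tilted law on $E=\{z_1,\dots,z_{m+1}\}$ gives
\[
\Prob\{(X_{m+1},Y_{m+1})=z_j\mid E\}=\frac{w(x_j)}{\sum_{k=1}^{m+1}w(x_k)}=:p^{E}_j .
\]
This is the standard permutation computation. Writing it out rigorously in standard Borel spaces is the main technical obstacle. I would do it by testing against symmetric functions $g(E)$ times $h(z_{m+1})$ and then symmetrising the integral under $P^{\otimes(m+1)}$. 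Conditionally on $E$, the test score is therefore distributed as $\sum_j p_j^E\delta_{s_j}$, where $s_j=S(z_j)$. Let $\eta(E)=\Qf(1-\alpha;\sum_j p^E_j\ind\{s_j\le\cdot\})$. Then $\Prob\{S_{m+1}\le\eta(E)\mid E\}\ge1-\alpha$ by definition of the quantile. On the null event where all weights vanish the convention returns $\R$, so the bound holds trivially there.

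The second step links $\eta(E)$ to the algorithm. The threshold $\widehat Q^w(X)$ is the $(1-\alpha)$-quantile of $\sum_{i\le m}p_i\delta_{S_i}+p_{m+1}\delta_{+\infty}$. This measure is obtained from $\sum_j p^E_j\delta_{s_j}$ by moving the atom at $S_{m+1}$ to $+\infty$, and the weights $p_i(X)$ coincide with $p^E$. We now show that $S_{m+1}\le\eta(E)$ implies $S_{m+1}\le\widehat Q^w(X)$. Suppose $S_{m+1}\le\eta(E)$ and take $t<S_{m+1}$. The two cumulative functions agree at $t$, so at $t$ the algorithm's cumulative weight lies below $1-\alpha$ whenever the conditional one does. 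If instead $\widehat Q^w<S_{m+1}$, the algorithm's cumulative at $\widehat Q^w$ reaches $1-\alpha$. The two cumulatives agree below $S_{m+1}$, so the conditional one would reach $1-\alpha$ there as well. This gives $\eta(E)\le\widehat Q^w<S_{m+1}$, a contradiction. Hence $\{S_{m+1}\le\eta(E)\}\subseteq\{S_{m+1}\le\widehat Q^w(X)\}=\{Y\in\Cset^w(X)\}$.

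Finally, take expectations over $E$ and apply the tower property to conclude $\Prob_Q\{Y\in\Cset^w(X)\}\ge1-\alpha$. The algebra in the second step is routine. The only delicate points are the conditional-law identity in the first step and checking that the tie and zero-weight conventions do not break the inclusion.
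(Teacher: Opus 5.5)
Your proposal is correct and follows the paper's proof in Appendix~\ref{app:wcp} essentially step for step: you condition on the unordered multiset, use the density $w(x_{m+1})$ to get conditional label probabilities $w(x_j)/\sum_k w(x_k)$, and compare the quantile of the full weighted score distribution with the algorithm's quantile after moving the test atom to $+\infty$. The only difference is that you prove the inclusion $\{S_{m+1}\le\eta(E)\}\subseteq\{Y\in\Cset^w(X)\}$, whereas the paper shows the two events coincide; the inclusion is all the lower bound needs.
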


The proof (Appendix~\ref{app:wcp}) uses only that the joint law of calibration and test pairs has density $w(x_{m+1})$ with respect to $P^{\otimes(m+1)}$; no boundedness is needed. It is included because the sketched version in Section~\ref{sec:sketch} relies on exactly which properties of the joint law are used.

\subsection{The price of shift}

The following theorem is the motivation for everything that follows. Its upper bound shows that the calibration error of the weighted rule scales as $\sqrt{(1+\chisq)/m}$; its lower bound shows that no threshold rule for a fixed score can do better on every problem. The statements are the simplest ones that exhibit the rate, and they are proved in full in Appendix~\ref{app:price}; sharper versions from \cite{isaev2026,pournaderi2026} are discussed in Appendix~\ref{app:sharper}. For $\delta\in(0,1)$ define
\begin{equation}\label{eq:eta}
\eta_m(\delta):=\sqrt{\frac{2(1+\chisq)\log(16\sqrt m/\delta)}{m}}+\frac{2w_{\max}\log(16\sqrt m/\delta)}{3m}+\frac1{\sqrt m}.
\end{equation}

\begin{theorem}[The $\chi^2$ price of shift]\label{thm:price}
\begin{enumerate}[leftmargin=1.8em,itemsep=2pt]
\item[(a)] (Upper bound.) Under Assumption~\ref{ass:bdd}, suppose the score has a continuous distribution under $Q$. Then with probability at least $1-\delta$ over the calibration sample, the coverage of Algorithm~\ref{alg:wcp} conditional on that sample satisfies
\[
\abs{\Prob_Q\{Y\in\Cset^w(X)\mid\mathcal D^{\mathrm{cal}}_m\}-(1-\alpha)}\ \le\ 2\eta_m(\delta)+\frac{w_{\max}}{m}.
\]
\item[(b)] (Lower bound.) Fix $\alpha\in(0,1)$ and $\kappa>0$. There exist $\PX,\QX$ on a two-point space with $\chisq=\kappa$, a score $S^\star$, and a family of response kernels satisfying Assumption~\ref{ass:A1} below, such that for every $m\ge m_\star:=2(1+\kappa)\max\{\alpha,1-\alpha\}/\min\{\alpha,1-\alpha\}$ and every rule $\widehat q=\widehat q(\mathcal D^{\mathrm{cal}}_m,U)$ mapping $m$ source calibration pairs and an independent randomisation variable $U$ to a threshold in $\Rbar$,
\begin{gather*}
\sup_{P_{Y|X}}\ \E_{\mathcal D^{\mathrm{cal}}_m,U}\,\abs{\Cov(\widehat q)-(1-\alpha)}\ \ge\ \frac18\sqrt{\frac{(1+\kappa)\,\alpha(1-\alpha)}{2m}},\\
\text{where }\Cov(q):=\Prob_Q\{S^\star(X,Y)\le q\}.
\end{gather*}
\end{enumerate}
\end{theorem}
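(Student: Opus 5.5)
For part (a), I will reduce conditional coverage to a statement about the weighted empirical CDF of the calibration scores. Write $F_Q(t)=\Prob_Q\{S\le t\}$. Source scores reweighted by $w$ reproduce this law: $\E_P[w(X)\ind\{S\le t\}]=F_Q(t)$.

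Conditional on the calibration sample, the coverage is
\[
\Prob_Q\{S(X,Y)\le\widehat Q^w(X)\mid\mathcal D^{\mathrm{cal}}_m\}.
\]
The threshold depends on the test point only through $w(x)$ in the normaliser and in the atom at $+\infty$. I will therefore compare it with the test-independent threshold
\[
\widehat q_0=\Qf\bigl(1-\alpha;\widehat F\bigr),\qquad \widehat F(t)=\frac{\sum_i w(X_i)\ind\{S_i\le t\}}{\sum_j w(X_j)}.
\]
The test weight shifts the normalised masses by at most $w(x)/\sum_j w(X_j)\le w_{\max}/\sum_jw(X_j)$. On the event $\sum_jw(X_j)\ge m/2$, which is controlled below, this perturbs the effective quantile level by $O(w_{\max}/m)$. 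That is the origin of the additive $w_{\max}/m$ term, after absorbing constants into $\eta_m$.

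The main step is uniform concentration of $\widehat F$ around $F_Q$. Write $G_m(t)=m^{-1}\sum_i w(X_i)\ind\{S_i\le t\}$. Each summand is bounded by $w_{\max}$ and has variance at most $\E_P[w^2]=1+\chisq$. Bernstein's inequality at a fixed $t$ therefore gives deviation
\[
\sqrt{2(1+\chisq)\log(\cdot)/m}+2w_{\max}\log(\cdot)/(3m).
\]
To make this uniform in $t$, I discretise. Because $S$ is continuous under $Q$, I can place $\lceil\sqrt m\rceil$ points at the $Q$-quantiles of $S$, at spacing $1/\sqrt m$ in $F_Q$. Monotonicity of $G_m$ and $F_Q$ then gives a uniform bound with an additive $1/\sqrt m$ term. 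A union bound over these points, together with the normaliser (the case $t=\infty$), costs the factor $16\sqrt m/\delta$ inside the logarithm. This produces $\eta_m(\delta)$.

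The normaliser $G_m(\infty)$ lies within $\eta_m$ of $1$, so $\sup_t|\widehat F-F_Q|\le 2\eta_m$ on this event, up to routine manipulation. Continuity of $F_Q$ then gives $|F_Q(\widehat q_0)-(1-\alpha)|\le2\eta_m$, and adding the test-weight perturbation finishes (a). The bookkeeping of constants, so that exactly $2\eta_m+w_{\max}/m$ comes out, is the fiddly part; this is where I expect to spend the most care.

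For part (b), I will use a two-point Le Cam argument. Take $\mathcal X=\{a,b\}$ with $\PX(b)=p$ and $\QX(b)=1$; then $\chisq=1/p-1$, so I set $p=1/(1+\kappa)$. Let $S^\star(x,y)=y$. The target sees only the kernel at $b$, and the source observes roughly $mp=m/(1+\kappa)$ draws from it. The kernel at $a$ is fixed and, by construction, has its atoms at locations that carry no information. At $b$, take $Y\in\{0,1\}$ with $\Prob(Y=1)=\alpha\pm\epsilon$, so the target coverage of any threshold in $[0,1)$ is $1-\alpha\mp\epsilon$. A threshold rule must effectively decide between the sets $\{q<1\}$ and $\{q\ge1\}$. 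Any such choice misses $1-\alpha$ by $\epsilon$ under one hypothesis, or by roughly $\min(\alpha,1-\alpha)$, so the coverage error is at least $\epsilon$ under one of the two kernels unless the rule can distinguish them.

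The calibration samples under the two hypotheses differ only through the $b$-points. Their KL divergence is about
\[
mp\cdot\frac{4\epsilon^2}{\alpha(1-\alpha)}.
\]
Choosing $\epsilon\asymp\sqrt{(1+\kappa)\alpha(1-\alpha)/m}$ keeps the total-variation distance at most $1/2$. Le Cam's inequality then gives expected error at least $\epsilon/4$, which yields the constant $\tfrac18\sqrt{\cdot/2}$ after tuning. The condition $m\ge m_\star$ guarantees $\epsilon\le\tfrac12\min\{\alpha,1-\alpha\}$, so both hypotheses are valid Bernoulli kernels satisfying Assumption~\ref{ass:A1}.
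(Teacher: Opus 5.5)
Your part (a) follows the paper's argument: pointwise Bernstein, discretisation at $\lceil\sqrt m\rceil$ quantiles of $F_Q$, then localisation of the quantile level. It needs some bookkeeping, noted at the end. The genuine gap is in part (b).

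The theorem requires the kernels to satisfy Assumption~\ref{ass:A1}. Its first condition, $P_{Y|X=x}([u,v])\le\mu_{\mathrm{up}}(v-u)$ for all finite $u\le v$, is a bounded-density requirement. Your Bernoulli kernels at $b$, and the atoms you place at $a$, violate it at $u=v$.

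This is not cosmetic. With $S^\star(x,y)=y$, every threshold in $[0,1)$ has coverage $1-\alpha\mp\epsilon$. Its error is therefore exactly $\epsilon$ under \emph{both} kernels, whatever the data. Thresholds outside $[0,1)$ err by $\alpha$ or $1-\alpha$. So your bound holds with no Le Cam step at all. It would hold equally with $\epsilon=\tfrac12\min\{\alpha,1-\alpha\}$ for every $m$, a ``lower bound'' that does not vanish as $m\to\infty$. That is an artefact of discreteness, which is exactly what Assumption~\ref{ass:A1} exists to exclude, and it says nothing about the statistical cost $\sqrt{(1+\kappa)/m}$. In particular, your claim that the error is at least $\epsilon$ ``unless the rule can distinguish them'' does not describe your construction.

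What is missing is a pair of kernels with densities bounded above and below, each exactly calibrated by some threshold but not by the same one. The separation $\inf_q\bigl(\abs{D_0(q)}+\abs{D_1(q)}\bigr)$ is then of order $\epsilon$ and must actually be computed. The paper's construction is as follows.
\begin{itemize}
\item $A$ is uniform on $[-1,1]$, and $B^{(\xi)}$ is $A$ with mass $\xi$ moved uniformly from $\{\abs y\le1-\alpha\}$ to $\{1-\alpha<\abs y\le1\}$.
\item The score is $S^\star(x,y)=\abs y-(1-\alpha)$.
\item Thresholds can be clipped to $[0,\alpha]$ at no cost. Writing them as $\alpha a$, the errors are $D_A=\alpha a$ and $D_B=(\alpha+\xi)a-\xi$.
\item Hence $\inf_a(\abs{D_A}+\abs{D_B})=\alpha\xi/(\alpha+\xi)\ge\xi/2$.
\end{itemize}
Your two-point marginals with $p=1/(1+\kappa)$, the $\chi^2$/Pinsker step and Le Cam then carry over. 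The $\chi^2$ divergence between the two source laws is now $\xi^2/((1+\kappa)\alpha(1-\alpha))$. Choosing $\xi^2=(1+\kappa)\alpha(1-\alpha)/(2m)$ makes the total variation at most $\tfrac12$, and Le Cam gives $\tfrac14\cdot\alpha\xi/(\alpha+\xi)\ge\xi/8$. The condition $m\ge m_\star$ is exactly $\xi\le\tfrac12\min\{\alpha,1-\alpha\}$. That keeps the density of $B^{(\xi)}$ in $[\tfrac14,\tfrac34]$, so Assumption~\ref{ass:A1} holds.

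As for (a), there are three points.
\begin{itemize}
\item After normalising, the direct bound is $\abs{\widehat F(t)-F_Q(t)}\le\eta(1+F_Q(t))/(1-\eta)$, not $2\eta$.
\item On the event $\sum_jw(X_j)\ge m/2$, the test-weight term is $2(1-\alpha)w_{\max}/m$. This cannot be absorbed into $\eta_m$, which is already spent on the CDF deviation. The stated constants come out if you keep the unnormalised $G_m$ and the level $(1-\alpha)(G_m(+\infty)+w(x)/m)$.
\item The threshold depends on $x$, so the coverage is $\int F_x(\widehat Q^w(x))\,\mathrm d\QX(x)$, where $F_x(t)=P_{Y|X=x}\{S(x,Y)\le t\}$. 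You need to sandwich every $\widehat Q^w(x)$ between two test-independent thresholds and use monotonicity of each $F_x$ before integrating.
\end{itemize}
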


The loss in (b) is the error of the \emph{realised} threshold: the probability defining $\Cov(q)$ is over the test pair only. Averaging over the randomisation inside the absolute value would make the loss trivial (a rule that outputs $-\infty$ with probability $\alpha$ and $+\infty$ otherwise has average coverage exactly $1-\alpha$), so the conditioning on $U$ matters.

The lower bound uses the following regularity condition on response kernels, which we also need in Section~\ref{sec:lb}. It is the local-mass condition of \cite{isaev2026}: without an upper density bound the coverage of any set is uncontrolled, and without lower growth near the quantiles the threshold is not identifiable.

\begin{assumption}[Local mass]\label{ass:A1}
There are $0<\mu_{\mathrm{low}}\le\mu_{\mathrm{up}}<\infty$ and $r_0>0$ such that for $\PX$-a.e.\ $x$: $P_{Y|X=x}([u,v])\le\mu_{\mathrm{up}}(v-u)$ for all finite $u\le v$, and for $\tau\in\{\alpha/2,1-\alpha/2\}$ and $0\le h\le r_0$, with $f_\tau(x)$ the conditional $\tau$-quantile,
$P_{Y|X=x}([f_\tau(x)-h,f_\tau(x)])\ge\mu_{\mathrm{low}}h$ and $P_{Y|X=x}([f_\tau(x),f_\tau(x)+h])\ge\mu_{\mathrm{low}}h$.
\end{assumption}

Theorem~\ref{thm:price}(b) is a statement about the calibration step only: the score is held fixed and the adversary varies the response kernel within a regular class. It says that no threshold rule can beat the scale $\sqrt{(1+\chisq)/m}$ on every problem, so that an effective calibration size of $m\cdot\mathrm{ESS}^\ast=m/(1+\chisq)$ is a property of the problem, not of the method.

Sharper versions of both halves exist. In particular, \cite{pournaderi2026} shows that for a given score the variance of the realised coverage is governed not by $1+\chisq$ but by a proxy $\sigma_\alpha^2$ that depends on how the ratio couples to the score tail, and that the effective calibration size can be as small as $m/\mathrm{CVaR}_\alpha(w)$ rather than $m/(1+\chisq)$; Appendix~\ref{app:sharper} records the statements. Neither refinement changes the conclusion of this paper, because for $\alpha\le\tfrac12$ the leading variance proxy does not increase under sketching (Proposition~\ref{prop:proxy}). The only remaining lever is to change what is calibrated, and Polo and Vicente's monotonicity result \cite{polo2022} says that a linear reduction can only raise the effective sample size; the question is what it costs in coverage.

\section{Sketched calibration}\label{sec:sketch}

\subsection{Definition}

Let $T:\mathcal X\to\Zc$ be a measurable map into a standard Borel space $\Zc$, chosen independently of the calibration data (it may depend on the training fold or on auxiliary unlabeled data), and write $Z=T(X)$.

\begin{definition}[Sketched calibration]\label{def:sketch}
The sketched calibration rule is Algorithm~\ref{alg:wcp} applied to the pairs $(Z_i,Y_i)=(T(X_i),Y_i)$, a score $S_Z:\Zc\times\R\to\R$ fitted on the sketched training fold, and the \emph{sketched likelihood ratio}
\[
w_Z:=\frac{\mathrm dQ_Z}{\mathrm dP_Z},\qquad P_Z:=T_\#\PX,\quad Q_Z:=T_\#\QX ,
\]
evaluated at the sketched test point $T(x)$. Its output is $\Cset(T(x))=\{y:S_Z(T(x),y)\le\widehat Q^{w_Z}(T(x))\}$.
\end{definition}

$Q_Z\ll P_Z$ follows from $\QX\ll\PX$, so $w_Z$ exists. It is a $\dim\Zc$-dimensional density ratio; when $\Zc=\R^k$ with $k\ll d$ it is correspondingly easier to estimate (Proposition~\ref{prop:est} and experiment E2 quantify this).
\paragraph{Procedure.} In practice sketched calibration is four steps.
\begin{enumerate}[leftmargin=1.8em,itemsep=1pt]
\item Using the training fold (and unlabeled covariates, if available), choose the sketch $T$ and fit the score $S_Z$ on the sketched training pairs $(T(X_i),Y_i)$. Any measurable map is allowed, provided it is fixed before the calibration data are seen: a coordinate selection (variable selection), a learned representation such as a sufficient-dimension-reduction direction \cite{li1991,cook1998,stojanov2019} or the penultimate layer of a fitted network, a quantiser, or a random projection \cite{jl1984}.
\item Obtain the ratio $w_Z$ in the sketched space, in closed form when the sketched laws are known or by estimation from unlabeled source and target covariates; this is a $\dim\Zc$-dimensional density-ratio problem.
\item Compute the held-out calibration scores $S_Z(T(X_i),Y_i)$, $i=1,\dots,m$.
\item For a test covariate $x$, run Algorithm~\ref{alg:wcp} on the sketched scores with weights $w_Z(T(X_i))$ and test weight $w_Z(T(x))$, including the test-point mass at $+\infty$.
\end{enumerate}
Random projection is one choice of $T$, not the recommended default: Section~\ref{sec:gauss} shows that it retains, on average, only a $k/d$ fraction of the squared norm of any fixed direction, and the sketches that work in Section~\ref{sec:exp} are learned from the training fold. The requirement in step~1 is the only restriction on $T$.

\begin{remark}[Which sketches are allowed]\label{rem:allowed}
Nothing in what follows uses linearity. The one requirement is that the data used to choose $T$ be independent of the calibration fold and of the test pair, so that, conditional on $T$, the $(Z_i,Y_i)$ remain i.i.d.\ and the test pair retains its law; all statements are then conditional on $T$. Estimating $T$ from the calibration covariates, or from an unlabeled target sample that contains the test covariate, would break this and is excluded.
\end{remark}
\subsection{Why a covariate shift in $X$ is not a covariate shift in $Z$}

Push the full laws forward along $T$. Disintegrating $\PX$ and $\QX$ with respect to $Z$ gives fibre laws $P_{X|Z=z}$ and $Q_{X|Z=z}$ on $\{x:T(x)=z\}$, and
\begin{gather}
P^T:=T_\#P=P_Z\otimes P_{Y|Z},\qquad Q^T:=T_\#Q=Q_Z\otimes Q_{Y|Z},\notag\\
P_{Y|Z=z}=\int P_{Y|X=x}\,\mathrm dP_{X|Z=z}(x),\qquad
Q_{Y|Z=z}=\int P_{Y|X=x}\,\mathrm dQ_{X|Z=z}(x).\label{eq:fibre}
\end{gather}
The response kernel $P_{Y|X}$ is shared by source and target, but the fibre laws are not, so in general $Q_{Y|Z}\ne P_{Y|Z}$.
Example~\ref{ex:toy} in the introduction is the extreme case, in which the sketch is constant and the whole covariate shift becomes a shift of $Y\mid Z$.

The sketched rule is designed for the \emph{surrogate} target
\[
\Qt:=Q_Z\otimes P_{Y|Z},
\]
which is a genuine covariate shift in $Z$ with ratio $w_Z$. Its discrepancy from the true target $Q^T$ is the quantity that governs everything.

\begin{definition}[Leakage]
The pointwise and average leakage of the sketch $T$ are
\[
\delta_T(z):=\TV\bigl(Q_{Y|Z=z},P_{Y|Z=z}\bigr),\qquad
\Delta_T:=\E_{Q_Z}[\delta_T(Z)]=\TV\bigl(Q^T,\Qt\bigr),
\]
the last equality holding because $Q^T$ and $\Qt$ share the $Z$-marginal (Lemma~\ref{lem:tvmix}).
\end{definition}
\subsection{Main result}

\begin{theorem}[Sketched calibration is valid up to leakage, at no larger calibration cost]\label{prop:main}
Let $\Cset$ be the sketched calibration set. Under Assumption~\ref{ass:L}, for the true target $(X,Y)\sim Q$,
\begin{equation}\label{eq:val}
\Prob_Q\bigl\{Y\in\Cset(T(X))\bigr\}\ \ge\ 1-\alpha-\Delta_T .
\end{equation}
Moreover:
\begin{enumerate}[leftmargin=1.8em,itemsep=1pt]
\item[(i)] (Projection) Under Assumption~\ref{ass:L}, $w_Z\circ T=\E_P[\,w\mid T(X)\,]$ $\PX$-a.s.\ and $\E_P[w_Z\circ T]=1$. Under Assumption~\ref{ass:chi}, $w_Z\circ T$ is the $L^2(\PX)$-projection of $w$ onto $\sigma(T)$-measurable functions.
\item[(ii)] (Pythagoras) Under Assumption~\ref{ass:chi}, $\chisq=\chisqZ+\E_{\PX}\bigl[(w-w_Z\circ T)^2\bigr]$; in particular $\chisqZ\le\chisq$. Under Assumption~\ref{ass:bdd}, $w_Z\le w_{\max}$.
\item[(iii)] (Leakage bounds) $\delta_T(z)\le\TV(Q_{X|Z=z},P_{X|Z=z})$ for $Q_Z$-a.e.\ $z$, and
\[
\Delta_T\ \le\ \E_{Q_Z}\sqrt{\tfrac12\KL(Q_{X|Z}\|P_{X|Z})}\ \le\ \sqrt{\tfrac12\bigl(\KL(\QX\|\PX)-\KL(Q_Z\|P_Z)\bigr)} ,
\]
where the second inequality is asserted when $\KL(\QX\|\PX)<\infty$, which holds under Assumption~\ref{ass:chi} since $\KL\le\chi^2$.
\item[(iv)] (Zero leakage) $\Delta_T=0$ if either (a) $w$ is $\sigma(T)$-measurable, i.e.\ the shift lives in the sketch, or (b) $P_{Y|X=x}$ depends on $x$ only through $T(x)$, i.e.\ the sketch is sufficient for the response.
\end{enumerate}
\end{theorem}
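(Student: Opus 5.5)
The plan is to reduce the coverage claim \eqref{eq:val} to Theorem~\ref{thm:wcp} applied to the surrogate target $\Qt=Q_Z\otimes P_{Y|Z}$, and then to pay $\Delta_T$ to move from $\Qt$ to the true target $Q^T$. Conditional on $T$ (Remark~\ref{rem:allowed}), the calibration pairs $(Z_i,Y_i)$ are i.i.d.\ from $P^T=P_Z\otimes P_{Y|Z}$. A test pair $(Z,Y)\sim\Qt$ is then a genuine covariate shift in $Z$ with known ratio $w_Z$, and $Q_Z\ll P_Z$. Theorem~\ref{thm:wcp}, run in the space $\Zc$ with score $S_Z$, therefore gives $\Prob\{Y\in\Cset(Z)\}\ge1-\alpha$ when $(Z,Y)\sim\Qt$, independent of the calibration sample. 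The event $\{Y\in\Cset(Z)\}$ depends on the test pair only through $(Z,Y)=(T(X),Y)$. Condition on the calibration sample: the event becomes a fixed measurable set $A$ in $\Zc\times\R$, and $|Q^T(A)-\Qt(A)|\le\TV(Q^T,\Qt)=\Delta_T$. Averaging over the calibration sample gives \eqref{eq:val}. The identity $\Delta_T=\E_{Q_Z}\delta_T$ is granted by Lemma~\ref{lem:tvmix}.

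For (i), take any bounded $\sigma(T)$-measurable $g\circ T$. Then $\E_P[w\,g(T)]=\E_{Q}[g(T)]=\E_{Q_Z}[g]=\E_{P_Z}[w_Z g]=\E_P[(w_Z\circ T)g(T)]$, which is the defining property of the conditional expectation. The normalisation $\E_P[w_Z\circ T]=1$ follows with $g=1$. Under Assumption~\ref{ass:chi}, conditional expectation is the $L^2$ orthogonal projection. For (ii), Pythagoras gives $\E[w^2]=\E[(w_Z\circ T)^2]+\E[(w-w_Z\circ T)^2]$; subtracting $1$ from both sides yields the identity. The bound $w_Z\le w_{\max}$ holds because conditional expectation preserves a.s.\ upper bounds.

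For (iii), use \eqref{eq:fibre}. For any Borel $B$, $Q_{Y|Z=z}(B)-P_{Y|Z=z}(B)=\int P_{Y|X=x}(B)\,\mathrm d(Q_{X|Z=z}-P_{X|Z=z})(x)$, and the integrand lies in $[0,1]$. Hence the difference is at most $\TV(Q_{X|Z=z},P_{X|Z=z})$. Pinsker then gives the pointwise bound $\sqrt{\tfrac12\KL}$. Next apply Jensen (concavity of the square root) under $Q_Z$, and the chain rule $\KL(\QX\|\PX)=\KL(Q_Z\|P_Z)+\E_{Q_Z}\KL(Q_{X|Z}\|P_{X|Z})$, which is valid for standard Borel spaces. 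For (iv)(b), $P_{Y|X=x}=K(T(x))$ is constant on fibres, so both mixtures in \eqref{eq:fibre} equal $K(z)$. For (iv)(a), write $w=h\circ T$. Then $Q_{X|Z=z}=P_{X|Z=z}$: for $g(T)\ind_A$, $\E_Q[\ind_A g(T)]=\E_P[h(T)g(T)\ind_A]$, and conditioning on $T$ shows that $Q$ reweights only the $Z$-marginal. Hence $\delta_T=0$ by (iii).

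The main obstacle I expect is the measure-theoretic care in the validity step rather than any computation. One must check that conditioning on $T$ legitimately leaves the exchangeability structure of Theorem~\ref{thm:wcp} intact. One must also check that the coverage event is a measurable function of $(T(X),Y)$ and the calibration data, so that the TV bound applies after conditioning on the calibration sample. The fibre-law identification in (iv)(a) needs the same disintegration care, and the chain rule for KL needs its standard Borel hypotheses.
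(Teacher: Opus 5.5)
Your proposal is correct and follows the paper's proof essentially step for step. The structure is the same throughout: Theorem~\ref{thm:wcp} applied to the surrogate target $\Qt=Q_Z\otimes P_{Y|Z}$, followed by a total-variation step conditional on the calibration data, gives \eqref{eq:val}; the conditional-expectation identity and orthogonality give (i)--(ii); your $[0,1]$-integrand argument, which is the proof of Lemma~\ref{lem:dp}, together with Pinsker, the chain rule and Jensen, gives (iii); and equality of the fibre laws gives (iv)(a). The differences are only cosmetic: you close (iv)(a) through the bound in (iii) rather than by noting that the two fibre integrals in \eqref{eq:fibre} coincide, and the subtraction in the chain-rule step needs $\KL(\QX\|\PX)<\infty$, as the statement stipulates.
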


\begin{proof}
\emph{Validity.} The pairs $(Z_i,Y_i)$ are i.i.d.\ from $P^T=P_Z\otimes P_{Y|Z}$, and $\Qt=Q_Z\otimes P_{Y|Z}$ differs from $P^T$ only in the $Z$-marginal, with ratio
\[
\frac{\mathrm d\Qt}{\mathrm dP^T}(z,y)=w_Z(z).
\]
This is exactly the setting of Theorem~\ref{thm:wcp} with $(\mathcal X,P,Q,w)$ replaced by $(\Zc,P^T,\Qt,w_Z)$, so
\[
\Prob_{\Qt}\bigl\{Y\in\Cset(Z)\bigr\}\ \ge\ 1-\alpha,
\]
the probability being over the calibration data and $(Z,Y)\sim\Qt$ independent of them. Conditional on the calibration data, $E:=\{(z,y):y\in\Cset(z)\}$ is a fixed measurable subset of $\Zc\times\R$, so
\[
|Q^T(E)-\Qt(E)|\le\TV(Q^T,\Qt)=\Delta_T.
\]
Since $(T(X),Y)\sim Q^T$ when $(X,Y)\sim Q$, integrating over the calibration data gives \eqref{eq:val}.

(i) For bounded measurable $g$ on $\Zc$, the tower property (valid since $w\ge0$ is $P$-integrable) gives
\[
\E_Q[g(Z)]=\E_P[w(X)\,g(T(X))]=\E_P\bigl[\E_P[w\mid T(X)]\,g(T(X))\bigr];
\]
hence $\mathrm dQ_Z/\mathrm dP_Z=\E_P[w\mid T(X)=\cdot]$, and the mean is preserved. If $w\in L^2(P)$, conditional expectation given $\sigma(T)$ is the orthogonal projection of $L^2(P)$ onto the closed subspace of $\sigma(T)$-measurable square-integrable functions.

(ii) By definition,
\[
\chisq=\E_P[(w-1)^2],\qquad \chisqZ=\E_{P_Z}[(w_Z-1)^2]=\E_P[(w_Z\circ T-1)^2].
\]
Write
\[
w-1=(w_Z\circ T-1)+(w-w_Z\circ T);
\]
the first summand is $\sigma(T)$-measurable and the second is orthogonal to every $\sigma(T)$-measurable function in $L^2(P)$ by (i), so the cross term vanishes. The sup bound is immediate from (i).

(iii) By \eqref{eq:fibre}, $Q_{Y|Z=z}$ and $P_{Y|Z=z}$ are the images of $Q_{X|Z=z}$ and $P_{X|Z=z}$ under the same Markov kernel $P_{Y|X}$; Lemma~\ref{lem:dp} gives the first claim. Pinsker's inequality $\TV\le\sqrt{\KL/2}$ applied on each fibre and integrated against $Q_Z$ gives the first inequality of the display; Lemma~\ref{lem:chain} (chain rule for the deterministic map $T$) gives
\[
\E_{Q_Z}\KL(Q_{X|Z}\|P_{X|Z})=\KL(\QX\|\PX)-\KL(Q_Z\|P_Z)
\]
when $\KL(\QX\|\PX)<\infty$, and Jensen's inequality for the concave map $u\mapsto\sqrt u$ gives the second.

(iv) (a) If $w=g\circ T$ then $w_Z=g$ by (i), and for bounded measurable $h$ on $\mathcal X$ and $B\subseteq\Zc$,
\begin{align*}
\E_Q[h(X)\ind_B(Z)]&=\E_P[g(Z)h(X)\ind_B(Z)]=\int_B g(z)\,\E_P[h(X)\mid Z=z]\,\mathrm dP_Z(z)\\
&=\int_B\E_P[h(X)\mid Z=z]\,\mathrm dQ_Z(z),
\end{align*}
so
\[
\E_Q[h(X)\mid Z]=\E_P[h(X)\mid Z]\qquad Q_Z\text{-a.s.},
\]
i.e.\ $Q_{X|Z=z}=P_{X|Z=z}$ for $Q_Z$-a.e.\ $z$, and the two fibre integrals in \eqref{eq:fibre} coincide. (b) If $P_{Y|X=x}=K(T(x),\cdot)$ for a kernel $K$ then both fibre integrals equal $K(z,\cdot)$.
\end{proof}
\subsection{What compression changes}\label{sec:three}

Theorem~\ref{prop:main} separates three effects of replacing $X$ by $T(X)$, and the experiments of Section~\ref{sec:exp} are designed to show each one.

\emph{Weight variability.} By part (ii) the sketched divergence $\chisqZ$, and under bounded ratios the supremum of the ratio, are no larger than the originals. These are the quantities that enter the calibration-error bounds (Theorem~\ref{thm:price}, Corollary~\ref{cor:profile}), so compression does not increase the shift-dependent calibration terms in these bounds. The realised variability and the frequency with which the threshold is $+\infty$ are problem-dependent and need not decrease: with $\alpha=0.1$, $m=10$, $\PX=(0.3,0.1,0.6)$ and $w=(1,5,\tfrac13)$, merging the first two atoms gives $w_Z=(2,\tfrac13)$ and lowers $\chi^2$ from $\tfrac{28}{15}$ to $\tfrac23$, yet raises the probability of an infinite threshold from $62.1\%$ to $79.9\%$ (an exact computation; with $Y$ independent of $X$ the leakage is zero). In the experiments the sketched rules do return fewer infinite sets, but that is an observation, not a consequence of the theorem.

\emph{Coverage loss.} The discarded shift may reappear as a shift of $Y\mid Z$. This is the leakage; it is zero in the two cases of part (iv), it is the only term by which the sketched guarantee \eqref{eq:val} falls short of $1-\alpha$, and Section~\ref{sec:lb} shows it cannot be avoided.

\emph{Interval width.} The score is fitted on $T(X)$, so if the sketch discards predictive information the oracle set in the sketched space can be wider than in the full space, even when the leakage is zero and $\chisqZ$ is small. In the Gaussian model of Section~\ref{sec:gauss} the oracle length grows from $2z_{1-\alpha/2}\sigma$ to $2z_{1-\alpha/2}s$ with $s^2=\sigma^2+\norm{\beta_\perp}^2$. This third effect is not visible in \eqref{eq:val}, which is a coverage statement, but it is the reason a sketch must be chosen with the response in mind: a random sketch (experiment E3) makes the first effect visible and the third severe, reducing the coverage SD by about $40\%$ while tripling the median length, whereas an informed sketch that retains the signal avoids both the second and the third effect and pays only for the shift along the retained directions.

The three effects do not move together under refinement of the sketch: the weight dispersion $\chisqZ$ is monotone (it can only grow when directions are added), whereas the leakage need not be, as the next remark shows; this is why the choice of sketch is not a one-dimensional tuning problem.
\begin{remark}[The family of sketches, and non-monotonicity of leakage]\label{rem:family}
Varying $T$ interpolates between unweighted conformal ($T$ constant: $w_Z\equiv1$, $\chisqZ=0$, leakage $\TV(Q_Y,P_Y)$) and exact weighting ($T$ injective: zero leakage, full $\chi^2$). Parts (i)--(ii) of Theorem~\ref{prop:main} say what is traded: the sketch keeps the component of $w-1$ in a chosen subspace of $L^2(\PX)$ and discards the rest, and $\chisqZ$ is monotone under refinement of the sketch, because projecting onto a larger subspace cannot decrease the norm (this is the effective-sample-size monotonicity of \cite[Thm.\ 5.2]{polo2022} in the $\chi^2$ scale). The leakage is \emph{not} monotone under refinement. In the Gaussian model of Section~\ref{sec:gauss} with $d=2$, $a=(1,1)/\sqrt2$ and $\beta=(1,-1)/\sqrt2$, the constant sketch has zero leakage (because $\langle a,\beta\rangle=0$), the coordinate sketch $T(x)=x_1$ has leakage $2\Phi\bigl(\mu/(4\sqrt{\sigma^2+1/2})\bigr)-1>0$, and the identity sketch has zero leakage again; see \eqref{eq:gleak}. Refining a sketch can create an interaction between the discarded shift and the discarded signal that a coarser sketch did not have.
\end{remark}

\subsection{Leakage is a covariance}\label{sec:cov}

The bound of Theorem~\ref{prop:main}(iii) charges the leakage to the whole discarded shift. The following identity shows that only the part of the discarded shift that is correlated with the response kernel on the fibre matters, and turns the slogan ``leakage is an interaction'' into a general inequality.

\begin{proposition}[Leakage as a fibre covariance]\label{prop:cov}
Under Assumption~\ref{ass:L}, for $Q_Z$-a.e.\ $z$,
\begin{equation}\label{eq:cov}
\delta_T(z)=\sup_{A}\ \Cov_{P_{X|Z=z}}\Bigl(\frac{w(X)}{w_Z(z)},\ P_{Y|X=X}(A)\Bigr),
\end{equation}
the supremum over measurable $A\subseteq\R$, and the same holds for $\delta_{T,S}(z)$ with $P_{Y|X=x}(A)$ replaced by $P_{Y|X=x}\{S(x,Y)\in A\}$. Consequently, if $\chi^2(Q_{X|Z=z}\|P_{X|Z=z})<\infty$ and on the fibre over $z$ the ratio depends on $x$ only through a statistic $U=U_z(x)$ and the response kernel only through a statistic $V=V_z(x)$, then
\begin{equation}\label{eq:maxcorr}
\delta_T(z)\ \le\ \tfrac12\,\rho^\ast_z(U,V)\,\sqrt{\chi^2\bigl(Q_{X|Z=z}\,\|\,P_{X|Z=z}\bigr)},
\end{equation}
where $\rho^\ast_z(U,V)=\sup\{\Cov(f(U),g(V))/\sqrt{\operatorname{Var}f(U)\operatorname{Var}g(V)}\}$ is the maximal correlation of $(U,V)$ under $P_{X|Z=z}$ \cite{gebelein1941,renyi1959}. In particular $\delta_T(z)=0$ whenever $U$ and $V$ are independent under $P_{X|Z=z}$ and the fibre divergence is finite, whatever its size.
\end{proposition}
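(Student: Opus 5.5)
The plan is to first identify the fibre density ratio and then read off the covariance identity directly from the total-variation definition. By Theorem~\ref{prop:main}(i), $w_Z\circ T=\E_P[w\mid T(X)]$. A disintegration argument of the same kind as in the proof of part (iv)(a) then shows that, for $Q_Z$-a.e.\ $z$ (so that $w_Z(z)>0$),
\[
\frac{\mathrm dQ_{X|Z=z}}{\mathrm dP_{X|Z=z}}(x)=\frac{w(x)}{w_Z(z)} .
\]
Concretely, for bounded $h$ and $B\subseteq\Zc$ we write $\E_Q[h(X)\ind_B(Z)]=\E_P[w(X)h(X)\ind_B(T(X))]$. We condition on $Z$ and compare with $\int_B\E_Q[h\mid Z=z]\,w_Z(z)\,\mathrm dP_Z(z)$. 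The ratio $r_z:=w/w_Z(z)$ has mean one under $P_{X|Z=z}$.

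Using \eqref{eq:fibre}, for any measurable $A$ we get
\[
Q_{Y|Z=z}(A)-P_{Y|Z=z}(A)=\E_{P_{X|Z=z}}\bigl[(r_z(X)-1)\,k_A(X)\bigr]=\Cov_{P_{X|Z=z}}\bigl(r_z,k_A\bigr),
\]
where $k_A(x)=P_{Y|X=x}(A)$; the second equality holds because $\E[r_z-1]=0$. Taking the supremum over $A$ gives $\TV$. We use $\sup_A(\mu(A)-\nu(A))$ without the absolute value, which is legitimate because passing to complements $A^c$ flips the sign. The only technical point is measurability: we need a version of $P_{X|Z}$ for which the identity holds simultaneously for all $A$. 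Both sides are measures in $A$ once $z$ is fixed, so it suffices to check the identity on a countable generating $\pi$-system and extend to all Borel sets, giving a single null set. The score version is identical, with the kernel $x\mapsto P_{Y|X=x}\{S(x,Y)\in A\}$ replacing $k_A$ and the pushforward of the fibre mixture under $S$.

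For \eqref{eq:maxcorr}, the assumption gives $r_z=f(U)$ and $k_A=g_A(V)$. The covariance is then at most $\rho^\ast_z\sqrt{\operatorname{Var}f(U)\operatorname{Var}g_A(V)}$. Here $\operatorname{Var}f(U)=\chi^2(Q_{X|Z=z}\|P_{X|Z=z})$ because $r_z$ has mean one. Also $0\le g_A\le1$, so $\operatorname{Var}g_A\le\tfrac14$, which gives the factor $\tfrac12$ uniformly in $A$. If $\operatorname{Var}f(U)$ or $\operatorname{Var}g_A(V)$ vanishes, the covariance is zero and the bound is trivial.

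Independence of $U$ and $V$ makes every $\Cov(f(U),g(V))=0$, hence $\rho^\ast_z=0$ and the covariance vanishes. This yields $\delta_T(z)=0$ regardless of the fibre divergence, which needs to be finite only to make the product well defined. The main obstacle I expect is the first step, namely establishing the fibre density identity and the almost-sure simultaneity over $A$ cleanly under Assumption~\ref{ass:L} alone. The rest is Cauchy--Schwarz.
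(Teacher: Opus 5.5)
Your proposal is correct and follows essentially the same route as the paper. You identify the fibre ratio $w/w_Z(z)$ by disintegration (the paper cites the proof of Lemma~\ref{lem:chain}, which is the same computation), write $Q_{Y|Z=z}(A)-P_{Y|Z=z}(A)$ as a centred covariance with the one-sided supremum justified by complements, and bound it through the maximal correlation using $\operatorname{Var}(w/w_Z(z))=\chi^2(Q_{X|Z=z}\|P_{X|Z=z})$ and $\operatorname{Var}\le\tfrac14$ for a $[0,1]$-valued kernel; your countable $\pi$-system remark and zero-variance case only make explicit details the paper leaves implicit.
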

\begin{proof}
By the proof of Lemma~\ref{lem:chain}, for $Q_Z$-a.e.\ $z$ the fibre ratio is
\[
\frac{\mathrm dQ_{X|Z=z}}{\mathrm dP_{X|Z=z}}=\frac{w}{w_Z(z)}\qquad\text{on the fibre},
\]
so by \eqref{eq:fibre}, for measurable $A$,
\[
Q_{Y|Z=z}(A)-P_{Y|Z=z}(A)=\E_{P_{X|Z=z}}\Bigl[\Bigl(\frac{w(X)}{w_Z(z)}-1\Bigr)P_{Y|X=X}(A)\Bigr],
\qquad
\E_{P_{X|Z=z}}\Bigl[\frac{w}{w_Z(z)}\Bigr]=1,
\]
which gives \eqref{eq:cov}; the TV supremum over $A$ may be taken one-sided because complements change the sign. For \eqref{eq:maxcorr}, write $w/w_Z(z)=\phi(U)$ and $P_{Y|X=\cdot}(A)=\psi_A(V)$; by definition of the maximal correlation,
\[
\Cov\bigl(\phi(U),\psi_A(V)\bigr)\le\rho^\ast_z\,\mathrm{sd}(\phi(U))\,\mathrm{sd}(\psi_A(V)).
\]
Here
\[
\mathrm{sd}(\phi(U))^2=\E_{P_{X|Z=z}}\Bigl[\Bigl(\frac{w}{w_Z(z)}\Bigr)^2\Bigr]-1=\chi^2(Q_{X|Z=z}\|P_{X|Z=z}),
\]
and $\psi_A\in[0,1]$ gives $\operatorname{Var}\psi_A(V)\le\tfrac14$. The score version is identical with the kernel $x\mapsto P_{Y|X=x}\{S(x,Y)\in A\}$.
\end{proof}

Inequality \eqref{eq:maxcorr} isolates two factors: the size of the discarded shift on the fibre, measured by its chi-square divergence, and the \emph{dependence} between the discarded-shift statistic and the response-relevant statistic. The second factor is what the Kullback--Leibler bound of Theorem~\ref{prop:main}(iii) ignores, and it is zero in the situation that motivated the paper, a shift in directions that carry no information about the response. For jointly Gaussian $(U,V)$ the maximal correlation is the ordinary correlation in absolute value \cite{lancaster1957}, which is how \eqref{eq:maxcorr} specialises in Section~\ref{sec:gauss}.

\subsection{Sketched weights with a full-covariate score}\label{sec:fullscore}

The third effect of Section~\ref{sec:three} is avoidable. Nothing in the surrogate-target argument requires the score to be computed from the sketch; it requires only that the weights be. This suggests keeping a predictor fitted on all covariates and compressing the covariate only where the compression pays, in the weights.

\begin{definition}[Sketched weights, full score]\label{def:fullscore}
Let $S:\mathcal X\times\R\to\R$ be a score fitted on the training fold, $S_i=S(X_i,Y_i)$, and $Z_i=T(X_i)$. The rule with sketched weights and full score is
\begin{gather*}
\Cset_T(x):=\bigl\{y\in\R:\ S(x,y)\le\widehat Q_T(T(x))\bigr\},\\
\widehat Q_T(z):=\Qf\Bigl(1-\alpha;\ \sum_{i=1}^m\frac{w_Z(Z_i)}{\sum_jw_Z(Z_j)+w_Z(z)}\,\ind\{S_i\le\cdot\}\Bigr),
\end{gather*}
that is, Algorithm~\ref{alg:wcp} applied to the pairs $(Z_i,S_i)$ with the identity score $(z,s)\mapsto s$ and the weight function $w_Z$.
\end{definition}

Write $P_{S|Z=z}$ and $Q_{S|Z=z}$ for the conditional laws of $S(X,Y)$ given $T(X)=z$ under $P$ and $Q$ (both built from the common kernel $P_{Y|X}$ and the respective fibre laws), and define the \emph{score leakage}
\[
\delta_{T,S}(z):=\TV\bigl(Q_{S|Z=z},P_{S|Z=z}\bigr),\qquad
\Delta_{T,S}:=\E_{Q_Z}[\delta_{T,S}(Z)]=\TV\bigl(Q^{T,S},\,Q_Z\otimes P_{S|Z}\bigr),
\]
where $Q^{T,S}$ is the law of $(T(X),S(X,Y))$ under $Q$ and the last equality is Lemma~\ref{lem:tvmix}.

\begin{theorem}[Validity of the full-score rule]\label{thm:fullscore}
Under Assumption~\ref{ass:L}, $\Prob_Q\{Y\in\Cset_T(X)\}\ge1-\alpha-\Delta_{T,S}$. Moreover:
\begin{enumerate}[leftmargin=1.8em,itemsep=1pt]
\item[(i)] $\delta_{T,S}(z)\le\TV(Q_{X|Z=z},P_{X|Z=z})$ for $Q_Z$-a.e.\ $z$, so $\Delta_{T,S}$ obeys the bounds of Theorem~\ref{prop:main}(iii).
\item[(ii)] $\Delta_{T,S}=0$ if $w$ is $\sigma(T)$-measurable, or if the law of $S(x,Y)$ under $P_{Y|X=x}$ depends on $x$ only through $T(x)$ (the sketch is \emph{sufficient for the score}).
\item[(iii)] If $S(x,y)=s(T(x),y)$ for some $s$, then $\Delta_{T,S}\le\Delta_T$ and $\Cset_T$ is the sketched calibration rule of Definition~\ref{def:sketch}. In general $\Delta_{T,S}$ and $\Delta_T$ are not ordered.
\end{enumerate}
\end{theorem}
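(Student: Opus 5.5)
The proof follows the surrogate-target argument of Theorem~\ref{prop:main}, applied to the pairs $(Z,S)$ instead of $(Z,Y)$. Under the source, the pairs $(Z_i,S_i)=(T(X_i),S(X_i,Y_i))$ are i.i.d.\ from the law $P^{T,S}=P_Z\otimes P_{S|Z}$. The score $S$ is fixed given the training fold, and $T$ is independent of the calibration data (Remark~\ref{rem:allowed}). Define the surrogate $\widetilde Q^{S}:=Q_Z\otimes P_{S|Z}$. Its density with respect to $P^{T,S}$ is $w_Z(z)$, so Theorem~\ref{thm:wcp} applies with $(\mathcal X,P,Q,w)$ replaced by $(\Zc,P^{T,S},\widetilde Q^S,w_Z)$ and the identity score $(z,s)\mapsto s$. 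It gives $\Prob_{\widetilde Q^S}\{S\le\widehat Q_T(Z)\}\ge1-\alpha$.

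Next, condition on the calibration data. The set $E=\{(z,s):s\le\widehat Q_T(z)\}$ is then fixed and measurable. The event $Y\in\Cset_T(X)$ is exactly $(T(X),S(X,Y))\in E$, and under $Q$ this pair has law $Q^{T,S}$. Hence $|Q^{T,S}(E)-\widetilde Q^S(E)|\le\TV(Q^{T,S},\widetilde Q^S)=\Delta_{T,S}$, using Lemma~\ref{lem:tvmix}. Integrating over the calibration data gives the main bound.

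For (i), note that $Q_{S|Z=z}$ and $P_{S|Z=z}$ are the images of $Q_{X|Z=z}$ and $P_{X|Z=z}$ under the same Markov kernel $x\mapsto\text{law of }S(x,Y)$, $Y\sim P_{Y|X=x}$. Lemma~\ref{lem:dp} then gives $\delta_{T,S}(z)\le\TV(Q_{X|Z=z},P_{X|Z=z})$. The chain of inequalities in Theorem~\ref{prop:main}(iii) starts from exactly this fibre TV, so it carries over unchanged.

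For (ii), the first case reuses the argument of Theorem~\ref{prop:main}(iv)(a): if $w$ is $\sigma(T)$-measurable, the fibre laws coincide $Q_Z$-a.e., so the two pushed-forward score laws coincide. In the second case the kernel equals $K(T(x),\cdot)$, so both fibre integrals equal $K(z,\cdot)$.

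For (iii), suppose $S=s(T(x),y)$. The rule is then literally Algorithm~\ref{alg:wcp} on $(Z_i,Y_i)$ with score $s$, which is Definition~\ref{def:sketch}. In that case $S$ is the image of $(Z,Y)$ under the fixed map $(z,y)\mapsto(z,s(z,y))$. Applying data processing to $Q^T$ versus $\widetilde Q$ then gives $\Delta_{T,S}\le\Delta_T$. Alternatively, one can argue fibrewise: given $Z=z$, the score is a fixed function of $Y$.

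The main obstacle is the claim that $\Delta_{T,S}$ and $\Delta_T$ are not ordered in general, which needs two small examples on finite spaces modelled on Example~\ref{ex:toy}.

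\textbf{$\Delta_{T,S}<\Delta_T$.} Take $T$ constant, $\PX$ uniform on $\{x_1,x_2\}$, $\QX=\delta_{x_1}$, and $P_{Y|X=x_i}=N(\mu_i,1)$ with $\mu_1\ne\mu_2$. Use the residual score $S(x,y)=y-\mu(x)$. Then $S$ is $N(0,1)$ under every $x$, so $\Delta_{T,S}=0$, while $\Delta_T>0$.

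\textbf{$\Delta_{T,S}>\Delta_T$.} Make the response law identical on the two points, so $\Delta_T=0$, and choose a full score that depends on $x$, e.g.\ $S(x,y)=y+c\,\ind\{x=x_2\}$. The score laws then differ across the fibre, so $\Delta_{T,S}>0$.

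Both examples need only explicit TV computations.
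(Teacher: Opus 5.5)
Your proposal is correct and follows the paper's proof essentially step for step. It uses the surrogate $Q_Z\otimes P_{S|Z}$ with ratio $w_Z$, Theorem~\ref{thm:wcp} with the identity score, the conditional total-variation step, and data processing for (i) and for $\Delta_{T,S}\le\Delta_T$ in (iii). Your non-ordering examples differ only cosmetically from the paper's: a pivotal residual score where the paper uses a constant score, and $y+c\,\ind\{x=x_2\}$ where the paper uses a pure covariate score $f(x)$. Both work, provided the second example inherits the marginals of the first ($\PX$ uniform, $\QX=\delta_{x_1}$), so that the fibre laws actually differ.
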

\begin{proof}
The pairs $(Z_i,S_i)$ are i.i.d.\ from $P^{T,S}:=P_Z\otimes P_{S|Z}$, and the surrogate target $\Qt_S:=Q_Z\otimes P_{S|Z}$ has
\[
\frac{\mathrm d\Qt_S}{\mathrm dP^{T,S}}(z,s)=w_Z(z).
\]
Theorem~\ref{thm:wcp} applied in $\Zc\times\R$ with the identity score gives
\[
\Prob_{\Qt_S}\bigl\{S\le\widehat Q_T(Z)\bigr\}\ge1-\alpha,
\]
the threshold depending on the test point only through $w_Z(z)$. Conditional on the calibration data, $\{(z,s):s\le\widehat Q_T(z)\}$ is a fixed measurable set, and under $Q$ the pair $(T(X),S(X,Y))$ has law $Q^{T,S}$, so the coverage differs from the surrogate coverage by at most
\[
\TV(Q^{T,S},\Qt_S)=\Delta_{T,S}.
\]

(i) $Q_{S|Z=z}$ and $P_{S|Z=z}$ are the images of $Q_{X|Z=z}$ and $P_{X|Z=z}$ under the same Markov kernel $x\mapsto\mathcal L(S(x,Y))$, $Y\sim P_{Y|X=x}$; apply Lemma~\ref{lem:dp}.

(ii) In the first case the fibre laws coincide (proof of Theorem~\ref{prop:main}(iv)(a)); in the second the kernel is constant on fibres.

(iii) If $S=s(T(x),y)$ then $Q_{S|Z=z}$ and $P_{S|Z=z}$ are the images of $Q_{Y|Z=z}$ and $P_{Y|Z=z}$ under $y\mapsto s(z,y)$, and Lemma~\ref{lem:dp} gives $\delta_{T,S}(z)\le\delta_T(z)$; the identification of the rules is immediate. For the last claim, take $Y$ independent of $X$ and $S(x,y)=f(x)$ with $f$ not $\sigma(T)$-measurable: then $\Delta_T=0$ while $\Delta_{T,S}>0$ as soon as the fibre laws differ on $f$; conversely a constant score has $\Delta_{T,S}=0$ in a problem with $\Delta_T>0$, such as Example~\ref{ex:toy}.
\end{proof}

The theorem moves the burden from the response to the score. For a well-fitted predictor and the absolute-residual score $|y-\hat\mu(x)|$, the conditional law of the score given $x$ is governed by the noise scale and by the remaining misfit $\mu(x)-\hat\mu(x)$, which is typically lower-dimensional and smaller than the regression function itself; the sketch must capture the directions along which the \emph{residual law} varies, not those along which the mean varies. The interval is built from the full predictor, so the width penalty of Section~\ref{sec:three} is gone, and the calibration cost is governed by $\chisqZ$ exactly as before. The lower bound of Section~\ref{sec:lb} applies verbatim: its score is a function of $y$ alone, so the pairs $(Z_i,S_i)$ have the same law under both kernels there, and $\Delta_{T,S}=\Delta_T$ in that construction. In the Gaussian model the gain is explicit.

\begin{proposition}[Full-score leakage in the Gaussian model]\label{prop:fullgauss}
In Example~\ref{ex:gauss} let $S(x,y)=|y-\theta^{\!\top}x-\theta_0|$ for fixed $(\theta,\theta_0)$, and write $\gamma:=\beta-\theta$, $\gamma_\perp:=\gamma-\Pi^{\!\top}\Pi\gamma$, $s_\theta^2:=\sigma^2+\norm{\gamma_\perp}^2$. Then
\[
\Delta_{\Pi,S}\ \le\ 2\Phi\Bigl(\frac{\mu\,|\langle\gamma_\perp,a_\perp\rangle|}{2s_\theta}\Bigr)-1\ \le\ \frac{\mu\,\norm{\gamma_\perp}\,\norm{a_\perp}}{\sqrt{2\pi}\,s_\theta}\ \le\ \frac{\mu\,\norm{\beta-\theta}\,\norm{a_\perp}}{\sqrt{2\pi}\,\sigma}.
\]
If instead $Y=\langle\beta,X\rangle+\sigma(\langle\beta,X\rangle)\xi$ with $\sigma(\cdot)\ge\sigma_{\min}>0$ and $L$-Lipschitz, then
\[
\Delta_{\Pi,S}\ \le\ \frac{\mu}{\sigma_{\min}}\Bigl(\frac{|\langle\gamma_\perp,a_\perp\rangle|}{\sqrt{2\pi}}+2\varphi(1)L\,|\langle\beta_\perp,a_\perp\rangle|\Bigr).
\]
\end{proposition}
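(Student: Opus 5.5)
The plan is to work fibre by fibre and reduce everything to comparisons of one-dimensional Gaussian (or Gaussian-mixture) laws. I read Example~\ref{ex:gauss} as the model with $\PX=N(0,I_d)$, $\QX=N(\mu a,I_d)$ for a unit vector $a$, $Y=\langle\beta,X\rangle+\sigma\varepsilon$, and a sketch $Z=\Pi X$ with orthonormal rows; I write $a_\perp=a-\Pi^{\!\top}\Pi a$ and similarly $\beta_\perp,\gamma_\perp$.

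\textbf{Fibre decomposition.} Decompose $X=\Pi^{\!\top}Z+X_\perp$ with $X_\perp=(I-\Pi^{\!\top}\Pi)X$. Under $P$, the vector $X_\perp$ is independent of $Z$ with law $N(0,I-\Pi^{\!\top}\Pi)$. Under $Q$ the same holds with the mean shifted to $\mu a_\perp$. Hence the fibre law satisfies $Q_{X_\perp|Z=z}=\mathcal L_P(X_\perp+\mu a_\perp)$ for every $z$. Given $Z=z$, the residual is
\[
R:=Y-\theta^{\!\top}X-\theta_0=c(z)+\langle\gamma_\perp,X_\perp\rangle+\sigma\varepsilon ,
\]
where $c(z)=\langle\gamma,\Pi^{\!\top}z\rangle-\theta_0$ is a constant on the fibre.

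\textbf{Homoscedastic case.} On the fibre, $R\sim N(c(z),s_\theta^2)$ under $P$ and $R\sim N\bigl(c(z)+\mu\langle\gamma_\perp,a_\perp\rangle,s_\theta^2\bigr)$ under $Q$. The score is $|R|$, which is a fixed map of $R$, so Lemma~\ref{lem:dp} gives
\[
\delta_{\Pi,S}(z)\le\TV\bigl(N(0,s_\theta^2),N(\mu\langle\gamma_\perp,a_\perp\rangle,s_\theta^2)\bigr)=2\Phi\bigl(\mu|\langle\gamma_\perp,a_\perp\rangle|/(2s_\theta)\bigr)-1 .
\]
This bound does not depend on $z$, so averaging over $Q_Z$ gives the first inequality. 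The second follows from $2\Phi(t)-1\le2t\varphi(0)=2t/\sqrt{2\pi}$ together with Cauchy--Schwarz, $|\langle\gamma_\perp,a_\perp\rangle|\le\norm{\gamma_\perp}\norm{a_\perp}$. The third follows from $\norm{\gamma_\perp}\le\norm{\gamma}=\norm{\beta-\theta}$ and $s_\theta\ge\sigma$.

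\textbf{Heteroscedastic case.} Here $R$ is no longer Gaussian on the fibre, so I would couple instead. Conditional on $X_\perp=u$, $R\sim N\bigl(c(z)+\langle\gamma_\perp,u\rangle,\ \sigma(b(z)+\langle\beta_\perp,u\rangle)^2\bigr)$, where $b(z)=\langle\beta,\Pi^{\!\top}z\rangle$. Under $Q$ the fibre law of $X_\perp$ is that of $u+\mu a_\perp$ with $u\sim P$. Both conditional score laws are therefore mixtures over the same mixing law, so joint convexity of TV gives
\[
\delta_{\Pi,S}(z)\le\E_u\,\TV\Bigl(N\bigl(m_u,s_u^2\bigr),\,N\bigl(m_u+\mu\langle\gamma_\perp,a_\perp\rangle,\,s_{u+\mu a_\perp}^2\bigr)\Bigr),
\]
where $m_u=c(z)+\langle\gamma_\perp,u\rangle$ and $s_v=\sigma(b(z)+\langle\beta_\perp,v\rangle)$. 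Using $|R|$ only helps, by Lemma~\ref{lem:dp}. By the triangle inequality I split this into a mean-shift term and a scale-change term:
\begin{itemize}[leftmargin=1.8em,itemsep=1pt]
\item The mean-shift term at a common scale $s\ge\sigma_{\min}$ is at most $\mu|\langle\gamma_\perp,a_\perp\rangle|/(\sqrt{2\pi}\,\sigma_{\min})$, as in the homoscedastic case.
\item For the scale-change term I use the Gaussian bound $\TV(N(0,s_1^2),N(0,s_2^2))\le2\varphi(1)|\log(s_2/s_1)|$. This comes from integrating the derivative in $\log s$: half of the $L^1$ norm of $\partial_{\log s}$ of the density is $\tfrac12\int|x^2-1|\varphi=2\varphi(1)$. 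Then $|\log(s_2/s_1)|\le|s_2-s_1|/\sigma_{\min}\le L\mu|\langle\beta_\perp,a_\perp\rangle|/\sigma_{\min}$, by the Lipschitz property of $\sigma(\cdot)$ and the lower bound $\sigma\ge\sigma_{\min}$.
\end{itemize}
Both terms are uniform in $u$ and $z$, so they survive the averaging over $u$ and $Q_Z$ and give the stated bound.

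\textbf{Main obstacle.} I expect the scale-change step to be the hardest: getting the sharp constant $2\varphi(1)$ uniformly, which is presumably what the Lipschitz-kernel calculations of Appendix~\ref{app:lip} supply. I would establish it first by bounding the TV between scales through the integral of the score-function derivative along the path $t\mapsto\log s$. The rest is bookkeeping of the fibre shift.
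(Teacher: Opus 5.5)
Your proposal is correct and follows essentially the paper's proof. It uses the same fibre decomposition $X=\Pi^{\!\top}z+X_\perp$, the exact Gaussian comparison plus data processing for $|R|$ in the homoscedastic case, and, in the heteroscedastic case, a mixture over a common mixing law shifted by $\mu a_\perp$ followed by a uniform kernel bound with the constants $1/\sqrt{2\pi}$ and $2\varphi(1)$. The differences are minor: the paper mixes over the Gaussian pair $(\langle\gamma,X\rangle,\langle\beta,X\rangle)$ and integrates the density derivative along a joint mean-and-scale path, whereas you mix over $X_\perp$ and separate the mean shift from the scale change by the triangle inequality, which avoids needing $\sigma'$ along the path.
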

\begin{proof}
Appendix~\ref{app:lip}.
\end{proof}

In the homoscedastic case the leakage of the full-score rule is driven by the \emph{misfit} $\beta-\theta$ rather than by the signal $\beta$: for a least-squares fit at the sample sizes of Section~\ref{sec:exp}, $\norm{\beta-\theta}\approx0.14$ against $\norm\beta=3$, so even a random sketch, which discards most of the signal, has leakage of order $0.01$ (experiment E6), where the sketched-score rule with the same sketch tripled the interval length. In the heteroscedastic case the second term survives: the sketch must retain the direction along which the noise scale varies, and the misfit term is again small. This is what the default sketch of Section~\ref{sec:choose} targets.

\subsection{Estimated sketched ratios and weight clipping}

The surrogate-target argument in the proof of Theorem~\ref{prop:main} gives bias bounds for two further modifications of the weights. Both use the scale invariance of Algorithm~\ref{alg:wcp}.

\begin{proposition}[Estimated sketched ratio]\label{prop:est}
Let $\hat w_Z\ge0$ be an estimate of $w_Z$ fitted on auxiliary data independent of the calibration fold and of the test pair, with $0<\E_{P_Z}[\hat w_Z]<\infty$, and let $\bar w_Z:=\hat w_Z/\E_{P_Z}[\hat w_Z]$. With the zero-denominator convention of Algorithm~\ref{alg:wcp}, the sketched rule run with $\hat w_Z$ in place of $w_Z$ satisfies, conditionally on the auxiliary data,
\[
\Prob_Q\bigl\{Y\in\Cset(T(X))\bigr\}\ \ge\ 1-\alpha-\Delta_T-\tfrac12\,\E_{P_Z}\abs{\bar w_Z-w_Z} ;
\]
unconditionally, the last term is replaced by its expectation over the auxiliary data.
\end{proposition}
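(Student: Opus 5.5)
Conditional on the auxiliary data, $\hat w_Z$ is a fixed nonnegative function, so the plan is to reuse the surrogate-target argument from the proof of Theorem~\ref{prop:main}, now with a second surrogate. By the scale invariance of Algorithm~\ref{alg:wcp}, running it with $\hat w_Z$ is the same as running it with $\bar w_Z=\hat w_Z/\E_{P_Z}[\hat w_Z]$, and the zero-denominator convention is unchanged because rescaling does not change whether the sum vanishes. Since $\bar w_Z\ge0$ and $\E_{P_Z}[\bar w_Z]=1$, the measure $\bar Q_Z:=\bar w_Z\,P_Z$ is a probability measure on $\Zc$. Define the second surrogate $\bar Q:=\bar Q_Z\otimes P_{Y|Z}$. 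It differs from $P^T=P_Z\otimes P_{Y|Z}$ only in the $Z$-marginal, with known ratio $\mathrm d\bar Q/\mathrm dP^T(z,y)=\bar w_Z(z)$.

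Theorem~\ref{thm:wcp}, applied in $(\Zc,P^T,\bar Q,\bar w_Z)$, then gives $\Prob_{\bar Q}\{Y\in\Cset(Z)\}\ge1-\alpha$, where the probability is over the calibration data and an independent test pair $(Z,Y)\sim\bar Q$. This step needs the calibration pairs to be i.i.d.\ from $P^T$ conditionally on the auxiliary data and on $T$, which is exactly the independence hypothesis together with Remark~\ref{rem:allowed}.

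Next, condition on the calibration data. The coverage region $E=\{(z,y):y\in\Cset(z)\}$ is then a fixed measurable set, so by the triangle inequality
\[
|Q^T(E)-\bar Q(E)|\le\TV(Q^T,\Qt)+\TV(\Qt,\bar Q)=\Delta_T+\TV(\Qt,\bar Q).
\]
The surrogates $\Qt=Q_Z\otimes P_{Y|Z}$ and $\bar Q$ share the kernel $P_{Y|Z}$, so their total variation equals that of their $Z$-marginals. I would get this either from a direct Scheffé computation on densities $w_Z(z)$ and $\bar w_Z(z)$ with respect to $P^T$, or from data processing (Lemma~\ref{lem:dp}) together with the obvious reverse inequality. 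The Scheffé route gives
\[
\TV(\Qt,\bar Q)=\tfrac12\int|w_Z-\bar w_Z|\,\mathrm dP_Z .
\]
Integrating over the calibration data yields the conditional bound.

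For the unconditional statement, take expectations over the auxiliary data. The only delicate point is measurability of $\hat w_Z$ jointly in the auxiliary data and $z$, so that the expectation of $\tfrac12\E_{P_Z}|\bar w_Z-w_Z|$ is well defined; I would assume the estimator is a jointly measurable function. I expect no real obstacle. The step needing care is confirming that the zero-denominator convention in Theorem~\ref{thm:wcp} still gives validity when $\bar w_Z$ may vanish on sets of positive $P_Z$ measure. This holds because that proof needs only that the joint density is $\bar w_Z(z_{m+1})$ relative to $(P^T)^{\otimes(m+1)}$, and it explicitly needs no positivity or boundedness.
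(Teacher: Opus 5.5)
Your proposal is correct and follows essentially the same route as the paper: scale invariance reduces the rule to the normalised ratio $\bar w_Z$, Theorem~\ref{thm:wcp} gives exact validity for the surrogate $\bar Q=(\bar w_Z P_Z)\otimes P_{Y|Z}$, and the triangle inequality through $\Qt$ together with Scheff\'e yields the additional term $\Delta_T+\tfrac12\E_{P_Z}|w_Z-\bar w_Z|$. One phrasing point in your last paragraph: the proof of Theorem~\ref{thm:wcp} does use that the denominator is positive almost surely, but no assumption is needed because this holds automatically here: $\bar w_Z>0$ holds $\bar Q_Z$-a.s., so the zero-denominator event is $\bar Q$-null, which is exactly how the paper handles the convention.
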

\begin{proof}
Condition on the auxiliary data, so that $\hat w_Z$ is a fixed function. By scale invariance the rule run with $\hat w_Z$ coincides with the rule run with $\bar w_Z$ (the all-zero-denominator event is a $\bar Q$-null event for the surrogate target below, since there $\bar w_Z(z)>0$ a.s., and the convention makes the rule well defined elsewhere), which is a density ratio ($\bar w_Z\ge0$, $\E_{P_Z}\bar w_Z=1$) and hence, by Theorem~\ref{thm:wcp} applied in $\Zc$, is exactly valid for the target $\bar Q:=(\bar w_Z\,P_Z)\otimes P_{Y|Z}$. Conditional on the calibration data, the coverage event is a fixed set, and by Lemma~\ref{lem:tvmix} and Scheff\'e,
\[
\TV(Q^T,\bar Q)\le\TV(Q^T,\Qt)+\TV(\Qt,\bar Q)=\Delta_T+\TV(Q_Z,\bar w_ZP_Z)=\Delta_T+\tfrac12\E_{P_Z}|w_Z-\bar w_Z|.
\]
\end{proof}

The normalisation matters: the bound with $\hat w_Z$ in place of $\bar w_Z$ is false in general, because the algorithm cannot see the scale of $\hat w_Z$ but the unnormalised $L^1$ distance does. This is the bound of \cite[Thm.\ 3]{lei2021} in the sketched space; the estimation problem it refers to is $\dim\Zc$-dimensional.
\begin{remark}[Score-specific leakage, and sharper calibration bounds]\label{rem:pournaderi}
Two refinements are proved in Appendix~\ref{app:sharper}. First, for the common-threshold version of the sketched rule only the part of the discarded shift $w-\E_P[w\mid Z]$ that is correlated with the sketched score matters: the relevant leakage is
\[
\Delta_S:=\sup_t\abs{Q^T\{S_Z\le t\}-\Qt\{S_Z\le t\}}\ \le\ \tfrac12\,\E_P\abs{\E_P[w-w_Z\circ T\mid S_Z]}\ \le\ \Delta_T
\]
(Proposition~\ref{prop:scoreleak}; this is \cite[eq.\ (15)]{pournaderi2026} specialised to the sketched ratio), which is the general form of the interaction phenomenon of Section~\ref{sec:gauss}. Second, under the assumptions of \cite{pournaderi2026} (bounded sketched ratio, $\alpha\le\tfrac12$, $m>\tfrac23\log(1/\delta)$) the training-conditional bound of \cite{pournaderi2026} applies to the surrogate problem and gives target coverage at least $1-\alpha-\gamma^Z_m(\delta)-\Delta_S$ with probability $1-\delta$, where $\gamma^Z_m(\delta)$ is the calibration term of that bound evaluated at the sketched quantities; and the leading variance proxy $\sigma^2_{\alpha,Z}=\alpha^2K_Z^2+(1-2\alpha)\Lambda_Z(\alpha)$ is no larger than the unsketched $\sigma^2_\alpha$ (Proposition~\ref{prop:proxy}). The comparison is of the proxy, not of every term in that bound: the ratio quantile $\rho_\alpha$ entering its lower-order term can increase under sketching.
\end{remark}

\begin{proposition}[Weight clipping]\label{prop:clip}
Under Assumption~\ref{ass:L}, for $M>0$ let $v_M:=w\wedge M$, $c_M:=\E_{\PX}[v_M]=1-\E_{\PX}[(w-M)_+]$, and $r_M:=v_M/c_M$. Algorithm~\ref{alg:wcp} run with the clipped weights $v_M$ satisfies
\[
\Prob_Q\{Y\in\Cset^{v_M}(X)\}\ \ge\ 1-\alpha-\tfrac12\E_{\PX}\abs{w-r_M}\ \ge\ 1-\alpha-\E_{\PX}\bigl[(w-M)_+\bigr].
\]
\end{proposition}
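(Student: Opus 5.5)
The plan is to reuse the surrogate-target argument from the proof of Proposition~\ref{prop:est}, this time without any sketch, that is, with $T$ the identity. Clipping changes only the weights, so we pick a surrogate covariate law for which the clipped rule is exactly valid and then pay the total-variation distance between that surrogate and $Q$.

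\emph{Step 1 (normalisation).} Since $0\le v_M\le w$ and $v_M\le M$, we have $c_M\le1$. Also $c_M>0$: if $c_M=0$ then $w=0$ $\PX$-a.s., which contradicts $\E_{\PX}w=1$. The identity $c_M=\E[w]-\E[(w-M)_+]=1-\E[(w-M)_+]$ follows from $w\wedge M=w-(w-M)_+$. Hence $r_M=v_M/c_M$ is a genuine density ratio: it is nonnegative with $\E_{\PX}r_M=1$.

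\emph{Step 2 (scale invariance).} Algorithm~\ref{alg:wcp} is invariant to rescaling the weights, so it returns the same set whether run with $v_M$ or with $r_M$. The zero-denominator convention is handled as in Proposition~\ref{prop:est}: under the surrogate law defined next, that event is null.

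\emph{Step 3 (validity for the surrogate).} Define $\bar Q:=(r_M\PX)\otimes P_{Y|X}$. This is a genuine covariate shift with known ratio $r_M$, so Theorem~\ref{thm:wcp} gives $\Prob_{\bar Q}\{Y\in\Cset^{v_M}(X)\}\ge1-\alpha$.

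\emph{Step 4 (transfer to $Q$).} Condition on the calibration data. The coverage event is then a fixed measurable set, so the coverage under $Q$ is at least the coverage under $\bar Q$ minus $\TV(Q,\bar Q)$. Because $Q$ and $\bar Q$ share the kernel $P_{Y|X}$, Lemma~\ref{lem:tvmix} reduces this distance to $\TV(\QX,r_M\PX)$. By Scheff\'e, $\TV(\QX,r_M\PX)=\tfrac12\E_{\PX}|w-r_M|$. Integrating over the calibration data gives the first inequality.

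\emph{Step 5 (second inequality).} Split
\[
|w-r_M|\le|w-v_M|+|v_M-r_M|=(w-M)_++v_M\Bigl(\frac1{c_M}-1\Bigr).
\]
Take expectations under $\PX$. The first term contributes $\E[(w-M)_+]$. For the second, $\E[v_M]=c_M$, so it contributes $c_M(1/c_M-1)=1-c_M=\E[(w-M)_+]$. Summing gives $\E_{\PX}|w-r_M|\le2\E_{\PX}[(w-M)_+]$, and halving this yields the claim.

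No step is a real obstacle. The only points needing care are that $c_M>0$, which makes the normalisation legitimate, and the treatment of the all-zero-weight convention, and both are dealt with exactly as in Proposition~\ref{prop:est}.
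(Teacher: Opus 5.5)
Your proof is correct and follows the paper's argument: scale invariance to pass from $v_M$ to $r_M$, exact validity for the surrogate $(r_M\PX)\otimes P_{Y|X}$ by Theorem~\ref{thm:wcp}, the total-variation step conditional on the calibration data, and the same triangle-inequality computation giving $\tfrac12\E_{\PX}|w-r_M|\le\E_{\PX}[(w-M)_+]$. Your checks that $0<c_M\le1$ are a welcome addition. One citation is off: Lemma~\ref{lem:tvmix} is about a common marginal with different kernels, whereas here the kernel is common and the marginals differ. The identity $\TV(Q,\bar Q)=\tfrac12\E_{\PX}|w-r_M|$ follows directly from Scheff\'e, because $Q$ and $\bar Q$ have densities $w(x)$ and $r_M(x)$ with respect to $P$.
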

\begin{proof}
By scale invariance the rule with $v_M$ coincides with the rule with $r_M$, a density ratio, which is exactly valid for $(r_M\PX)\otimes P_{Y|X}$ by Theorem~\ref{thm:wcp}; the first inequality is the total-variation step. For the second, with $e:=\E[(w-M)_+]$ and $c_M=1-e$,
\[
\tfrac12\E|w-r_M|\le\tfrac12\E|w-v_M|+\tfrac12\E|v_M-r_M|=\tfrac12e+\tfrac12\E[v_M]\Bigl(\frac1{c_M}-1\Bigr)=\tfrac12e+\tfrac12(1-c_M)=e.
\]
\end{proof}

Clipping and sketching are therefore both ``surrogate-target'' methods, and their biases are comparable objects. The clipping bound $\E[(w-M)_+]$ depends on $w$ alone and is strictly positive whenever $\PX\{w>M\}>0$ (it is only an upper bound: if $w=2\cdot\ind_A$ with $\PX(A)=\tfrac12$ and $M=1$, then $r_M=w$ and clipping changes nothing, although the bound is $\tfrac12$); the sketching bias $\Delta_T$ depends on how the discarded part of $w$ interacts with the response kernel, and Theorem~\ref{prop:main}(iv) and Section~\ref{sec:gauss} identify situations where it is zero while $\chisqZ\ll\chisq$. Wang and Goel \cite{wang2026} give a finer analysis of clipping, including a data-driven correction of the target level; the comparison here is only meant to place the two biases side by side.

\subsection{Calibration bounds}

\begin{corollary}[Calibration error of sketched calibration]\label{cor:profile}
Under Assumption~\ref{ass:bdd}, if the sketched score has a continuous distribution under $\Qt$, then with probability at least $1-\delta$ over the calibration sample,
\[
\abs{\Prob_Q\{Y\in\Cset(T(X))\mid\mathcal D^{\mathrm{cal}}_m\}-(1-\alpha)}\ \le\ 2\eta^T_m(\delta)+\frac{\norm{w_Z}_\infty}{m}+\Delta_T,
\]
where $\eta^T_m(\delta)$ is \eqref{eq:eta} with $(\chisq,w_{\max})$ replaced by $(\chisqZ,\norm{w_Z}_\infty)$, both of which are no larger than the originals by Theorem~\ref{prop:main}(ii). If moreover the sketched kernel $P_{Y|Z}$ satisfies Assumption~\ref{ass:A1}, the $Z$-space quantile learner satisfies the accuracy condition A2($p$) of \cite{isaev2026}, and the sample-size and localisation conditions \cite[eq.\ (22)]{isaev2026} hold for the surrogate problem (with $w_Z$, $\chisqZ$ and $\norm{w_Z}_\infty$ in place of $w$, $\chisq$ and $w_{\max}$), then on the event of \cite[Thm.\ 5]{isaev2026} applied to the surrogate problem, the realised conditional-coverage profile of sketched weighted CQR satisfies
\[
\norm{\Cov^{Q}_{\Cset}(\cdot)-(1-\alpha)}_{\Lp{Q_Z}}
\ \le\ B_p\bigl(\epsilon^T_p,\ \chisqZ,\ \norm{w_Z}_\infty,\ m,\delta\bigr)\ +\ \norm{\delta_T}_{\Lp{Q_Z}},
\]
where $B_p$ is the right-hand side of \cite[eq.\ (24)]{isaev2026} and $\Cov^Q_{\Cset}(z)=Q_{Y|Z=z}(\Cset(z))$.
\end{corollary}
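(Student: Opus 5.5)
The plan is to reduce both halves of the corollary to the surrogate problem, and then pay for the passage from the surrogate target $\Qt$ to the true target $Q^T$ with one total-variation step.

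\emph{Setup.} Condition on $T$, as Remark~\ref{rem:allowed} allows. The calibration pairs $(Z_i,Y_i)$ are then i.i.d.\ from $P^T=P_Z\otimes P_{Y|Z}$. The surrogate $\Qt=Q_Z\otimes P_{Y|Z}$ is a pure covariate shift in $Z$ with ratio $w_Z$. By Theorem~\ref{prop:main}(ii), $w_Z\le\norm{w_Z}_\infty\le w_{\max}$ and $\chisqZ\le\chisq$. Hence Assumption~\ref{ass:bdd} holds for the surrogate problem $(\Zc,P^T,\Qt,w_Z)$, with $w_{\max}$ replaced by $\norm{w_Z}_\infty$. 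The hypothesis that the sketched score is continuous under $\Qt$ is exactly the continuity assumption of Theorem~\ref{thm:price}(a) for that problem. Applying Theorem~\ref{thm:price}(a) verbatim then gives, with probability at least $1-\delta$,
\[
\abs{\Prob_{\Qt}\{Y\in\Cset(Z)\mid\mathcal D^{\mathrm{cal}}_m\}-(1-\alpha)}\le 2\eta^T_m(\delta)+\norm{w_Z}_\infty/m .
\]

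\emph{Transfer to the true target.} On the calibration sample, $E=\{(z,y):y\in\Cset(z)\}$ is a fixed measurable set. Therefore $|Q^T(E)-\Qt(E)|\le\TV(Q^T,\Qt)=\Delta_T$, exactly as in the validity proof of Theorem~\ref{prop:main}. This step holds deterministically, so it does not touch the probability-$(1-\delta)$ event. The triangle inequality then gives the first display. The monotonicity remark $\chisqZ\le\chisq$ and $\norm{w_Z}_\infty\le w_{\max}$ is Theorem~\ref{prop:main}(ii).

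\emph{Profile bound.} Apply \cite[Thm.\ 5]{isaev2026} to the surrogate problem; its hypotheses are assumed for $P_{Y|Z}$, $w_Z$, $\chisqZ$ and $\norm{w_Z}_\infty$. On its event, the surrogate profile $z\mapsto P_{Y|Z=z}(\Cset(z))$ is within $B_p$ of $1-\alpha$ in $\Lp{Q_Z}$; the surrogate's $Z$-marginal is $Q_Z$, so the norm is the right one. The true profile is $\Cov^Q_{\Cset}(z)=Q_{Y|Z=z}(\Cset(z))$. For each $z$,
\[
|Q_{Y|Z=z}(\Cset(z))-P_{Y|Z=z}(\Cset(z))|\le\delta_T(z).
\]
Minkowski's inequality in $\Lp{Q_Z}$ then adds $\norm{\delta_T}_{\Lp{Q_Z}}$.

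\emph{Main obstacle.} The delicate point is bookkeeping rather than analysis. We must check that the hypotheses cited from \cite{isaev2026} are stated for the surrogate kernel $P_{Y|Z}$ rather than for $Q_{Y|Z}$. We must also check that the pointwise bound is valid $Q_Z$-a.e., with $\Cset(z)$ a fixed Borel set given the data; this uses the standard Borel conditional laws from \eqref{eq:fibre}.
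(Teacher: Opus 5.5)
Your proposal is correct and follows the paper's proof essentially step for step. You apply Theorem~\ref{thm:price}(a) to the surrogate problem $(\Zc,P^T,\Qt,w_Z)$, then use the conditional total-variation step from the proof of Theorem~\ref{prop:main}. For the profile bound you use the pointwise estimate $|Q_{Y|Z=z}(\Cset(z))-P_{Y|Z=z}(\Cset(z))|\le\delta_T(z)$ together with the triangle inequality in $\Lp{Q_Z}$. Your extra checks are bookkeeping the paper leaves implicit: that $\norm{w_Z}_\infty\le w_{\max}$ puts the surrogate problem under Assumption~\ref{ass:bdd}, and that the surrogate's $Z$-marginal is $Q_Z$.
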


\begin{proof}
The first display is Theorem~\ref{thm:price}(a) applied to the surrogate problem $(\Zc,P^T,\Qt,w_Z)$, which gives
\[
\bigl|\Prob_{\Qt}\{Y\in\Cset(Z)\mid\mathcal D^{\mathrm{cal}}_m\}-(1-\alpha)\bigr|\le2\eta^T_m(\delta)+\norm{w_Z}_\infty/m,
\]
followed by the total-variation step of the proof of Theorem~\ref{prop:main}, which holds conditionally on the calibration data. For the second, $Q_{Y|Z=z}(\Cset(z))$ differs from $P_{Y|Z=z}(\Cset(z))$ by at most $\delta_T(z)$ pointwise, the latter is the surrogate profile bounded by \cite[Thm.\ 5]{isaev2026}, and the triangle inequality in $\Lp{Q_Z}$ finishes.
\end{proof}

By Theorem~\ref{prop:main}(ii) every shift-dependent quantity in these bounds is no larger than in the unsketched problem, so the sketch does not increase the calibration term and adds the leakage. Two caveats. Assumption~\ref{ass:A1} is not inherited automatically by $P_{Y|Z}$, which is a mixture over fibres; a mixture of laws with linear mass growth at their own quantiles can have a flat spot at the mixture quantile (two uniforms on disjoint intervals), although the density upper bound does survive mixing. And the first display requires bounded ratios; for unbounded $w$ with finite $\chi^2$, as in Section~\ref{sec:gauss}, \eqref{eq:val} and the projection identities hold but the Bernstein bound does not, and the role of $\chisqZ$ as the calibration variance is then a heuristic that the experiments test rather than a theorem.

\subsection{Choosing the sketch}\label{sec:choose}

The results so far say what a good sketch is: small $\chisqZ$ and small leakage. They do not say how to find one, and the obstacle is that the leakage involves the target response law, which is never observed in the covariate-shift setting. We record what can and cannot be certified, give an oracle inequality for selection from a finite family when a few target labels are available, and describe the default procedure used in the experiments.

\paragraph{Certifying leakage from source data.}
For the common-threshold version of either rule, the relevant leakage is the score-specific quantity $\Delta_S=\sup_t|E_P[(w-w_Z\circ T)\ind\{S\le t\}]|$ of Proposition~\ref{prop:scoreleak}, a functional of the source law and of $w$ alone. With $w$ known it is therefore estimable from labelled \emph{source} data, by the weighted Kolmogorov--Smirnov statistic
\[
\widehat\Delta_S:=\sup_t\Bigl|\frac1{n_s}\sum_{l=1}^{n_s}\bigl(w(X_l')-w_Z(T(X_l'))\bigr)\ind\{S_l'\le t\}\Bigr|
\]
on an independent source sample $(X_l',Y_l')_{l\le n_s}$. Its accuracy, however, is governed by the discarded shift.

\begin{proposition}[The cost of certification]\label{prop:certify}
Under Assumption~\ref{ass:chi}, for every $t\in\Rbar$ the summand $(w-w_Z\circ T)\ind\{S\le t\}$ has mean $E_P[(w-w_Z\circ T)\ind\{S\le t\}]$ and variance at most $\E_P[(w-w_Z\circ T)^2]=\chisq-\chisqZ$, with equality at $t=+\infty$, where the mean is $0$. Under Assumption~\ref{ass:bdd}, with probability at least $1-\delta$,
$|\widehat\Delta_S-\Delta_S|\le2\,\eta_{n_s}(\delta)$, with $\eta$ as in \eqref{eq:eta} evaluated at $(\chisq,w_{\max})$ and sample size $n_s$.
\end{proposition}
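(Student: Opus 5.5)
The argument has two parts: an exact second-moment computation, and then a reduction of the uniform deviation bound to the concentration machinery behind Theorem~\ref{thm:price}(a), with the summand $\xi:=w-w_Z\circ T$ in place of the weight.

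\emph{Mean and variance.} By definition the summand $\xi\,\ind\{S\le t\}$ has mean $\E_P[\xi\ind\{S\le t\}]$. Its variance is at most its second moment $\E_P[\xi^2\ind\{S\le t\}]\le\E_P[\xi^2]$. By the Pythagorean identity of Theorem~\ref{prop:main}(ii), $\E_P[\xi^2]=\chisq-\chisqZ$. At $t=+\infty$ the indicator is identically $1$, and Theorem~\ref{prop:main}(i) gives $\E_P[\xi]=\E_P[w]-\E_P[w_Z\circ T]=1-1=0$. So at $t=+\infty$ the mean vanishes and the variance equals the second moment $\chisq-\chisqZ$, which is the claimed equality.

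\emph{Reduction of the deviation bound.} Write $F(t):=\E_P[\xi\ind\{S\le t\}]$ and let $\widehat F$ be its empirical counterpart. Then $\Delta_S=\sup_t|F|$ and $\widehat\Delta_S=\sup_t|\widehat F|$, so
\[
|\widehat\Delta_S-\Delta_S|\le\sup_t|\widehat F(t)-F(t)|.
\]
Split $\xi=w-w_Z\circ T$ and bound
\[
\sup_t|\widehat F-F|\le\sup_t\bigl|\widehat G_w-G_w\bigr|+\sup_t\bigl|\widehat G_{w_Z}-G_{w_Z}\bigr|,
\]
where $G_v(t)=\E_P[v\ind\{S\le t\}]$ and $\widehat G_v$ is its empirical version. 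Each of the two terms is exactly the uniform weighted-empirical-CDF deviation controlled in the proof of Theorem~\ref{thm:price}(a) (Appendix~\ref{app:price}). That proof bounds it by $\eta_{n}(\delta')$, using a Bernstein bound at finitely many grid points (the $16\sqrt m$ factor is the union over a grid of mesh $1/\sqrt m$), a monotonicity/bracketing step for the $1/\sqrt m$ term, and Assumption~\ref{ass:bdd}.

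For $w_Z$, the Bernstein inputs are variance $\le 1+\chisqZ\le1+\chisq$ and sup $\le w_{\max}$, both by Theorem~\ref{prop:main}(ii). Hence $\eta_{n_s}$ evaluated at $(\chisq,w_{\max})$ dominates both terms, and we obtain the factor $2$.

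\emph{Main obstacle.} The difficulty is the allocation of $\delta$: two applications at level $\delta/2$ would inflate the logarithm. I would instead run the union bound once over the common grid and both weight functions. If the grid constant $16$ in \eqref{eq:eta} was chosen with slack (e.g.\ it already accounts for two-sided events), this absorbs the split. Otherwise I would state the bound with $\delta/2$, noting that only the constant changes.

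A second subtlety is that the bracketing step needs the score CDF weighted by $w$ to be continuous, or at least needs the $1/\sqrt m$ slack to absorb atoms. I would handle this by choosing grid points at quantiles of the $w$-weighted and $w_Z$-weighted score laws simultaneously, since nonnegative weights make each $G_v$ monotone and the bracketing argument then needs no continuity.
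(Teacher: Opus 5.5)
Your proposal is correct and is essentially the paper's proof. It uses the same split of $w-w_Z\circ T$ into the two nonnegative weighted indicators, applies Steps~1--2 of Appendix~\ref{app:price} to each (with $(1+\chisqZ,\norm{w_Z}_\infty)$ dominated by $(1+\chisq,w_{\max})$ via Theorem~\ref{prop:main}(ii)), and finishes with the reverse triangle inequality for the suprema.

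Your $\delta$-allocation worry is settled by exactly the slack you suspected. Step~2 applies Step~1 at level $\delta/(4N)$ to only $2N$ quantities, so its event $\mathcal E$ already has probability at least $1-\delta/2$ with uniform deviation $\eta_{n_s}(\delta)$. Intersecting the two such events therefore gives probability at least $1-\delta$ with $\eta$ unchanged, so the $\delta/2$ fallback is never needed.

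Continuity is not needed there either. With the grid $t_j=\Qf(j/N;F)$, the bound $F(t_{j+1}^-)-F(t_j)\le1/N$ holds for any CDF of total mass one. Both weighted score laws have total mass one by Theorem~\ref{prop:main}(i).
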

\begin{proof}
The variance bound is Theorem~\ref{prop:main}(ii). For the uniform bound, write the summand as the difference of the two nonnegative weighted indicators $w\ind\{S\le t\}$ and $w_Z\circ T\,\ind\{S\le t\}$; Steps 1--2 of the proof of Theorem~\ref{thm:price}(a) (Appendix~\ref{app:price}) bound the supremum deviation of each weighted empirical CDF from its mean by $\eta_{n_s}(\delta)$, each on an event of probability at least $1-\delta/2$, with the respective $(1+\chi^2,\|\cdot\|_\infty)$, both of which are no larger than $(1+\chisq,w_{\max})$ by Theorem~\ref{prop:main}(ii); on the intersection, the supremum of the difference deviates from $\Delta_S$ by at most the sum of the two suprema.
\end{proof}

The proposition describes a limitation of this particular certificate, not an impossibility result. A sketch is worth using when it discards most of the divergence, $\chisqZ\ll\chisq$; the variance bound for the summands of $\widehat\Delta_S$ is then close to $\chisq$, and the available uniform confidence bound depends on the full divergence, so this bound can remain large even when the sketched divergence is small, and it is of the same order as the calibration-error bound of \emph{full} weighting with $n_s$ points (Theorem~\ref{thm:price}(a)). This compares bounds, not realised accuracies, and whether every source-data certificate must pay this price is open (Section~\ref{sec:disc}). The certificate considered here does not use additional unlabelled target covariates. In this paper the leakage is controlled either by structure (a sufficiency assumption, as in Corollary~\ref{cor:informed} and Proposition~\ref{prop:fullgauss}) or by target labels.

\paragraph{Selection with a few target labels.}
When a small labelled target sample is available, a finite family of candidate rules can be compared on it.

\begin{proposition}[Oracle inequality for selection]\label{prop:select}
Let $\Cset_1,\dots,\Cset_J$ be prediction rules that are fixed functions of the training and calibration data $\mathcal D$ (for instance sketched calibration with $J$ different sketches, full weighting, clipping and the unweighted rule), with realised target coverages $\Cov_j:=\Prob_Q\{Y\in\Cset_j(X)\mid\mathcal D\}$. Let $(X_l',Y_l')_{l\le n_t}$ be i.i.d.\ from $Q$, independent of $\mathcal D$, $\widehat{\Cov}_j:=n_t^{-1}\sum_l\ind\{Y_l'\in\Cset_j(X_l')\}$, and $\hat\jmath\in\arg\min_j|\widehat{\Cov}_j-(1-\alpha)|$. Then, conditionally on $\mathcal D$, with probability at least $1-\delta$,
\[
\bigl|\Cov_{\hat\jmath}-(1-\alpha)\bigr|\ \le\ \min_{j\le J}\bigl|\Cov_j-(1-\alpha)\bigr|+2\sqrt{\frac{\log(2J/\delta)}{2n_t}} .
\]
\end{proposition}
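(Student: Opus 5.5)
Conditional on $\mathcal D$, each rule $\Cset_j$ is a fixed measurable set-valued map, so the indicators $\ind\{Y_l'\in\Cset_j(X_l')\}$, $l\le n_t$, are i.i.d.\ Bernoulli with mean $\Cov_j$; here we use that the target sample is independent of $\mathcal D$. The plan is the standard empirical-risk-minimisation argument over a finite class: establish uniform concentration of $\widehat{\Cov}_j$ around $\Cov_j$, then compare the selected index with the oracle index by the triangle inequality.

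\emph{Step 1 (concentration).} For each $j$, Hoeffding's inequality for $[0,1]$-valued variables gives $\Prob\{|\widehat{\Cov}_j-\Cov_j|>\varepsilon\mid\mathcal D\}\le2e^{-2n_t\varepsilon^2}$. A union bound over the $J$ rules with $\varepsilon:=\sqrt{\log(2J/\delta)/(2n_t)}$ yields an event $\mathcal E$ of conditional probability at least $1-\delta$ on which $\max_j|\widehat{\Cov}_j-\Cov_j|\le\varepsilon$.

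\emph{Step 2 (comparison).} On $\mathcal E$, let $j^\ast$ attain $\min_j|\Cov_j-(1-\alpha)|$. Then
\[
|\Cov_{\hat\jmath}-(1-\alpha)|\le|\widehat{\Cov}_{\hat\jmath}-(1-\alpha)|+\varepsilon\le|\widehat{\Cov}_{j^\ast}-(1-\alpha)|+\varepsilon\le|\Cov_{j^\ast}-(1-\alpha)|+2\varepsilon .
\]
The first and third inequalities use the concentration from Step~1. The middle inequality is the definition of $\hat\jmath$, and it holds for every choice of $\hat\jmath$ in the argmin, so ties cause no difficulty.

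There is no real obstacle in this argument. The only point needing care is measurability and independence: the rules must not depend on the target sample, which is why the statement requires them to be fixed functions of $\mathcal D$. The randomness in $\hat\jmath$ is handled entirely by the uniform event $\mathcal E$, so no separate treatment of the data-dependent index is needed.
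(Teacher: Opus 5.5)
Your proposal is correct and matches the paper's proof: the paper also uses Hoeffding and a union bound with the same $\varepsilon=\sqrt{\log(2J/\delta)/(2n_t)}$, followed by the identical three-step triangle-inequality chain. The only difference is that the paper runs the chain for every $j$ rather than for a minimiser $j^\ast$, which is equivalent because $J$ is finite.
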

\begin{proof}
Put $\varepsilon:=\sqrt{\log(2J/\delta)/(2n_t)}$. Hoeffding's inequality and a union bound give, with probability at least $1-\delta$,
\[
\max_{j\le J}\bigl|\widehat{\Cov}_j-\Cov_j\bigr|\le\varepsilon .
\]
On this event, for every $j$,
\[
|\Cov_{\hat\jmath}-(1-\alpha)|\le|\widehat{\Cov}_{\hat\jmath}-(1-\alpha)|+\varepsilon\le|\widehat{\Cov}_j-(1-\alpha)|+\varepsilon\le|\Cov_j-(1-\alpha)|+2\varepsilon .
\]
\end{proof}

The inequality is elementary, and its honest reading is that selection costs $\sqrt{\log J/n_t}$ once; the same $n_t$ labels could instead calibrate directly on the target at error of order $n_t^{-1/2}$. The guarantee does not by itself establish that selection is the better use of the labels, since its second term is of the same order as direct calibration. Selection can be the better use when some candidate has small leakage and small $\chisqZ$ and $m$ is large, so that $\min_j|\Cov_j-(1-\alpha)|$, which is of order $\Delta_{T_j}+\sqrt{(1+\chi^2(Q_{Z_j}\|P_{Z_j}))/m}$ by Corollary~\ref{cor:profile}, is much smaller than $n_t^{-1/2}$, but this is an empirical possibility rather than a proven comparison; with tens of labels the guarantee is loose, and in the experiments (E7--E8) the procedure nevertheless stays within two points of the target in every configuration, typically picking a sketched rule when the shift is strong and the unweighted rule when it is harmless. Among candidates whose estimated coverage is within $\varepsilon$ of the target one may also prefer the shortest; its coverage error is then at most $2\varepsilon$ on the same event, whenever such a candidate exists.

\paragraph{A default: the residual sketch.}
Theorem~\ref{thm:fullscore} says that with a full-covariate predictor the sketch must capture the directions along which the conditional law of the \emph{residual} varies. This suggests the following procedure, used in Sections~\ref{sec:exp}.
\begin{enumerate}[leftmargin=1.8em,itemsep=1pt]
\item Split the training fold in two halves; fit the predictor on each half and form cross-fitted absolute residuals $R_i=|Y_i-\hat\mu_{-i}(X_i)|$ on the whole fold; refit the predictor on the whole fold for the final score.
\item Estimate the directions along which $\mathcal L(R\mid X=x)$ varies: the least-squares coefficient of $R$ on $X$ (one direction; under a Gaussian or elliptical design this is proportional to the index of a single-index scale model \cite{brillinger1983,li1991}), or the leading sliced-inverse-regression directions of $R$ on $X$ \cite{li1991} for $k\ge2$.
\item Optionally append an estimate of the shift direction from unlabelled source and target covariates (a source-versus-target classifier). The true direction would remove the shift from the fibres and so the leakage, at the price of the full $\chi^2$ along it; an estimated direction does so up to an allowance for its angular error (Corollary~\ref{cor:informed}(c)).
\item Take as candidates the sketches with $k\in\{1,2\}$, with and without the shift direction, together with the trivial sketches (the constant sketch, i.e.\ unweighted calibration; the identity, i.e.\ full weighting; and clipping), and select by Proposition~\ref{prop:select} if target labels exist, otherwise by the structural assumption that justified the candidates.
\end{enumerate}
Under the heteroscedastic single-index model of Proposition~\ref{prop:fullgauss}, step 2 estimates the scale direction $\beta$ and the leakage allowance is linear in its angular error. Two caveats are important. First, the scale direction is statistically harder to estimate than the mean direction, because the residual magnitude carries the direction only through its scale: in experiment E6 the least-squares index of the mean has angular error $\sin\angle\approx0.035$ at $n=2000$, $d=40$, while the residual direction has $\sin\angle\approx0.33$, and the residual sketch undercovers there by about one point; its accuracy improves with $n$ (E7). Second, the procedure is only as good as the structural assumption behind step 2; a response whose noise scale depends on $x$ symmetrically defeats sliced inverse regression (sliced average variance estimation \cite{cook1998} is the standard remedy), and the bound of Proposition~\ref{prop:certify} applies to the certificate considered here if the assumption is checked from source data.
\section{The Gaussian single-index model}\label{sec:gauss}

Everything above becomes explicit in the following model, which also separates the roles of the shift and the signal.

\begin{example}\label{ex:gauss}
$\PX=N(0,I_d)$, $\QX=N(\mu a,I_d)$ with $\mu\ge0$ and $\norm a=1$; $Y=\langle\beta,X\rangle+\sigma\xi$ with $\beta\ne0$, $\sigma>0$ and $\xi\sim N(0,1)$ independent of $X$; $T(x)=\Pi x$ with $\Pi\in\R^{k\times d}$ having orthonormal rows, so $\Pi\Pi^{\!\top}=I_k$ and $\Pi^{\!\top}\Pi$ is the orthogonal projector onto $\operatorname{row}(\Pi)$. Write $v_\parallel=\Pi^{\!\top}\Pi v$, $v_\perp=v-v_\parallel$ for $v\in\{a,\beta\}$.
\end{example}

When $\mu>0$, $w$ is unbounded in this model, so Assumption~\ref{ass:bdd} fails, but $\E_P[w^2]=e^{\mu^2}<\infty$, so Assumption~\ref{ass:chi} holds: Theorem~\ref{thm:wcp} and Theorem~\ref{prop:main} apply, while the Bernstein bound of Theorem~\ref{thm:price}(a) does not (and the lower bounds of Theorems~\ref{thm:price}(b) and \ref{thm:lb} are existential constructions, not statements about this family).

\begin{proposition}\label{prop:gauss}
In Example~\ref{ex:gauss}, with $s^2:=\sigma^2+\norm{\beta_\perp}^2$:
\begin{gather*}
w(x)=e^{\mu\langle a,x\rangle-\mu^2/2},\qquad \chisq=e^{\mu^2}-1,\qquad \KL(\QX\|\PX)=\mu^2/2,\\
w_Z(z)=e^{\mu\langle\Pi a,z\rangle-\mu^2\norm{a_\parallel}^2/2},\qquad \chisqZ=e^{\mu^2\norm{a_\parallel}^2}-1,\\
\KL(\QX\|\PX)-\KL(Q_Z\|P_Z)=\tfrac{\mu^2}{2}\norm{a_\perp}^2,
\end{gather*}
\[
Y\mid Z=z\ \sim\ N\bigl(\langle\Pi\beta,z\rangle,\,s^2\bigr)\ \text{under }P,\qquad
Y\mid Z=z\ \sim\ N\bigl(\langle\Pi\beta,z\rangle+\mu\langle\beta_\perp,a_\perp\rangle,\,s^2\bigr)\ \text{under }Q,
\]
so the leakage is constant in $z$ and equals
\begin{equation}\label{eq:gleak}
\delta_\Pi\equiv\Delta_\Pi=2\Phi\!\Bigl(\frac{\mu\,|\langle\beta_\perp,a_\perp\rangle|}{2s}\Bigr)-1
\ \le\ \frac{\mu\,|\langle\beta_\perp,a_\perp\rangle|}{\sqrt{2\pi}\,s}
\ \le\ \frac{\mu\norm{a_\perp}}{\sqrt{2\pi}} .
\end{equation}
The sketched oracle interval (equal-tailed quantiles of $Y\mid Z$) has length $2z_{1-\alpha/2}s$ against $2z_{1-\alpha/2}\sigma$ for the full oracle.
\end{proposition}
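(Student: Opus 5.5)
The plan is a direct Gaussian computation organised by the orthogonal decomposition $\R^d=\operatorname{row}(\Pi)\oplus\operatorname{row}(\Pi)^\perp$. Write $X=\Pi^{\!\top}Z+X_\perp$ with $Z=\Pi X\in\R^k$ and $X_\perp=(I-\Pi^{\!\top}\Pi)X$. Under $\PX=N(0,I_d)$ the two pieces are independent, with $Z\sim N(0,I_k)$ and $X_\perp$ standard Gaussian on the complement. Under $\QX$ they are again independent, now with $Z\sim N(\mu\Pi a,I_k)$ and $X_\perp\sim N(\mu a_\perp,\cdot)$.

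\textbf{Step 1: ratios, divergences and KL.} The first line of formulas is standard:
\begin{itemize}
\item $w(x)=e^{\mu\langle a,x\rangle-\mu^2/2}$ from the two densities;
\item $\E_P[w^2]=e^{\mu^2}$ from the Gaussian moment generating function;
\item $\KL(\QX\|\PX)=\mu^2/2$.
\end{itemize}
For $w_Z$ I would use the direct pushforward $Q_Z=N(\mu\Pi a,I_k)$, $P_Z=N(0,I_k)$, together with $\norm{\Pi a}=\norm{a_\parallel}$ (since $\Pi$ has orthonormal rows). Alternatively, Theorem~\ref{prop:main}(i) gives the same result by integrating out $\langle a_\perp,X_\perp\rangle$ in $\E_P[w\mid Z]$. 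The identical computations give $\chisqZ=e^{\mu^2\norm{a_\parallel}^2}-1$ and $\KL(Q_Z\|P_Z)=\mu^2\norm{a_\parallel}^2/2$. The KL difference then follows from $\norm{a}^2=\norm{a_\parallel}^2+\norm{a_\perp}^2=1$.

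\textbf{Step 2: conditional laws of $Y$.} I would write $\langle\beta,X\rangle=\langle\Pi\beta,Z\rangle+\langle\beta_\perp,X_\perp\rangle$, which holds because $\langle\beta_\parallel,X\rangle=\langle\Pi\beta,\Pi X\rangle$. Given $Z=z$, the fibre law of $X_\perp$ is its unconditional law, by the independence noted above. Hence $\langle\beta_\perp,X_\perp\rangle$ is $N(0,\norm{\beta_\perp}^2)$ under $P$ and $N(\mu\langle\beta_\perp,a_\perp\rangle,\norm{\beta_\perp}^2)$ under $Q$, in both cases independent of $\xi$. Adding $\sigma\xi$ gives the two stated normals with common variance $s^2$, so the leakage does not depend on $z$.

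\textbf{Step 3: the leakage and its bounds.} For two normals with common variance $s^2$ and mean gap $D$, the total variation is $2\Phi(D/(2s))-1$. This follows from Scheffé's formula: the densities cross at the midpoint. With $D=\mu|\langle\beta_\perp,a_\perp\rangle|$ this gives \eqref{eq:gleak}.
\begin{itemize}
\item First inequality: $2\Phi(u)-1\le 2u\varphi(0)=u\sqrt{2/\pi}$, since $\varphi\le\varphi(0)$. With $u=D/(2s)$ this yields $D/(\sqrt{2\pi}\,s)$.
\item Second inequality: Cauchy--Schwarz gives $|\langle\beta_\perp,a_\perp\rangle|\le\norm{\beta_\perp}\norm{a_\perp}$, and $\norm{\beta_\perp}\le s$.
\end{itemize}

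\textbf{Step 4: oracle lengths.} The equal-tailed interval of $N(m,s^2)$ has length $2z_{1-\alpha/2}s$. In the full space the conditional law is $Y\mid X\sim N(\langle\beta,X\rangle,\sigma^2)$, which gives $2z_{1-\alpha/2}\sigma$.

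The only step needing care is the independence of $Z$ and $X_\perp$ under $\QX$ in Step 2: the covariance stays the identity and only the mean shifts, so the disintegration along the fibres is trivial. Everything else is routine algebra.
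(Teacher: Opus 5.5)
Your proposal is correct and follows essentially the same route as the paper. Both arguments use the decomposition $X=\Pi^{\!\top}Z+X_\perp$ with $Z$ and $X_\perp$ independent under both laws, the push-forward computation of $w_Z$, the conditional normals for $Y$ given $Z$, the equal-variance Gaussian total-variation formula, and Cauchy--Schwarz with $\norm{\beta_\perp}\le s$; your bound $2\Phi(u)-1\le 2u\varphi(0)$ via $\varphi\le\varphi(0)$ is the same fact the paper obtains from concavity of $\Phi$ on $[0,\infty)$.
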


\begin{proof}
For $N(0,I_d)$ versus $N(\mu a,I_d)$,
\[
w(x)=\exp(\mu\langle a,x\rangle-\mu^2/2),\qquad \E_P[w^2]=\exp(\mu^2),\qquad \KL=\mu^2/2.
\]
Since $\Pi\Pi^{\!\top}=I_k$, $Q_Z=N(\mu\Pi a,I_k)$ and $P_Z=N(0,I_k)$, and $\norm{\Pi a}^2=\norm{a_\parallel}^2$; the sketched formulas follow. On the fibre $\{\Pi x=z\}$, write
\[
X=\Pi^{\!\top}z+X_\perp,\qquad X_\perp=(I-\Pi^{\!\top}\Pi)X,
\]
where $X_\perp$ is independent of $\Pi X$ under both laws (uncorrelated jointly Gaussian), with
\[
X_\perp\sim N(0,I-\Pi^{\!\top}\Pi)\ \text{ under }P,\qquad X_\perp\sim N(\mu a_\perp,I-\Pi^{\!\top}\Pi)\ \text{ under }Q,
\]
because the orthogonal projection of $\mu a$ onto $\ker\Pi$ is $\mu a_\perp$. Then, since $\langle\beta,X_\perp\rangle=\langle\beta_\perp,X_\perp\rangle$,
\[
Y=\langle\beta,\Pi^{\!\top}z\rangle+\langle\beta,X_\perp\rangle+\sigma\xi=\langle\Pi\beta,z\rangle+\langle\beta_\perp,X_\perp\rangle+\sigma\xi .
\]
Under $P$, $\langle\beta_\perp,X_\perp\rangle\sim N(0,\norm{\beta_\perp}^2)$; under $Q$ it is $N(\mu\langle\beta_\perp,a_\perp\rangle,\norm{\beta_\perp}^2)$. Lemma~\ref{lem:gausstv} gives the total variation between the two conditional laws. The first inequality in \eqref{eq:gleak} is
\[
2\Phi(t/2)-1\le t\varphi(0)=t/\sqrt{2\pi}\qquad(t\ge0),
\]
by concavity of $\Phi$ on $[0,\infty)$; the second uses Cauchy--Schwarz, $|\langle\beta_\perp,a_\perp\rangle|\le\norm{\beta_\perp}\norm{a_\perp}$, and $\norm{\beta_\perp}\le s$.
\end{proof}

Three consequences of \eqref{eq:gleak} organise the rest of this section.

\paragraph{Leakage is an interaction.} By \eqref{eq:gleak} the discarded part of the shift, $a_\perp$, costs nothing unless it is correlated with the discarded part of the signal, $\beta_\perp$. A shift orthogonal to the lost signal has $\Delta_\Pi=0$ even though $w_Z\ne w$ and $\chisqZ<\chisq$: that part of the chi-square price was being paid for nothing. This is sharper than the Kullback--Leibler bound of Theorem~\ref{prop:main}(iii), which only sees $\norm{a_\perp}$, and it is the mechanism behind the non-monotonicity in Remark~\ref{rem:family}.

\begin{corollary}[Interaction bound in the Gaussian model]\label{cor:gaussinter}
In Example~\ref{ex:gauss}, $\Delta_\Pi=0$ if $a_\perp=0$ or $\beta_\perp=0$, and otherwise
\[
\Delta_\Pi\ \le\ \tfrac12\,\bigl|\cos\angle(a_\perp,\beta_\perp)\bigr|\,\sqrt{e^{\mu^2\norm{a_\perp}^2}-1}.
\]
\end{corollary}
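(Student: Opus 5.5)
The plan is to derive the corollary from the maximal-correlation inequality \eqref{eq:maxcorr} of Proposition~\ref{prop:cov}, with the fibre computation of Proposition~\ref{prop:gauss} supplying the ingredients. The zero cases follow directly from \eqref{eq:gleak}: if $a_\perp=0$ or $\beta_\perp=0$, then $\langle\beta_\perp,a_\perp\rangle=0$, so $\Delta_\Pi=2\Phi(0)-1=0$. In the remaining case, $a_\perp\ne0$ and $\beta_\perp\ne0$, we work fibre by fibre, over $z$.

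First, identify the fibre ratio. By the proof of Proposition~\ref{prop:cov}, $\mathrm dQ_{X|Z=z}/\mathrm dP_{X|Z=z}=w/w_Z(z)$. Proposition~\ref{prop:gauss} gives closed forms for both, and their ratio is
\[
w(x)/w_Z(z)=\exp\bigl(\mu\langle a_\perp,x\rangle-\mu^2\norm{a_\perp}^2/2\bigr).
\]
This uses $\langle a,x\rangle=\langle a_\parallel,x\rangle+\langle a_\perp,x\rangle$, $\langle a_\parallel,x\rangle=\langle\Pi a,z\rangle$ on the fibre, and $\norm a^2=\norm{a_\parallel}^2+\norm{a_\perp}^2$. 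So the ratio depends on $x$ only through $U=\langle a_\perp,X_\perp\rangle$. Equivalently, the fibre laws are $N(0,I-\Pi^{\!\top}\Pi)$ and $N(\mu a_\perp,I-\Pi^{\!\top}\Pi)$ on $\ker\Pi$, and their chi-square divergence is $e^{\mu^2\norm{a_\perp}^2}-1$. This can be checked as $\E[\exp(2\mu U-\mu^2\norm{a_\perp}^2)]-1$ with $U\sim N(0,\norm{a_\perp}^2)$.

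Second, the response kernel on the fibre is $N(\langle\Pi\beta,z\rangle+\langle\beta_\perp,X_\perp\rangle,\sigma^2)$. It therefore depends on $x$ only through $V=\langle\beta_\perp,X_\perp\rangle$.

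Third, under $P_{X|Z=z}$ the pair $(U,V)$ is jointly Gaussian with covariance $\langle a_\perp,\beta_\perp\rangle$. Here we use that $X_\perp$ has covariance $I-\Pi^{\!\top}\Pi$, which acts as the identity on $\ker\Pi$. The correlation is therefore $\cos\angle(a_\perp,\beta_\perp)$. By Lancaster's result cited after Proposition~\ref{prop:cov}, the maximal correlation of a jointly Gaussian pair equals the absolute value of the ordinary correlation. Inserting this and the fibre divergence into \eqref{eq:maxcorr} gives the stated bound for every $z$. Integrating against $Q_Z$ then bounds $\Delta_\Pi$, and the bound is in any case constant in $z$.

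The only delicate point is bookkeeping on the fibre. The degenerate covariance must be handled by restricting to $\ker\Pi$, and the normalising constants in $w/w_Z(z)$ must cancel to leave exactly the $a_\perp$ term. The Lancaster citation should cover the nondegenerate Gaussian pair. If $a_\perp$ and $\beta_\perp$ are parallel, then $U$ and $V$ are deterministic linear functions of each other, the correlation is $\pm1$, and the bound still holds trivially through \eqref{eq:maxcorr}.
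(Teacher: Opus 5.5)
Your proposal is correct and follows essentially the same route as the paper's proof. The zero cases come from \eqref{eq:gleak}; on each fibre the ratio depends only on $U=\langle a_\perp,X_\perp\rangle$ and the kernel only on $V=\langle\beta_\perp,X_\perp\rangle$; Lancaster's result gives maximal correlation $|\cos\angle(a_\perp,\beta_\perp)|$, the lognormal second moment gives fibre divergence $e^{\mu^2\norm{a_\perp}^2}-1$, and \eqref{eq:maxcorr} then yields the bound.
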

\begin{proof}
If $a_\perp=0$ or $\beta_\perp=0$ then $\langle\beta_\perp,a_\perp\rangle=0$ and \eqref{eq:gleak} gives $\Delta_\Pi=0$. Otherwise, on the fibre over $z$, $X_\perp\sim N(0,I-\Pi^{\!\top}\Pi)$ under $P$; the ratio
\[
\frac{w}{w_Z(z)}=\exp\bigl(\mu\langle a_\perp,X_\perp\rangle-\mu^2\norm{a_\perp}^2/2\bigr)
\]
depends on $X_\perp$ through $U=\langle a_\perp,X_\perp\rangle$, and the response kernel $N(\langle\Pi\beta,z\rangle+\langle\beta_\perp,X_\perp\rangle,\sigma^2)$ depends on it through $V=\langle\beta_\perp,X_\perp\rangle$. The pair $(U,V)$ is centred Gaussian with correlation
\[
\frac{\langle a_\perp,\beta_\perp\rangle}{\norm{a_\perp}\,\norm{\beta_\perp}},
\]
whose maximal correlation is its absolute value \cite{lancaster1957}, and
\[
\chi^2(Q_{X|Z=z}\|P_{X|Z=z})=e^{\mu^2\norm{a_\perp}^2}-1
\]
by the lognormal second moment. Apply \eqref{eq:maxcorr}.
\end{proof}

For small $\mu\norm{a_\perp}$ the right-hand side is $\approx\tfrac12\mu|\langle a_\perp,\beta_\perp\rangle|/\norm{\beta_\perp}$, which exceeds the exact first-order leakage $\mu|\langle a_\perp,\beta_\perp\rangle|/(\sqrt{2\pi}s)$ of \eqref{eq:gleak} by the factor $(\sqrt{2\pi}/2)\,s/\norm{\beta_\perp}$: about $1.25$ when the noise is negligible relative to the discarded signal, and arbitrarily large when the noise dominates, because the general bound does not see $\sigma$ (it bounds the variability of $P_{Y|X}(A)$ on the fibre by $\tfrac14$). It also grows with the standard deviation of the fibre ratio, so it is a small-shift bound. On the realisation of experiment E3 (random sketch, $k=8$, $d=40$, $s/\norm{\beta_\perp}=1.06$) the exact leakage, the interaction bound and the Kullback--Leibler bound of Theorem~\ref{prop:main}(iii) are $(0.012,0.017,0.22)$ at $\mu=0.5$, $(0.035,0.078,0.67)$ at $\mu=1.5$ and $(0.059,0.41,1.11)$ at $\mu=2.5$: the interaction bound is the right order of magnitude where the KL bound is vacuous, and it captures the mechanism, since it vanishes with $\langle a_\perp,\beta_\perp\rangle$ while the KL bound depends on $\norm{a_\perp}$ alone.

\paragraph{Random sketches retain, on average, a $k/d$ fraction of everything.} If $\Pi$ is a Haar-random $k$-frame then $\E\norm{a_\parallel}^2=k/d$ and $\E\norm{\beta_\perp}^2=(1-k/d)\norm\beta^2$ (Lemma~\ref{lem:haar}). A Johnson--Lindenstrauss map, which is $\sqrt{d/k}\,\Pi$ up to the choice of distribution, preserves pairwise distances among a \emph{given} set of $n$ points with $k=O(\varepsilon^{-2}\log(n/\sqrt\delta))$ (Lemma~\ref{lem:jl}), but the retained energy of a fixed direction not determined by those points is $k/d$ on average, whatever the scaling. Sketched calibration is invariant to invertible linear maps of $Z$ (the ratio $w_Z$ and the score fitted in $Z$ transform together), so the scaling is immaterial here; it matters only for distance-based downstream uses (Section~\ref{sec:disc}). Random sketching is therefore the right tool for the parts of conformal inference that consume distances and the wrong tool for recovering a direction.

\paragraph{Informed sketches pay only for the shift that matters.}
\begin{corollary}\label{cor:informed}
In Example~\ref{ex:gauss}:
\begin{enumerate}[leftmargin=1.8em,itemsep=1pt]
\item[(a)] (Sufficient sketch) If $\beta\in\operatorname{row}(\Pi)$ then $\Delta_\Pi=0$ for every $a$ and every $\mu$, the sketched oracle equals the full oracle, and $\chisqZ=e^{\mu^2\norm{a_\parallel}^2}-1$ with $\norm{a_\parallel}^2\ge\langle a,\beta\rangle^2/\norm\beta^2$. For $\Pi=\beta^{\!\top}/\norm\beta$ (one row) this is $e^{\mu^2\langle a,\beta\rangle^2/\norm\beta^2}-1$.
\item[(b)] (Estimated sufficient sketch) If $\operatorname{row}(\Pi)\ni\hat\beta$ for an estimate $\hat\beta$ independent of the calibration fold, then
$\Delta_\Pi\le\mu\norm\beta\sin\angle(\beta,\hat\beta)\big/\bigl(\sqrt{2\pi}\,\sigma\bigr)$.
\item[(c)] (Shift-aware sketch) If $\operatorname{row}(\Pi)\ni\hat a$ for an estimate $\hat a$ of the shift direction independent of the calibration fold, then
$\Delta_\Pi\le\mu\sin\angle(a,\hat a)/\sqrt{2\pi}$.
\end{enumerate}
\end{corollary}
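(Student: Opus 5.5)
All three parts follow from the explicit leakage formula \eqref{eq:gleak} of Proposition~\ref{prop:gauss}, \(\Delta_\Pi=2\Phi(\mu|\langle\beta_\perp,a_\perp\rangle|/(2s))-1\le\mu|\langle\beta_\perp,a_\perp\rangle|/(\sqrt{2\pi}\,s)\) with \(s\ge\sigma\). Each part reduces to bounding \(|\langle\beta_\perp,a_\perp\rangle|\) by a projection argument. Here \(\perp\) always denotes projection onto \(\ker\Pi=\operatorname{row}(\Pi)^\perp\).

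For (a), assume \(\beta\in\operatorname{row}(\Pi)\). Then \(\beta_\perp=0\), so \eqref{eq:gleak} gives \(\Delta_\Pi=0\) for every \(a,\mu\), and \(s=\sigma\). The oracle length \(2z_{1-\alpha/2}s\) therefore equals the full oracle length. To see that the sets coincide, note that \(Y\mid Z=z\sim N(\langle\Pi\beta,z\rangle,\sigma^2)\) and \(\langle\Pi\beta,\Pi x\rangle=\langle\beta,x\rangle\) when \(\beta\) lies in the row space.

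The value of \(\chisqZ\) is read off Proposition~\ref{prop:gauss}. For the inequality \(\norm{a_\parallel}^2\ge\langle a,\beta\rangle^2/\norm\beta^2\), note that \(a_\parallel\) is the projection onto a subspace containing \(\beta/\norm\beta\). Its norm is therefore at least that of the projection onto the line spanned by \(\beta\), by monotonicity of projection norms. For the one-row sketch \(\Pi=\beta^{\top}/\norm\beta\), equality holds.

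For (b), assume \(\hat\beta\in\operatorname{row}(\Pi)\). Use \(|\langle\beta_\perp,a_\perp\rangle|\le\norm{\beta_\perp}\norm{a_\perp}\le\norm{\beta_\perp}\) and \(s\ge\sigma\). It remains to show \(\norm{\beta_\perp}\le\norm\beta\sin\angle(\beta,\hat\beta)\). Since \(\beta_\perp\) is the residual of \(\beta\) after projection onto a subspace containing \(\hat\beta\), it is no longer than the residual after projection onto the line spanned by \(\hat\beta\). That residual has length exactly \(\norm\beta\sin\angle(\beta,\hat\beta)\). Independence of \(\hat\beta\) from the calibration fold is needed only so that \(\Pi\) is a fixed sketch in the sense of Remark~\ref{rem:allowed}; the bound then holds conditionally on \(\hat\beta\).

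For (c), assume \(\hat a\in\operatorname{row}(\Pi)\). Apply the same residual argument to \(a\): since \(\norm a=1\), it gives \(\norm{a_\perp}\le\sin\angle(a,\hat a)\). Then use \(|\langle\beta_\perp,a_\perp\rangle|\le\norm{\beta_\perp}\norm{a_\perp}\) together with \(\norm{\beta_\perp}\le s\), which is the last step of \eqref{eq:gleak}. This yields \(\Delta_\Pi\le\mu\sin\angle(a,\hat a)/\sqrt{2\pi}\).

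There is no real obstacle. The only point needing care is the projection-monotonicity fact: for a subspace \(V\ni u\), \(\norm{v-P_Vv}\le\norm{v-P_{\mathrm{span}(u)}v}\). This holds because \(P_Vv\) is the best approximation of \(v\) in \(V\), and the whole line spanned by \(u\) lies in \(V\). One must also check that the angle convention (unsigned, in \([0,\pi]\)) makes \(\sin\) nonnegative.
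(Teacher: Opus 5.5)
Your proof is correct and follows the paper's argument essentially step for step. In (a) you use $\beta_\perp=0$. In (b) and (c) you bound $\norm{\beta_\perp}$, respectively $\norm{a_\perp}$, by the distance to the line spanned by $\hat\beta$, respectively $\hat a$, and combine this with the first, respectively last, inequality of \eqref{eq:gleak} and $s\ge\sigma$. Your extra observation that the sketched and full oracle sets coincide pointwise, not merely in length, is a harmless strengthening.
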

\begin{proof}
(a) $\beta_\perp=0$, so $s=\sigma$ and \eqref{eq:gleak} is zero; $\norm{a_\parallel}\ge|\langle a,\beta/\norm\beta\rangle|$ because $\beta/\norm\beta\in\operatorname{row}(\Pi)$. (b) $\norm{\beta_\perp}$ is the distance from $\beta$ to $\operatorname{row}(\Pi)\ni\hat\beta$, hence at most the distance from $\beta$ to the line spanned by $\hat\beta$:
\[
\norm{\beta_\perp}\le\norm\beta\sin\angle(\beta,\hat\beta);
\]
use the first inequality in \eqref{eq:gleak} with $s\ge\sigma$. (c) Likewise $\norm{a_\perp}\le\sin\angle(a,\hat a)$; use the last inequality in \eqref{eq:gleak}.
\end{proof}

The Gaussian response is used in Corollary~\ref{cor:informed}(b) only through the smoothing it provides; the following version isolates that.
\begin{proposition}[Lipschitz response kernels]\label{prop:lip}
In Example~\ref{ex:gauss} replace the response by $Y\mid X=x\sim K(\langle\beta,x\rangle,\cdot)$ for a Markov kernel $K$ on $\R$ satisfying $\TV(K(t,\cdot),K(t',\cdot))\le L_K|t-t'|$. Then $\delta_\Pi(z)\le L_K\,\mu\,|\langle\beta_\perp,a_\perp\rangle|$ for every $z$, and if $\hat\beta\in\operatorname{row}(\Pi)$ then $\Delta_\Pi\le L_K\,\mu\norm\beta\sin\angle(\beta,\hat\beta)$. The homoscedastic Gaussian kernel $N(t,\sigma^2)$ has $L_K=1/(\sqrt{2\pi}\sigma)$, which recovers Corollary~\ref{cor:informed}(b). A heteroscedastic Gaussian kernel $N(t,\sigma(t)^2)$ with $\sigma(\cdot)\ge\sigma_{\min}>0$ and $\sigma$ $L$-Lipschitz has
\[
L_K\ \le\ \frac{1/\sqrt{2\pi}+2\varphi(1)L}{\sigma_{\min}} ,
\]
with no upper bound on $\sigma$ required.
\end{proposition}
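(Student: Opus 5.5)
The argument reuses the fibre decomposition from the proof of Proposition~\ref{prop:gauss} and adds a coupling step on each fibre. Fix $z$ and write $X=\Pi^{\!\top}z+X_\perp$, where $X_\perp$ is independent of $\Pi X$. Under $P$ we have $X_\perp\sim N(0,I-\Pi^{\!\top}\Pi)$, and under $Q$ we have $X_\perp\sim N(\mu a_\perp,I-\Pi^{\!\top}\Pi)$. The index is then
\[
\langle\beta,X\rangle=\langle\Pi\beta,z\rangle+V,\qquad V:=\langle\beta_\perp,X_\perp\rangle .
\]
So $P_{Y|Z=z}$ and $Q_{Y|Z=z}$ are the images of the laws of $V$ under the common kernel $v\mapsto K(\langle\Pi\beta,z\rangle+v,\cdot)$. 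Under $P$ the law of $V$ is $N(0,\norm{\beta_\perp}^2)$, and under $Q$ it is $N(\mu\langle\beta_\perp,a_\perp\rangle,\norm{\beta_\perp}^2)$.

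\textbf{Coupling step.} Couple the two laws of $V$ by translation: set $V'=V+c$ with $c:=\mu\langle\beta_\perp,a_\perp\rangle$. Then for any measurable $A$,
\[
\bigl|Q_{Y|Z=z}(A)-P_{Y|Z=z}(A)\bigr|=\bigl|\E\bigl[K(t_z+V+c,A)-K(t_z+V,A)\bigr]\bigr|\le L_K|c|,
\]
where $t_z:=\langle\Pi\beta,z\rangle$. Taking the supremum over $A$ gives $\delta_\Pi(z)\le L_K\mu|\langle\beta_\perp,a_\perp\rangle|$ for every $z$. The same coupling argument would give this bound for any Wasserstein-type coupling. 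The bound is uniform in $z$, so $\Delta_\Pi$ obeys it as well.

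\textbf{Estimated direction.} Cauchy--Schwarz and $\norm{a_\perp}\le1$ give $|\langle\beta_\perp,a_\perp\rangle|\le\norm{\beta_\perp}$. Since $\hat\beta\in\operatorname{row}(\Pi)$, the argument of Corollary~\ref{cor:informed}(b) gives $\norm{\beta_\perp}\le\norm\beta\sin\angle(\beta,\hat\beta)$, which yields the second claim.

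\textbf{Homoscedastic kernel.} For $K(t,\cdot)=N(t,\sigma^2)$, Lemma~\ref{lem:gausstv} gives
\[
\TV=2\Phi\bigl(|t-t'|/(2\sigma)\bigr)-1\le|t-t'|/(\sqrt{2\pi}\sigma),
\]
using the concavity inequality already used in Proposition~\ref{prop:gauss}. This recovers Corollary~\ref{cor:informed}(b).

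\textbf{Heteroscedastic kernel.} This is the main obstacle, and the plan is to use the triangle inequality through the intermediate law $N(t',\sigma(t)^2)$. The mean-shift term is bounded by $|t-t'|/(\sqrt{2\pi}\sigma(t))\le|t-t'|/(\sqrt{2\pi}\sigma_{\min})$.

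For the scale term, let $\sigma_1=\sigma(t)$ and $\sigma_2=\sigma(t')$, and assume without loss of generality $\sigma_1\le\sigma_2$. The two densities $\sigma_j^{-1}\varphi(y/\sigma_j)$ cross at $\pm y_0$. A direct computation gives
\[
\TV\bigl(N(0,\sigma_1^2),N(0,\sigma_2^2)\bigr)=2\bigl[\Phi(y_0/\sigma_1)-\Phi(y_0/\sigma_2)\bigr].
\]
Rather than locate $y_0$ exactly, I would bound this TV via the derivative of the scale family. With $r:=\log\sigma$, the map $r\mapsto N(0,e^{2r})$ has
\[
\frac{\mathrm d}{\mathrm dr}\TV\le\tfrac12\E\bigl|\partial_r\log p\bigr|=\tfrac12\E|U^2-1|,\qquad U\sim N(0,1).
\]
A closed-form computation gives $\E|U^2-1|=4\varphi(1)$, so the scale-family Lipschitz constant is $2\varphi(1)$. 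Integrating in $r$ gives
\[
\TV\le2\varphi(1)\,|\log\sigma_2-\log\sigma_1|\le2\varphi(1)\,|\sigma_2-\sigma_1|/\sigma_{\min}\le2\varphi(1)L|t-t'|/\sigma_{\min}.
\]
The middle step uses the mean value theorem for $\log$ on $[\sigma_{\min},\infty)$. This explains why no upper bound on $\sigma$ is needed: the argument works in $\log\sigma$, and only $\sigma_{\min}$ enters. Summing the mean-shift and scale terms gives the stated $L_K$. The step to verify most carefully is the identity $\E|U^2-1|=4\varphi(1)$: the integrand changes sign at $|u|=1$, and the identity follows from $\int_1^\infty(u^2-1)\varphi(u)\,\mathrm du=\varphi(1)$.
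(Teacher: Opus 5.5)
Your proposal is correct and follows the paper's proof in Appendix~\ref{app:lip}. The fibre translation coupling, the angle bound and the homoscedastic case are the same, and for the heteroscedastic kernel you use the same key computation $\tfrac12\E|U^2-1|=2\varphi(1)$ for the scale direction. The only difference is how you go from $(t,\sigma(t))$ to $(t',\sigma(t'))$: you use the triangle inequality through $N(t',\sigma(t)^2)$, with the closed-form location bound and an integral in $\log\sigma$, whereas the paper integrates both partial derivatives along the joint path $s\mapsto(s,\sigma(s))$. This gives the same constant and needs the Lipschitz property of $\sigma$ only at the two endpoints.
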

\begin{proof}
Appendix~\ref{app:lip}.
\end{proof}

The heteroscedastic model used in experiment E1 below has $\sigma(t)=0.5+0.5\,(t/\norm\beta)_+$ with $\norm\beta=3$, so $\sigma_{\min}=\tfrac12$ and $L=\tfrac16$, and Proposition~\ref{prop:lip} gives $L_K\le0.960$: for a sketch containing $\hat\beta$, the leakage in that model is at most $0.960\,\mu\norm\beta\sin\angle(\beta,\hat\beta)$. The oracle index sketch has zero leakage there by Theorem~\ref{prop:main}(iv)(b).

Case (a) is the point of the paper in one line. With the one-dimensional sufficient sketch, exact weighting pays $e^{\mu^2}-1$ and sketched calibration pays $e^{\mu^2\langle a,\beta\rangle^2/\norm\beta^2}-1$, for the same coverage guarantee. If the signal lives in one direction that carries a fraction $1/d$ of the shift energy, $1+\chi^2$ drops from $e^{\mu^2}$ to $e^{\mu^2/d}$. Case (b) says the saving survives estimation with a leakage allowance linear in the angular error of $\hat\beta$, which can be any training-fold estimate (least squares, a first sliced-inverse-regression direction \cite{li1991}, the gradient of a fitted network). Case (c) is the complementary strategy when the response model is unknown but unlabeled target covariates are available, so that $a$ can be estimated by a source-versus-target classifier; the two can be combined by appending both directions.

\paragraph{A finite-sample bound for the Gaussian model.}\label{sec:lognormal}

The ratio $w=e^{\mu\langle a,X\rangle-\mu^2/2}$ of Example~\ref{ex:gauss} is lognormal under $P$ and unbounded, so the Bernstein bound of Theorem~\ref{thm:price}(a) does not apply directly to the experiments of Section~\ref{sec:exp}. Clipping closes the gap: the clipped rule is exactly valid for a surrogate target with a bounded ratio, its bias has a closed form, and both can be balanced.

\begin{theorem}[Clipped weights in the Gaussian model]\label{thm:lognormal}
In Example~\ref{ex:gauss} with $\mu>0$, run Algorithm~\ref{alg:wcp} with the clipped weights $v_M=w\wedge M$ for a fixed $M>0$, and suppose the score has a continuous distribution under $Q$. Let $c:=(\log M)/\mu+\mu/2$ and
\begin{gather*}
e_M:=\bar\Phi(c-\mu)-M\bar\Phi(c),\qquad c_M:=1-e_M,\\
B_M:=\frac{M}{c_M},\qquad K_M^2:=\frac{e^{\mu^2}\Phi(c-2\mu)+M^2\bar\Phi(c)}{c_M^2},
\end{gather*}
with $\bar\Phi=1-\Phi$. Then with probability at least $1-\delta$ over the calibration sample,
\[
\abs{\Prob_Q\{Y\in\Cset^{v_M}(X)\mid\mathcal D^{\mathrm{cal}}_m\}-(1-\alpha)}\ \le\ 2\eta_m(\delta;K_M^2,B_M)+\frac{B_M}{m}+e_M,
\]
where $\eta_m(\delta;K^2,B):=\sqrt{2K^2\log(16\sqrt m/\delta)/m}+2B\log(16\sqrt m/\delta)/(3m)+m^{-1/2}$ is \eqref{eq:eta} with $(1+\chisq,w_{\max})$ replaced by $(K^2,B)$. For the sketched rule with an orthonormal $\Pi$ such that $\norm{a_\parallel}>0$ and clipped sketched weights $w_Z\wedge M$, the same bound holds with $\mu$ replaced by $\mu\norm{a_\parallel}$ throughout and $\Delta_\Pi$ added to the right-hand side (if $\norm{a_\parallel}=0$ then $w_Z\equiv1$ and Theorem~\ref{thm:price}(a) applies with $K^2=B=1$). For the unclipped rule and any fixed $M>0$, with probability at least $1-\delta-\delta'$, simultaneously for every $M'>0$,
\begin{multline*}
\abs{\Prob_Q\{Y\in\Cset^{w}(X)\mid\mathcal D^{\mathrm{cal}}_m\}-(1-\alpha)}\\
\le\ 2\Bigl(\eta_m(\delta;c_M^2K_M^2,M)+2e_M+\sqrt{\tfrac{e^{\mu^2}\bar\Phi(c-2\mu)}{m\delta'}}\Bigr)+\frac{M'}{m}+\bar\Phi\Bigl(\frac{\log M'}{\mu}-\frac\mu2\Bigr).
\end{multline*}
\end{theorem}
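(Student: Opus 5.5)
The plan is to reduce everything to the surrogate-target argument of Proposition~\ref{prop:clip}, combined with the Bernstein argument behind Theorem~\ref{thm:price}(a). That argument, by Appendix~\ref{app:price}, uses the ratio only through its second moment and its sup norm.

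\emph{Closed-form quantities.} Under $P$, write $G=\langle a,X\rangle\sim N(0,1)$. Then $w=e^{\mu G-\mu^2/2}$, and the event $\{w>M\}$ is $\{G>c\}$ with $c=(\log M)/\mu+\mu/2$. The Gaussian tilting identities give
\[
\E_P[w\ind\{G>c\}]=\bar\Phi(c-\mu),\qquad \E_P[w^2\ind\{G\le c\}]=e^{\mu^2}\Phi(c-2\mu),\qquad \E_P[w^2\ind\{G>c\}]=e^{\mu^2}\bar\Phi(c-2\mu).
\]
From these, $\E_P[(w-M)_+]=\bar\Phi(c-\mu)-M\bar\Phi(c)=e_M$, so $c_M=\E_P[v_M]=1-e_M$. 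The normalised ratio $r_M=v_M/c_M$ is then a bounded density ratio. Its sup is $B_M=M/c_M$ and its second moment is
\[
\E_P[r_M^2]=\frac{e^{\mu^2}\Phi(c-2\mu)+M^2\bar\Phi(c)}{c_M^2}=K_M^2 .
\]

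\emph{Clipped rule.} By scale invariance, the clipped rule equals Algorithm~\ref{alg:wcp} run with $r_M$. That is exactly weighted conformal for the surrogate target $(r_M\PX)\otimes P_{Y|X}$. I would rerun the proof of Theorem~\ref{thm:price}(a) for this surrogate problem, replacing $(1+\chi^2,w_{\max})$ by $(K_M^2,B_M)$. The main check is that Steps 1--2 and the test-weight term $w_{\max}/m$ depend only on these two quantities. Continuity of the score under the surrogate target follows from continuity under $Q$, because the surrogate law is absolutely continuous with respect to $Q$: its density $r_M/w$ is bounded on $\{w>0\}$. This gives a deviation of at most $2\eta_m(\delta;K_M^2,B_M)+B_M/m$ for the surrogate coverage. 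Conditionally on the calibration data, the coverage event is a fixed set, so passing to $Q$ costs at most $\TV=\tfrac12\E|w-r_M|$. The computation in Proposition~\ref{prop:clip} bounds this by $e_M$.

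\emph{Sketched clipped rule.} Under $P$ we have $w_Z(Z)=e^{\mu\norm{a_\parallel}G'-\mu^2\norm{a_\parallel}^2/2}$ with $G'\sim N(0,1)$, so the same formulas hold with $\mu\norm{a_\parallel}$ in place of $\mu$. I would apply the argument above in $\Zc$ to the surrogate $\Qt$, and then add $\TV(Q^T,\Qt)=\Delta_\Pi$ by the triangle inequality, as in Corollary~\ref{cor:profile}.

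\emph{Unclipped rule: the main obstacle.} Here the weights are unbounded, so I would compare the unclipped threshold with the clipped one. I split the random weighted empirical CDFs as $w=v_M+(w-M)_+$. The excess term $\frac1m\sum_i(w(X_i)-M)_+$ has mean $e_M$ and second moment at most $e^{\mu^2}\bar\Phi(c-2\mu)$. By Chebyshev, with probability $1-\delta'$, it is at most $e_M+\sqrt{e^{\mu^2}\bar\Phi(c-2\mu)/(m\delta')}$. On this event, the normalised weighted CDF of the calibration scores under $w$ lies uniformly within twice this amount, plus $e_M$ from the change of normaliser, of the CDF under $v_M$. The latter concentrates around the surrogate CDF with deviation $\eta_m(\delta;c_M^2K_M^2,M)$, since the unnormalised second moment is $c_M^2K_M^2$ and the sup is $M$. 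The surrogate CDF in turn is within $e_M$ of the $Q$ score CDF. Collecting these gives the factor $2(\cdots+2e_M+\cdots)$; the exact constant bookkeeping is the step I expect to be fiddly.

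The test-point mass $w(x)/(\sum_j w(X_j)+w(x))$ is handled separately. It matters only when $w(x)>M'$, an event of $Q$-probability $\bar\Phi(\log M'/\mu-\mu/2)$ because $\langle a,X\rangle\sim N(\mu,1)$ under $Q$. Otherwise the mass is at most $M'/m$ after normalising by roughly $m$. Since the bad set is deterministic, this holds simultaneously for all $M'$. Finally, I would translate the uniform CDF deviation into a coverage deviation via score continuity, as in Theorem~\ref{thm:price}(a).
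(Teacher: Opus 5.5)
Your proposal follows the paper's proof essentially step for step. It uses the Gaussian tilting closed forms, then scale invariance to pass to the bounded density ratio $r_M=v_M/c_M$, then the total-variation step $\tfrac12\E_P|w-r_M|\le e_M$ of Proposition~\ref{prop:clip}, and the lognormal reduction with $\mu\norm{a_\parallel}$ plus $\Delta_\Pi$ for the sketch. For the unclipped rule it uses the split $w=v_M+(w-M)_+$ with the uniform Bernstein step at $(c_M^2K_M^2,M)$, Chebyshev for the excess, and the deterministic cutoff $\{w>M'\}$ for the test-point mass.

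The only thing to tighten is the bookkeeping you flag. Do it, as the paper does, on the unnormalised $\widehat F^w=\widehat F^{v_M}+\widehat F^+$, using $0\le\widehat F^+(t)\le\widehat F^+(+\infty)$ and $0\le F^Q(t)-F^{v_M}(t)\le e_M$. This gives $\sup_{t\in\Rbar}|\widehat F^w(t)-F^Q(t)|\le\eta_m(\delta;c_M^2K_M^2,M)+2e_M+\sqrt{e^{\mu^2}\bar\Phi(c-2\mu)/(m\delta')}$ directly. Step~3 of Appendix~\ref{app:price} then charges the test point exactly $(1-\alpha)w(x)/m\le M'/m$. Comparing normalised CDFs instead (``twice this amount'', ``roughly $m$'') would not reproduce the stated constants.
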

\begin{proof}
Appendix~\ref{app:lognormal}.
\end{proof}

The bound is explicit but loose, as finite-sample bounds built from a uniform Bernstein step are. Minimising the clipped-rule bound $2\eta_m(\delta;K_M^2,B_M)+B_M/m+e_M$ over $M$ at $\delta=0.05$ gives, for clipped full-space weights and for clipped sketched weights with a sketch retaining $\norm{a_\parallel}^2=0.49$ of the shift (the index component of experiment E1),
\begin{center}\small
\begin{tabular}{lcccccc}
\toprule
 & \multicolumn{3}{c}{clipped full-space weights} & \multicolumn{3}{c}{clipped sketched weights, $\norm{a_\parallel}^2=0.49$}\\
$m$ & $\mu=1$ & $\mu=1.5$ & $\mu=2.5$ & $\mu=1$ & $\mu=1.5$ & $\mu=2.5$\\
\midrule
$5\cdot10^3$ & 0.26 & 0.42 & 0.96 & 0.21 & 0.28 & 0.53\\
$5\cdot10^4$ & 0.085 & 0.15 & 0.48 & 0.066 & 0.090 & 0.20\\
$5\cdot10^5$ & 0.027 & 0.050 & 0.21 & 0.021 & 0.029 & 0.072\\
\bottomrule
\end{tabular}
\end{center}
At $m=500$, the calibration size of the experiments, every entry exceeds $0.6$, so the theorem certifies the experiments only in the sense of giving a finite bound that shrinks at the root-$m$ rate, up to logarithmic factors, with the lognormal variance in front; the ordering between full and sketched weighting is the one observed, and the sharper constants of Appendix~\ref{app:sharper} improve each entry by a constant factor without changing the picture. The value of the result is the explicit dependence on $(\mu,M,m)$, which the bounded-ratio theory did not provide for this model.
\section{Leakage is unavoidable for fixed-score threshold calibration}\label{sec:lb}

The leakage in \eqref{eq:val} is not an artefact of the proof. For the problem that Theorem~\ref{thm:price}(b) addresses, choosing a threshold for a prescribed score, any rule that sees only sketched data must pay it, even with unlimited calibration data, full knowledge of the sketched laws, and randomisation.

\begin{theorem}[Lower bound]\label{thm:lb}
Fix $\alpha\in(0,1)$ and $0<\xi\le\tfrac12\min\{\alpha,1-\alpha\}$. There exist a covariate space $\mathcal X$ with three points, a sketch $T$ identifying two of them, source and target marginals $\PX,\QX$ with $\QX\ll\PX$, a fixed score $S^\star$, and two response kernels $P^0_{Y|X},P^1_{Y|X}$, both satisfying Assumption~\ref{ass:A1} with $(\mu_{\mathrm{low}},\mu_{\mathrm{up}},r_0)=(\tfrac14,\tfrac34,\tfrac{2\alpha}{3})$, such that:
\begin{enumerate}[leftmargin=1.8em,itemsep=1pt]
\item[(a)] under both kernels the sketched source law $P^T$, the sketched target marginal $Q_Z$ and the sketched ratio $w_Z$ are identical, and the leakage is $\Delta_T=\xi/2$;
\item[(b)] for every rule $\widehat q=\widehat q(\mathcal D,U)$ that maps sketched calibration data $\mathcal D$ of any size $m$ (together with full knowledge of $P^T,Q_Z,w_Z$) and an independent randomisation variable $U$ to a threshold in $\Rbar$,
\[
\max_{j\in\{0,1\}}\ \E_{\mathcal D,U}\,\abs{\Cov^{(j)}(\widehat q)-(1-\alpha)}\ \ge\ \frac{\alpha\,\xi}{2\alpha+\xi}=\frac{\alpha\,\Delta_T}{\alpha+\Delta_T},
\]
where $\Cov^{(j)}(q):=\Prob\{S^\star(X,Y)\le q\}$ for $(X,Y)\sim\QX\otimes P^j_{Y|X}$ and fixed $q$,
and the bound is attained by a deterministic rule, so it is the exact minimax value of this two-kernel problem.
\end{enumerate}
\end{theorem}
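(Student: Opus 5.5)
The construction puts all of the leakage on a single fibre and makes the two kernels indistinguishable from sketched data. The lower bound then comes from a two-point (Le Cam-type) argument, with the two hypotheses weighted unequally.

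\emph{Construction.} I would take $\mathcal X=\{x_1,x_2,x_3\}$ with $T(x_1)=T(x_2)=z_1$ and $T(x_3)=z_2$, and let $\PX$ give equal mass to $x_1$ and $x_2$. The target $\QX$ tilts the fibre toward $x_1$, with a tilt parameter tuned so that $\Delta_T=\xi/2$, while $\QX(\{x_1,x_2\})$ and $\QX(x_3)$ are fixed independently of the kernel. The score is $S^\star(x,y)=y$. The two kernels are swapped versions of each other on the fibre, $P^1_{Y|X=x_1}=P^0_{Y|X=x_2}$ and $P^1_{Y|X=x_2}=P^0_{Y|X=x_1}$, and they agree at $x_3$. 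Because $P_{X|Z=z_1}$ is uniform on the two points, the fibre mixture $P_{Y|Z=z_1}$ is the same under both kernels. Hence $P^T$, $Q_Z$ and $w_Z$ coincide, which gives (a); the leakage is computed directly from the tilt and the total-variation distance between the two fibre laws.

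\emph{Choice of the fibre laws.} I would take piecewise-uniform densities whose values lie in $[\tfrac14,\tfrac34]$, so that Assumption~\ref{ass:A1} holds with the stated constants and $r_0=2\alpha/3$; the point $x_3$ and the placement of the uniform blocks are used to fix the $\alpha/2$ and $1-\alpha/2$ quantiles. The fibre laws should differ only on a region of $Q$-mass of order $\alpha$ in the upper tail. Then the target coverage curves $C_j(q):=\Cov^{(j)}(q)$ are explicit piecewise-linear functions of $q$ near the $(1-\alpha)$-level, with the following features.
\begin{itemize}
\item $C_0(q)-C_1(q)$ has a fixed sign on the relevant interval.
\item The two curves cross the level $1-\alpha$ at different points $q_0<q_1$.
\item Their slopes there are in the ratio that produces the harmonic form $\alpha\xi/(2\alpha+\xi)$.
\end{itemize}
I would tune the parameters so that these slopes come out as $\alpha$- and $\xi$-dependent multiples of a common density value.

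\emph{Lower bound.} The law of $(\mathcal D,U)$ is the same under $j=0,1$ by (a), so $\widehat q$ has one and the same distribution $\nu$ under both kernels. For any $\lambda\in[0,1]$,
\[
\max_j \E|C_j(\widehat q)-(1-\alpha)|\ \ge\ \int\bigl[\lambda|C_0(q)-(1-\alpha)|+(1-\lambda)|C_1(q)-(1-\alpha)|\bigr]\,\mathrm d\nu(q)\ \ge\ \inf_{q\in\Rbar}g_\lambda(q),
\]
where $g_\lambda$ denotes the bracketed integrand. I would choose $\lambda$ inversely proportional to the slopes of $C_0$ and $C_1$ at the crossing region, so that $g_\lambda$ is constant on $[q_0,q_1]$. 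Outside that interval $g_\lambda$ is nondecreasing in $|q|$, including the endpoints $q=\pm\infty$, where one error equals $\alpha$ or $1-\alpha$. The constant value is then $\alpha\xi/(2\alpha+\xi)$.

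\emph{Attainment and main obstacle.} For attainment, the deterministic rule that outputs the point $q^\star\in[q_0,q_1]$ where $|C_0-(1-\alpha)|=|C_1-(1-\alpha)|$ achieves this value, so it is the exact minimax value. The main obstacle is tuning the construction so that three things hold at once: the equalised error is exactly $\alpha\xi/(2\alpha+\xi)$, Assumption~\ref{ass:A1} holds with $(\tfrac14,\tfrac34,\tfrac{2\alpha}{3})$, and $\Delta_T=\xi/2$ exactly. The constraint $\xi\le\tfrac12\min\{\alpha,1-\alpha\}$ should be exactly what keeps all masses and densities admissible. I would first fix the piecewise-linear CDFs symbolically, then solve for the masses, and only afterwards verify the local-mass bounds.
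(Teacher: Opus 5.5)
Your lower-bound mechanism is correct, and it differs from the paper's. The sketched data, and hence $\widehat q$, have the same law $\nu$ under both kernels. So bounding the maximum by the $\lambda$-average and minimising the integrand pointwise is a valid Bayes-risk bound. When both coverage curves are affine on the crossing band with slopes $s_0,s_1$, the flat value is $\frac{s_0s_1}{s_0+s_1}(q_1-q_0)$, and the equaliser threshold attains it.

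The paper argues differently. It clips $\widehat q$ to the outer band $[0,\alpha]$, where the error under one kernel is nonnegative, so its expectation is affine in $\bar a=\E a$. It then applies Jensen to the other error and solves $\min_t\max\{\alpha t,|(\alpha+\xi)t-\xi|\}$. Your route makes the duality explicit: the least favourable prior and the equaliser rule certify each other. The paper's route is a direct reduction to a scalar game. Both need the same monotonicity check outside the band.

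The gap is that the construction is never produced, and it carries all of the theorem's quantitative content. You list the features the coverage curves should have and then defer the ``tuning''. As a result, none of the following is established:
\begin{itemize}
\item the exact value $\alpha\xi/(2\alpha+\xi)$;
\item the leakage $\Delta_T=\xi/2$;
\item Assumption~\ref{ass:A1} with constants $(\tfrac14,\tfrac34,\tfrac{2\alpha}3)$.
\end{itemize}

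Two of your design heuristics also point the wrong way. First, Assumption~\ref{ass:A1} constrains each $P_{Y|X=x}$ separately, so $x_3$ cannot fix the quantiles of the fibre laws; it is inessential. Second, a partial tilt with target mass at $x_3$ only dilutes the slopes and forces $\TV(K_1,K_2)>\xi$.

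Here is one way to complete it. Take the full tilt $\QX=\delta_{x_1}$. Then under either kernel $\Delta_T=\TV\bigl(K_1,\tfrac12(K_1+K_2)\bigr)=\tfrac12\TV(K_1,K_2)$.

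Keep your score $S^\star=y$. Let $K_1$ be uniform on $[-1,1]$. Let $K_2$ have density $\tfrac12\bigl(1-\xi/(1-\alpha)\bigr)$ on $[-1,1-2\alpha]$ and $\tfrac12(1+\xi/\alpha)$ on $(1-2\alpha,1]$.

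Then $\TV(K_1,K_2)=\xi$. Both densities lie in $[\tfrac14,\tfrac34]$ precisely when $\xi\le\tfrac12\min\{\alpha,1-\alpha\}$, so Assumption~\ref{ass:A1} follows as in Lemma~\ref{lem:A1check}. For $q=1-2\alpha+2\alpha a$ with $a\in[0,1]$,
\[
C_0(q)-(1-\alpha)=\alpha a,\qquad C_1(q)-(1-\alpha)=(\alpha+\xi)a-\xi .
\]
Both errors equal $\alpha$ for $q\ge1$ and increase as $q$ decreases below $1-2\alpha$.

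Put weight $\lambda=(\alpha+\xi)/(2\alpha+\xi)$ on kernel $0$. This makes $g_\lambda\equiv\alpha\xi/(2\alpha+\xi)$ for $a\in[0,\xi/(\alpha+\xi)]$, and $g_\lambda$ is at least this value elsewhere. The point $a=\xi/(2\alpha+\xi)$ is the deterministic equaliser. This is exactly the paper's construction, with the one-sided score $y$ in place of $|y|-(1-\alpha)$.
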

\paragraph{Idea of the construction.}
The proof (Appendix~\ref{app:lb}) uses three covariate points, two of which the sketch identifies; the target sits on one of the identified points, and the two response kernels differ only by swapping the responses at the two identified points, with the uniform law $A$ on $[-1,1]$ against a perturbation $B^{(\xi)}$ that moves mass $\xi$ from the inner score band to the outer one. The sketched data then have the same law under both kernels, so every rule outputs the same distribution of thresholds under both, while the target coverages of a threshold differ by $\xi$ times a known slope. Clipping the threshold to the outer band costs nothing, and the minimax value of the resulting one-dimensional game is $\alpha\xi/(2\alpha+\xi)$, attained by a constant threshold. The restriction $\xi\le\tfrac12\min\{\alpha,1-\alpha\}$ is what keeps both kernels inside Assumption~\ref{ass:A1} (Lemma~\ref{lem:A1check}).

\begin{remark}[The threshold restriction is essential]\label{rem:nonthreshold}
Theorem~\ref{thm:lb} is a statement about threshold rules for the prescribed score $S^\star$, and it does not extend to arbitrary set-valued rules. In the construction of Appendix~\ref{app:lb}, where $A$ is the uniform law on $[-1,1]$ and the score $S^\star(x,y)=|y|-(1-\alpha)$ splits it into the bands $B_0=\{|y|\le1-\alpha\}$ and $B_1=\{1-\alpha<|y|\le1\}$ of mass $1-\alpha$ and $\alpha$, the set $C$ consisting of a $(1-\alpha)$ fraction of $B_0$ together with a $(1-\alpha)$ fraction of $B_1$ (for instance $C=\{|y|\le(1-\alpha)^2\}\cup\{1-\alpha<|y|\le1-\alpha+\alpha(1-\alpha)\}$) has $A(C)=B(C)=1-\alpha$, so a rule that outputs $C$ has zero error under both kernels. Such a set is not a sublevel set of $S^\star$; it uses knowledge of where the two kernels differ. The lower bound therefore quantifies the price of the \emph{calibration} step of split conformal prediction, where the score is fixed, which is the setting of Theorem~\ref{thm:price}(b) and of \cite{isaev2026}.
\end{remark}

\paragraph{Upper and lower bounds together.}
For the oracle surrogate threshold $q^\ast$ at which the surrogate coverage is exactly $1-\alpha$ (which exists whenever the sketched score is continuous under $\Qt$; under the bounded-ratio Assumption~\ref{ass:bdd}, Corollary~\ref{cor:profile} shows that the sketched rule's surrogate coverage approaches $1-\alpha$ as $m\to\infty$, whether or not $q^\ast$ is unique), the total-variation step of Theorem~\ref{prop:main} gives the two-sided bound
$\abs{Q^T\{S_Z\le q^\ast\}-(1-\alpha)}\le\Delta_T$.
Theorem~\ref{thm:lb} says that, for threshold calibration of a prescribed score, no rule can guarantee worst-case error below $\alpha\Delta_T/(\alpha+\Delta_T)$. In the infinite-calibration limit the price of sketching is therefore between $\alpha\Delta_T/(\alpha+\Delta_T)$ and $\Delta_T$, which agree to first order as $\Delta_T/\alpha\to0$. At finite $m$ the sketched rule's error is bounded by Corollary~\ref{cor:profile}, which adds the calibration term; the one-sided guarantee \eqref{eq:val} is not an upper bound on the absolute error, since overcoverage is possible even when $\Delta_T=0$.

\begin{remark}
The construction is the two-carrier experiment of Theorem~\ref{thm:price}(b) with the target atom doubled inside a fibre of the sketch. That theorem's $\sqrt{(1+\chi^2)/m}$ is a finite-sample lower bound on the expected calibration error that vanishes as $m\to\infty$; the leakage bound is a \emph{bias} lower bound that does not. Sketched calibration trades one for the other, and Theorem~\ref{prop:main} and Theorem~\ref{thm:lb} together say the trade is $\chisqZ$ against $\Delta_T$.
\end{remark}
\section{Experiments}\label{sec:exp}

\begin{figure}[!t]\centering
\includegraphics[width=\linewidth]{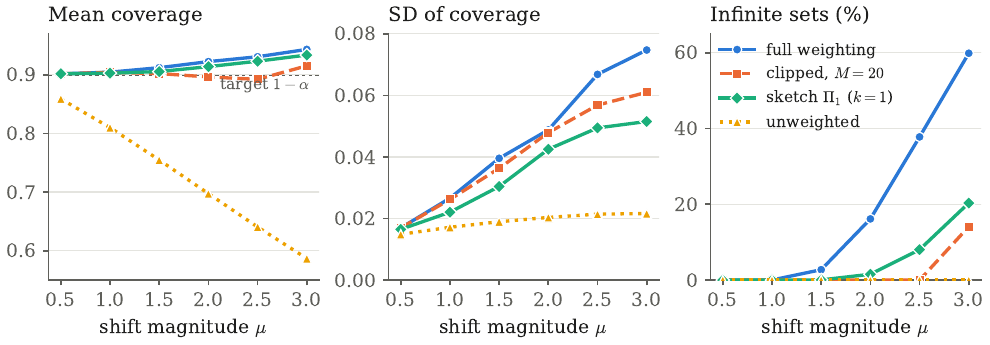}
\caption{A one-dimensional informed sketch keeps the coverage of full weighting with far fewer infinite sets, while unweighted calibration undercovers. E1: heteroscedastic single-index model, shift with component $0.7$ along the index, $200$ replications per point, $m=500$. Left: mean realised coverage. Middle: standard deviation of the realised coverage across replications (a measure of variability, not of uncertainty in the means, whose Monte Carlo standard errors are below $0.006$). Right: percentage of test points whose prediction set is $\R$. The full-space divergence $\chisq=e^{\mu^2}-1$ is $0.3,1.7,8.5,54,517,8100$ at $\mu=0.5,\dots,3$; the sketched divergence of $\Pi_1$ is $0.13,0.63,2.0,6.1,20,81$. The eight-dimensional sketch is omitted for legibility and reported in Table~\ref{tab:hetero}.}\label{fig:hetero}
\end{figure}

Experiments E1--E6 are in the Gaussian single-index model of Example~\ref{ex:gauss} or its heteroscedastic variant, with $d=40$, $n=2000$ training and $m=500$ calibration points, $\alpha=0.1$, $\norm\beta=3$, split conformal with the absolute-residual score $|y-\hat\mu(z)|$ on a least-squares fit in the sketched space, and thresholds computed by Algorithm~\ref{alg:wcp} exactly as stated. Target coverage is estimated on $2\cdot10^4$ test points per replication. Every table reports, per rule: \emph{Coverage}, the mean realised target coverage over replications; \emph{SD}, the standard deviation of the realised coverage across replications, which includes training, sketch and test-sample variability as well as calibration noise; \emph{Median length}, the median over replications of the within-replication median of $2\widehat Q^w(z)$ over test points, with infinite thresholds included (it is finite whenever fewer than half the thresholds are infinite, and is reported as $\infty$ otherwise); and \emph{Infinite sets}, the mean percentage of test points whose prediction set is $\R$. The rules are \emph{full weighting} (exact $w$, $\Pi=I$), \emph{clipped} ($w\wedge20$), \emph{sketch} $\Pi_1=\hat\beta^{\!\top}/\norm{\hat\beta}$ (one row, $\hat\beta$ the least-squares coefficient from the training fold), \emph{sketch} $\Pi_8=[\hat\beta/\norm{\hat\beta};\ 7\text{ random rows}]$, and \emph{unweighted}. Known weights are the closed forms of Proposition~\ref{prop:gauss}. Since $w$ is unbounded in these models, the SD column tests the heuristic that $\chisqZ$ governs calibration variability rather than a theorem; the comparison is between rules at equal $m$. Full details are in Appendix~\ref{app:exp}; the code, results and logs that reproduce every table and figure are in the public repository and in the supplementary material\footnote{\url{https://github.com/mehrdad-pournaderi/Sketched-Calibration-for-Conformal-Prediction-under-Covariate-Shift}}. We present first the setting in which weighting is necessary (E1) and its estimated-ratio version (E2), then three controlled experiments that isolate the mechanisms (E3--E5), the full-score variant (E6), a nonlinear model in which the sketch must be learned (E7), and two real data sets with controlled covariate-dependent sampling (E8).

\paragraph{E1: informed sketches where weighting is necessary.}
Figure~\ref{fig:hetero} and Table~\ref{tab:hetero} use the heteroscedastic model $\sigma(x)=0.5+0.5\,(\langle\beta,x\rangle/\norm\beta)_+$ with a shift direction $a$ that has component $0.7$ along $\beta/\norm\beta$ and a random orthogonal component, so that the shift changes the residual law and weighting is necessary: unweighted coverage falls from $0.86$ at $\mu=0.5$ to $0.64$ at $\mu=2.5$. Full weighting is valid on average at every $\mu$, but at $\mu=2.5$, where $\chisq=517$, it returns the whole real line for $37.8\%$ of test points and has coverage SD $0.067$, which is what an effective calibration size of $500/518\approx1$ looks like; at $\mu=3$ the fraction is $60\%$ and the median set is $\R$. Clipping at $M=20$ removes the infinite sets up to $\mu=2.5$ but at that shift produces coverage below $0.85$ in $21\%$ of replications, mean coverage $0.892$, and the longest intervals; its bias upper bound $\E[(w-20)_+]$ is positive at every $\mu\ge1.5$. The one-dimensional sketch $\Pi_1$ pays only for the component of the shift along the estimated index: at $\mu=2.5$ its divergence is $\chisqZ=20.3$ against $517$, it returns infinite sets for $8.0\%$ of test points against $37.8\%$, its coverage SD is $0.049$ against $0.067$, its mean coverage is $0.923$, and its fraction of replications with coverage below $0.85$ is $0.07$ (full weighting $0.06$, clipped $0.21$, unweighted $1.00$). The oracle index is sufficient for $Y$ in this model, so an oracle sketch along $\beta$ would have zero leakage by Theorem~\ref{prop:main}(iv)(b); the estimated sketch has leakage through the error of $\hat\beta$, at most $0.960\,\mu\norm\beta\sin\angle(\beta,\hat\beta)$ by Proposition~\ref{prop:lip}. For a typical least-squares fit at these sizes, $\norm{\hat\beta-\beta}^2\approx d\,\E[\sigma(X)^2]/n\approx0.011$ and $\sin\angle(\beta,\hat\beta)\approx0.035$, giving approximate allowances of $0.10$, $0.15$ and $0.25$ at $\mu=1,1.5,2.5$; the realised coverages are far inside them. Adding seven random rows ($\Pi_8$) raises $\chisqZ$ to $38.1$ and the infinite-set rate to $12.8\%$ at $\mu=2.5$ without visibly changing coverage: for an oracle index, extra directions that carry shift but no signal are pure cost (Corollary~\ref{cor:informed}(a)); for an estimated index they can retain part of the discarded signal and reduce leakage, an effect too small to see here. The unweighted intervals are the shortest in every row of Table~\ref{tab:hetero} and the least valid, which is why length is never read without coverage.

\begin{table}[!htbp]\centering\footnotesize\setlength{\tabcolsep}{5pt}
\begin{tabular}{lccccc}
\toprule
rule & $\chisqZ$ & Coverage & SD & Median length & Infinite sets (\%)\\
\midrule
\multicolumn{6}{l}{$\mu=1.0$,\quad $\chisq=1.7$}\\
full weighting & -- & 0.905 & 0.027 & 3.28 & 0.0\\
clipped, $M=20$ & -- & 0.904 & 0.026 & 3.28 & 0.0\\
sketch $\Pi_1$ & 0.63 & 0.903 & 0.022 & 3.31 & 0.0\\
sketch $\Pi_8$ & 0.79 & 0.903 & 0.024 & 3.29 & 0.0\\
unweighted & -- & 0.810 & 0.017 & 2.39 & 0.0\\
\addlinespace
\multicolumn{6}{l}{$\mu=1.5$,\quad $\chisq=8.5$}\\
full weighting & -- & 0.912 & 0.040 & 3.66 & 2.7\\
clipped, $M=20$ & -- & 0.902 & 0.036 & 3.65 & 0.0\\
sketch $\Pi_1$ & 2.01 & 0.906 & 0.030 & 3.76 & 0.0\\
sketch $\Pi_8$ & 2.71 & 0.906 & 0.034 & 3.75 & 0.2\\
unweighted & -- & 0.755 & 0.019 & 2.39 & 0.0\\
\addlinespace
\multicolumn{6}{l}{$\mu=2.5$,\quad $\chisq=517$}\\
full weighting & -- & 0.931 & 0.067 & 4.56 & 37.8\\
clipped, $M=20$ & -- & 0.892 & 0.057 & 5.03 & 0.0\\
sketch $\Pi_1$ & 20.3 & 0.923 & 0.049 & 4.64 & 8.0\\
sketch $\Pi_8$ & 38.1 & 0.927 & 0.051 & 4.59 & 12.8\\
unweighted & -- & 0.641 & 0.021 & 2.39 & 0.0\\
\bottomrule
\end{tabular}
\caption{An informed one-dimensional sketch keeps the coverage of full weighting while returning a fifth as many infinite sets; at the largest shift clipping has coverage below $0.85$ in a fifth of replications; unweighted calibration fails. E1: heteroscedastic model, $\langle a,\beta\rangle/\norm\beta=0.7$, $200$ replications. SD is the standard deviation of the realised coverage across replications. At $\mu=2.5$ the fraction of replications with coverage below $0.85$ is $0.06$ (full), $0.21$ (clipped), $0.07$ ($\Pi_1$), $0.05$ ($\Pi_8$), $1.00$ (unweighted). $\chisqZ$ for $\Pi_8$ is averaged over its random rows.}\label{tab:hetero}
\end{table}

\FloatBarrier
\paragraph{E2: estimated ratios.}
Table~\ref{tab:est} repeats E1 at $\mu\in\{1.5,2.5\}$ with a different random orthogonal component of $a$ and the ratio estimated by logistic regression of source-versus-target labels on $2000$ unlabeled covariates from each population (a sample independent of the calibration fold), either in $\R^{40}$ or in the sketched space; the estimated odds are used directly as weights, which is legitimate by scale invariance. The last column reports the normalised $L^1$ error $\tfrac12\E_{P_Z}|\bar w_Z-w_Z|$ of Proposition~\ref{prop:est}, computed in closed form: for a fitted logit $\hat f(z)=\hat b^{\!\top}z+\hat c$ and $P_Z=N(0,I_k)$, the normalised estimate is $\bar w_Z(z)=\exp(\hat b^{\!\top}z-\norm{\hat b}^2/2)$, the ratio of $N(\hat b,I_k)$ to $N(0,I_k)$, so the error is $\TV(N(\hat b,I_k),N(\mu\Pi a,I_k))$, which after rotating coordinates so that $\hat b-\mu\Pi a$ lies along the first axis reduces to the one-dimensional Lemma~\ref{lem:gausstv} and equals $2\Phi(\norm{\hat b-\mu\Pi a}/2)-1$ (a plug-in evaluation on a finite source sample would be dominated by the sampling error of the normaliser). Logistic regression is well specified for a Gaussian mean shift, so even the $40$-dimensional estimate is accurate enough not to move coverage at these sample sizes; the estimation error term is nevertheless an order of magnitude smaller in the one-dimensional sketched space ($0.012$ versus $0.102$ at $\mu=1.5$, $0.017$ versus $0.149$ at $\mu=2.5$), and the sketched rules keep their lower SD and far lower infinite-set rate with estimated weights: the benefit of E1 survives ratio estimation.

\begin{table}[!htbp]\centering\footnotesize\setlength{\tabcolsep}{5pt}
\begin{tabular}{lccccc}
\toprule
rule & Coverage & SD & Median length & Infinite sets (\%) & $\tfrac12\E|\bar w_Z-w_Z|$\\
\midrule
\multicolumn{6}{l}{$\mu=1.5$,\quad $\chisq=8.5$}\\
full weighting, known $w$ & 0.913 & 0.042 & 3.72 & 2.7 & --\\
full weighting, estimated $w$ ($d=40$) & 0.912 & 0.043 & 3.72 & 2.9 & 0.102\\
sketch $\Pi_1$, known $w_Z$ & 0.903 & 0.032 & 3.75 & 0.0 & --\\
sketch $\Pi_1$, estimated $w_Z$ & 0.902 & 0.032 & 3.77 & 0.1 & 0.012\\
sketch $\Pi_8$, estimated $w_Z$ & 0.906 & 0.034 & 3.78 & 0.2 & 0.040\\
\addlinespace
\multicolumn{6}{l}{$\mu=2.5$,\quad $\chisq=517$}\\
full weighting, known $w$ & 0.938 & 0.051 & 4.74 & 37.7 & --\\
full weighting, estimated $w$ ($d=40$) & 0.934 & 0.051 & 4.66 & 37.4 & 0.149\\
sketch $\Pi_1$, known $w_Z$ & 0.920 & 0.044 & 4.52 & 7.9 & --\\
sketch $\Pi_1$, estimated $w_Z$ & 0.920 & 0.044 & 4.54 & 7.9 & 0.017\\
sketch $\Pi_8$, estimated $w_Z$ & 0.924 & 0.056 & 4.56 & 12.6 & 0.052\\
\bottomrule
\end{tabular}
\caption{The advantage of the informed sketch survives estimation of the ratio, and the normalised estimation error is an order of magnitude smaller for the one-dimensional sketch (about threefold smaller for the eight-dimensional one). E2: heteroscedastic model of E1 with a different random shift component, $200$ replications; ratios estimated by logistic regression on $2000+2000$ unlabeled covariates. SD as in Table~\ref{tab:hetero}. The last column is the closed-form normalised $L^1$ error of Proposition~\ref{prop:est}, averaged over replications.}\label{tab:est}
\end{table}

\FloatBarrier
\paragraph{E3: a random sketch shows all three effects.}
Table~\ref{tab:random} uses the homoscedastic model ($\sigma=1$), a random orthonormal $\Pi$ with $k=8$, a random unit shift direction $a$ and $\mu=1.5$, over $300$ replications. The closed-form leakage is $\Delta_\Pi=0.035$, so Theorem~\ref{prop:main} guarantees coverage $\ge0.865$; the realised value is $0.901$. The three effects of Section~\ref{sec:three} are all visible. Weight variability: the sketched rule's SD is $0.018$ against $0.030$ for full weighting, consistent with the drop in $\chi^2$ from $8.5$ to $0.61$, and it returns no infinite sets where full weighting returns $2.8\%$. Coverage loss: no undercoverage relative to the nominal $0.90$ (mean coverage $0.901$ against $0.909$ for full weighting), because the discarded shift is nearly orthogonal to the discarded signal and the leakage is only $0.035$. Interval width: the random sketch discards $\norm{\beta_\perp}^2=7.99$ of signal variance against $\sigma^2=1$, and the sketched median length $9.97$ matches the sketched oracle $2z_{0.95}\sqrt{8.99}=9.86$, three times the full oracle $3.29$ that full weighting attains. The shift-aware sketch, which appends $a_\perp/\norm{a_\perp}$ to $\Pi$, recovers the full ratio and the full $\chi^2$ but not the signal, and so combines the variability of full weighting with the width of the random sketch. In this homoscedastic model the residual law of the full oracle predictor is invariant to the shift, and the residual of the sketched oracle shifts in mean only by $\mu\langle\beta_\perp,a_\perp\rangle$, which is small here (leakage $0.035$), so the unweighted rules are approximately valid as well; E3 isolates the mechanisms, not the need for weights.

\begin{table}[!htbp]\centering\footnotesize\setlength{\tabcolsep}{5pt}
\begin{tabular}{lccccc}
\toprule
rule & $\chi^2$ in use & Coverage & SD & Median length & Infinite sets (\%)\\
\midrule
full weighting ($\Pi=I$) & 8.49 & 0.909 & 0.030 & 3.34 & 2.8\\
random sketch, $k=8$, weighted & 0.61 & 0.901 & 0.018 & 9.97 & 0.0\\
shift-aware sketch, $k=9$, weighted & 8.49 & 0.904 & 0.033 & 9.73 & 2.8\\
random sketch, $k=8$, unweighted & -- & 0.899 & 0.014 & 9.93 & 0.0\\
unweighted ($\Pi=I$) & -- & 0.902 & 0.012 & 3.35 & 0.0\\
\bottomrule
\end{tabular}
\caption{A random sketch reduces the coverage SD of full weighting by about $40\%$ and removes its infinite sets, but triples the interval length by discarding signal. E3: homoscedastic model, random sketch and random shift direction, $\mu=1.5$, $300$ replications; guarantee for the sketched weighted rule: coverage $\ge1-\alpha-\Delta_\Pi=0.865$. SD as in Table~\ref{tab:hetero}.}\label{tab:random}
\end{table}

\FloatBarrier
\paragraph{E4: the adversarial case, shift along the signal.}
Table~\ref{tab:signal} takes $a=\beta/\norm\beta$ in the homoscedastic model, so that the discarded shift is perfectly correlated with the discarded signal, and sweeps the dimension $k$ of a random sketch ($\mu=1.5$, $150$ replications). Every row satisfies coverage $\ge1-\alpha-\Delta_\Pi$, and the total-variation bound is visibly loose: it is a worst case over all measurable sets (for Gaussian conditionals it is attained by a half-line), and it need not be sharp for the symmetric residual-score intervals used here. A random sketch with $k\ll d$ is useless here, exactly as Proposition~\ref{prop:gauss} predicts; appending the single direction $\beta/\norm\beta$ restores the full density ratio and zero leakage at any $k$ (not shown).

\begin{table}[!htbp]\centering\footnotesize\setlength{\tabcolsep}{2.4pt}
\begin{tabular}{rcccc|cccc|cc}
\toprule
 & & & & & \multicolumn{4}{c|}{random sketch, weighted} & \multicolumn{2}{c}{full weighting}\\
$k$ & $\norm{a_\parallel}^2$ & $\chisqZ$ & $\Delta_\Pi$ & $1-\alpha-\Delta_\Pi$ & Coverage & SD & Med.\ length & Inf.\ (\%) & Coverage & Med.\ length\\
\midrule
 2 & 0.055 & 0.13 & 0.510 & 0.390 & 0.608 & 0.031 & 10.19 & 0.0 & 0.905 & 3.28\\
 5 & 0.092 & 0.23 & 0.500 & 0.400 & 0.620 & 0.031 & 10.00 & 0.0 & 0.903 & 3.27\\
10 & 0.121 & 0.31 & 0.492 & 0.408 & 0.629 & 0.030 & 9.90 & 0.0 & 0.910 & 3.36\\
20 & 0.483 & 1.96 & 0.375 & 0.525 & 0.751 & 0.037 & 7.88 & 0.0 & 0.904 & 3.28\\
30 & 0.780 & 4.78 & 0.226 & 0.674 & 0.849 & 0.036 & 5.68 & 0.9 & 0.909 & 3.36\\
40 & 1.000 & 8.49 & 0.000 & 0.900 & 0.908 & 0.029 & 3.30 & 2.8 & 0.908 & 3.30\\
\bottomrule
\end{tabular}
\caption{When the shift is aligned with the signal, a random sketch leaks and undercovers, and among the random sketches shown only the full-dimensional one attains nominal coverage (an informed one-dimensional sketch along $\beta$ would be valid in the same model); the coverage guarantee $1-\alpha-\Delta_\Pi$ holds in every row but is loose. E4: homoscedastic model, $a=\beta/\norm\beta$, $\mu=1.5$, $\chisq=8.49$, random sketch of dimension $k$, $150$ replications. Full weighting returns infinite sets for $2.8\%$ of test points in every row. SD as in Table~\ref{tab:hetero}.}\label{tab:signal}
\end{table}

\FloatBarrier
\paragraph{E5: a control where weighting is not needed.}
Table~\ref{tab:homo} repeats E1 in the homoscedastic model with a random shift direction. For the oracle predictor the residual law there does not depend on $x$; for the fitted predictor the residual $Y-\hat\mu(X)=(\beta-\hat\beta)^{\!\top}X-\hat\beta_0+\sigma\xi$ does change under the shift, but by an amount negligible at $n=2000$, which is why the unweighted rows are valid to three decimals. The experiment is included because it isolates the stability of the different weightings from the need for weights: full weighting still returns the whole real line for $38.2\%$ of test points at $\mu=2.5$ with coverage SD $0.079$, and clipping still lengthens the intervals, although here it costs no coverage; while the one-dimensional sketch has coverage $0.901$, SD $0.014$, equal to the unweighted rule, which is the no-shift floor, oracle length, and no infinite sets, because the shift has almost no component along the index ($\chisqZ\le0.13$). It shows that the advantage of $\Pi_1$ in E1 is not an artefact of the heteroscedastic noise.

\begin{table}[!htbp]\centering\footnotesize\setlength{\tabcolsep}{5pt}
\begin{tabular}{lccccc}
\toprule
rule & $\chisqZ$ & Coverage & SD & Median length & Infinite sets (\%)\\
\midrule
\multicolumn{6}{l}{$\mu=1.0$,\quad $\chisq=1.7$}\\
full weighting & -- & 0.903 & 0.021 & 3.34 & 0.0\\
clipped, $M=20$ & -- & 0.903 & 0.021 & 3.34 & 0.0\\
sketch $\Pi_1$ & 0.02 & 0.902 & 0.014 & 3.35 & 0.0\\
sketch $\Pi_8$ & 0.22 & 0.902 & 0.015 & 3.35 & 0.0\\
unweighted & -- & 0.901 & 0.014 & 3.34 & 0.0\\
\addlinespace
\multicolumn{6}{l}{$\mu=1.5$,\quad $\chisq=8.5$}\\
full weighting & -- & 0.907 & 0.031 & 3.31 & 2.8\\
clipped, $M=20$ & -- & 0.906 & 0.027 & 3.34 & 0.0\\
sketch $\Pi_1$ & 0.05 & 0.902 & 0.013 & 3.35 & 0.0\\
sketch $\Pi_8$ & 0.59 & 0.902 & 0.016 & 3.35 & 0.0\\
unweighted & -- & 0.901 & 0.014 & 3.34 & 0.0\\
\addlinespace
\multicolumn{6}{l}{$\mu=2.5$,\quad $\chisq=517$}\\
full weighting & -- & 0.920 & 0.079 & 3.60 & 38.2\\
clipped, $M=20$ & -- & 0.923 & 0.036 & 3.90 & 0.0\\
sketch $\Pi_1$ & 0.13 & 0.901 & 0.014 & 3.35 & 0.0\\
sketch $\Pi_8$ & 3.15 & 0.903 & 0.021 & 3.35 & 0.6\\
unweighted & -- & 0.901 & 0.014 & 3.34 & 0.0\\
\bottomrule
\end{tabular}
\caption{When the shift does not change the residual law, the one-dimensional sketch matches the unweighted rule's variability and length with no infinite sets, while full weighting still returns the whole real line for $38\%$ of test points at the largest shift. E5: homoscedastic model, $\sigma=1$, random shift direction, oracle length $3.29$, mean leakage of both sketches $\le0.017$, $200$ replications. SD as in Table~\ref{tab:hetero}; $\chisqZ$ for $\Pi_8$ averaged over its random rows.}\label{tab:homo}
\end{table}

\FloatBarrier

\paragraph{E6: sketched weights with a full-covariate score.}
Table~\ref{tab:fullscore} revisits the settings of E1 and E3 with the rule of Definition~\ref{def:fullscore}: the predictor is the least-squares fit on all $40$ covariates, and only the weights are sketched, by $\Pi_1$, by $\Pi_8$, by the random $k=8$ sketch of E3, and by the residual sketch of Section~\ref{sec:choose} ($\Pi_{\mathrm{res}}$: the least-squares direction of the cross-fitted absolute residuals on $X$, one row). In the homoscedastic setting the full-score rule with the random sketch has coverage $0.899$ and median length $3.31$, against $9.93$ for the sketched-score rule with the same sketch: the width penalty of E3 is gone, as Proposition~\ref{prop:fullgauss} predicts (the misfit-driven leakage bound there evaluates to $0.009$), and $\chisqZ=0.61$ instead of $8.5$. In the heteroscedastic setting the same random sketch undercovers ($0.804$ at $\mu=1.5$, $0.744$ at $\mu=2.5$), because the residual law depends on the index direction, which the random sketch discards; the sketch must retain the direction along which the noise scale varies, and $\Pi_1$ and $\Pi_8$, which do, reproduce the behaviour of E1 with no loss of length. The residual sketch shows the difficulty of learning where the noise scale varies: at $n=2000$ its direction has angular error $\sin\angle(\Pi_{\mathrm{res}},\beta)\approx0.33$, against $0.035$ for the least-squares index of the mean, and it undercovers by about one point at $\mu=1.5$ ($0.890$) while at $\mu=2.5$ it reaches $0.899$ with $6.1\%$ infinite sets against $37.6\%$ for full weighting. E7 shows the angular error shrinking with the training size.

\begin{table}[!htbp]\centering\footnotesize\setlength{\tabcolsep}{5pt}
\begin{tabular}{llccccc}
\toprule
score & weights & $\chisqZ$ & Coverage & SD & Median length & Infinite sets (\%)\\
\midrule
\multicolumn{7}{l}{(a) heteroscedastic (setting of E1), $\mu=1.5$, $\chisq=8.5$}\\
full & full weighting & -- & 0.907 & 0.039 & 3.62 & 2.7\\
full & unweighted & -- & 0.756 & 0.021 & 2.40 & 0.0\\
sketched & $\Pi_1$ & 2.01 & 0.901 & 0.032 & 3.75 & 0.1\\
sketched & random $k=8$ & 0.83 & 0.787 & 0.026 & 8.73 & 0.0\\
full & $\Pi_1$ & 2.01 & 0.901 & 0.032 & 3.75 & 0.1\\
full & $\Pi_8$ & -- & 0.901 & 0.034 & 3.69 & 0.2\\
full & $\Pi_{\mathrm{res}}$ & 1.73 & 0.890 & 0.032 & 3.58 & 0.0\\
full & random $k=8$ & 0.83 & 0.804 & 0.031 & 2.72 & 0.0\\
\addlinespace
\multicolumn{7}{l}{(a) heteroscedastic, $\mu=2.5$, $\chisq=517$}\\
full & full weighting & -- & 0.931 & 0.052 & 4.64 & 37.6\\
full & unweighted & -- & 0.641 & 0.023 & 2.40 & 0.0\\
sketched & $\Pi_1$ & 20.4 & 0.917 & 0.044 & 4.48 & 8.1\\
sketched & random $k=8$ & 4.33 & 0.631 & 0.050 & 8.75 & 0.7\\
full & $\Pi_1$ & 20.4 & 0.917 & 0.044 & 4.48 & 8.1\\
full & $\Pi_8$ & -- & 0.919 & 0.049 & 4.41 & 12.9\\
full & $\Pi_{\mathrm{res}}$ & 16.0 & 0.899 & 0.049 & 4.27 & 6.1\\
full & random $k=8$ & 4.33 & 0.744 & 0.059 & 2.97 & 0.7\\
\addlinespace
\multicolumn{7}{l}{(b) homoscedastic (setting of E3), $\mu=1.5$, $\chisq=8.5$}\\
full & full weighting & -- & 0.905 & 0.031 & 3.28 & 2.7\\
full & unweighted & -- & 0.899 & 0.014 & 3.32 & 0.0\\
sketched & $\Pi_1$ & 0.00 & 0.900 & 0.014 & 3.33 & 0.0\\
sketched & random $k=8$ & 0.61 & 0.900 & 0.016 & 9.93 & 0.0\\
full & $\Pi_1$ & 0.00 & 0.900 & 0.014 & 3.33 & 0.0\\
full & $\Pi_{\mathrm{res}}$ & 0.06 & 0.900 & 0.014 & 3.33 & 0.0\\
full & random $k=8$ & 0.61 & 0.899 & 0.017 & 3.31 & 0.0\\
\bottomrule
\end{tabular}
\caption{Keeping the predictor on all covariates and sketching only the weights removes the width penalty of a random sketch (length $3.31$ against $9.93$ in (b)) while the sketch must still retain the direction along which the residual law varies (random sketches undercover in (a)); the learned residual direction undercovers by about one point at $n=2000$. E6: $200$ replications; ``score'' says whether the predictor is fitted on the sketch or on all covariates. SD as in Table~\ref{tab:hetero}; $\chisqZ$ is averaged over replications for the learned sketches and omitted for $\Pi_8$.}\label{tab:fullscore}
\end{table}

\FloatBarrier

\paragraph{E7: a nonlinear multiple-index model with learned sketches.}
The Gaussian single-index model hands the sufficient direction to the sketch; E7 asks what happens when it must be learned, and when the predictor is a flexible learner whose misfit is not negligible. Covariates are $X\sim N(0,I_{40})$ under $P$ and $N(\mu a,I_{40})$ under $Q$, the response is
\[
Y=2\sin\langle b_1,X\rangle+\tfrac12\langle b_2,X\rangle^2+\sigma(\langle b_3,X\rangle)\,\xi,\qquad\sigma(t)=0.5+0.5\,t_+,
\]
with $b_1,b_2,b_3$ orthonormal, and the shift direction $a=0.6b_3+0.5b_1+\sqrt{0.39}\,g$ has components along the noise-scale direction, along a mean direction, and along a random direction $g$ orthogonal to all three. The predictor is gradient boosting on all $40$ covariates; all sketched rules are full-score rules (Definition~\ref{def:fullscore}). The \emph{response} law depends on $x$ through the three indices $(\langle b_1,x\rangle,\langle b_2,x\rangle,\langle b_3,x\rangle)$ only, so $[b_1,b_2,b_3]$ is a response-sufficient oracle sketch (the \emph{response-index oracle}) with $\chisqZ=e^{0.61\mu^2}-1$ ($2.95$ at $\mu=1.5$, $44$ at $\mu=2.5$, against $8.5$ and $517$); the fitted predictor, however, uses all $40$ coordinates, so the score $|y-\hat\mu(x)|$ can depend on the remaining coordinates through $\hat\mu$, and Theorem~\ref{thm:fullscore} does not make this oracle leakage-free. The direction $b_3$ alone is sufficient for the noise scale but not for the misfit. The learned candidates are the residual sketch of Section~\ref{sec:choose} ($k=1$, and $k=2$ by sliced inverse regression), a mean-plus-residual sketch that appends the residual direction to the two leading sliced-inverse-regression directions of $Y$ on $X$ ($k=3$), the shift direction estimated by logistic regression on $2000+2000$ unlabelled covariates ($k=1$), and a random sketch ($k=8$). Selection uses $n_t=100$ labelled target points and Proposition~\ref{prop:select} over nine candidates: full weighting, clipping, the unweighted rule, the five learned sketches just listed, and a two-dimensional sketch combining the residual and shift directions, which behaves like the shift direction alone and is omitted from Table~\ref{tab:nonlinear}. Sketched ratios are the closed forms for the realised sketches, so only the sketches are learned. Figure~\ref{fig:nonlinear} varies the training size $n\in\{1000,4000,16000\}$ at $m=500$, and Table~\ref{tab:nonlinear} reports $n=4000$ in full together with a smaller calibration set.

\begin{figure}[!t]\centering
\includegraphics[width=\linewidth]{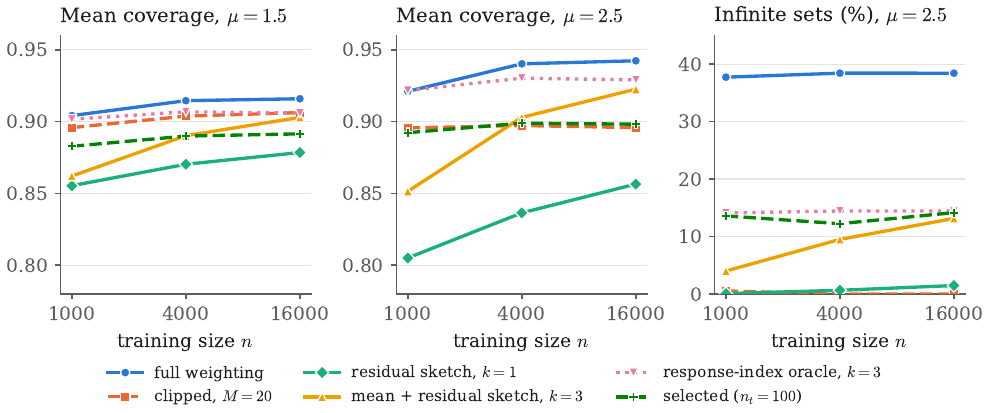}
\caption{Learned sketches improve with the training size and approach the response-index oracle, while a sketch that captures only the noise direction cannot remove the misfit leakage. E7: nonlinear multiple-index model, gradient-boosting predictor on all covariates, $m=500$, $100$ replications per point. Left and middle: mean realised coverage at $\mu=1.5$ and $\mu=2.5$ (dashed line at $0.90$). Right: percentage of test points with an infinite set at $\mu=2.5$. The residual sketch has angular error $\sin\angle(\hat b,b_3)=0.81,0.50,0.28$ at $n=1000,4000,16000$; the estimated shift direction has angular error $\approx0.15$ at every $n$.}\label{fig:nonlinear}
\end{figure}

Three things happen. First, the learned sketches improve with $n$: the residual direction's angular error falls from $0.81$ to $0.28$, and the three-dimensional mean-plus-residual sketch goes from undercoverage ($0.862$ and $0.851$ at $n=1000$) to $0.903$ and $0.923$ at $n=16000$, where it nearly matches the response-index oracle ($0.906$, $0.929$) while returning infinite sets for $13\%$ of test points at $\mu=2.5$ against $38\%$ for full weighting. The oracle exceeds $0.90$ by $0.6$ and $2.9$ points; since weighted conformal prediction can over-cover with zero leakage through finite-sample conservativeness, including the test-point mass at $+\infty$, this experiment does not separate the oracle's possible score leakage from that conservativeness. Second, the one-dimensional sketches fail in an instructive way: the residual direction alone undercovers at every $n$ (by $4$ to $5$ points at $\mu=2.5$ even at $n=16000$), and so does the \emph{oracle} noise direction $b_3$ ($0.873$ at $\mu=2.5$, $n=16000$), because the gradient-boosting misfit varies along $b_1$, which carries part of the shift ($\langle a,b_1\rangle=0.5$; $b_2$ is orthogonal to $a$); this is the misfit term of Proposition~\ref{prop:fullgauss} in a non-Gaussian guise, and it is why the default procedure of Section~\ref{sec:choose} appends mean directions. The estimated shift direction, by contrast, is valid at every $n$ but at essentially the full price ($\chisqZ\approx455$ at $\mu=2.5$), since a sketch containing the shift direction retains the whole shift. Third, selection with $100$ target labels lands within two points of $0.90$ in every configuration ($0.883$--$0.901$), with roughly a third of the infinite sets of full weighting and a smaller coverage SD; its gap to the best candidate in hindsight is at most $1.4$ points. With $m=200$ (Table~\ref{tab:nonlinear}, last block) full weighting returns the whole real line for more than half of the test points and clipping over-covers at $0.961$, while the learned three-dimensional sketch and the selected rule keep coverage near $0.92$ and $0.90$ with $22$--$24\%$ infinite sets.

\begin{table}[!htbp]\centering\footnotesize\setlength{\tabcolsep}{5pt}
\begin{tabular}{lcccc}
\toprule
rule & Coverage & SD & Median length & Infinite sets (\%)\\
\midrule
\multicolumn{5}{p{0.96\linewidth}}{$\mu=1.5$, $n=4000$, $m=500$, $\chisq=8.5$;\ $\chisqZ$: residual $k{=}1$ 0.7, mean$+$residual 2.3, random 0.6, oracle $b_3$ 1.25, oracle $[b_1,b_2,b_3]$ 2.95}\\
full weighting & 0.915 & 0.037 & 4.78 & 2.8\\
clipped, $M=20$ & 0.904 & 0.034 & 4.71 & 0.0\\
unweighted & 0.820 & 0.015 & 3.71 & 0.0\\
residual sketch, $k=1$ & 0.870 & 0.022 & 4.30 & 0.0\\
residual sketch (SIR), $k=2$ & 0.871 & 0.022 & 4.29 & 0.0\\
mean (SIR on $Y$, $k=2$) $+$ residual sketch, $k=3$ & 0.890 & 0.029 & 4.51 & 0.1\\
estimated shift direction, $k=1$ & 0.911 & 0.037 & 4.65 & 2.5\\
random sketch, $k=8$ & 0.836 & 0.023 & 3.85 & 0.0\\
oracle noise direction $b_3$, $k=1$ & 0.889 & 0.022 & 4.53 & 0.0\\
response-index oracle $[b_1,b_2,b_3]$, $k=3$ & 0.907 & 0.032 & 4.74 & 0.2\\
selected by Prop.~\ref{prop:select} ($n_t=100$) & 0.890 & 0.026 & 4.44 & 0.9\\
best candidate in hindsight & 0.895 & 0.014 & 4.54 & 0.7\\
\addlinespace
\multicolumn{5}{p{0.96\linewidth}}{$\mu=2.5$, $n=4000$, $m=500$, $\chisq=517$;\ $\chisqZ$: residual $k{=}1$ 3.7, mean$+$residual 26.7, random 3.0, oracle $b_3$ 8.49, oracle $[b_1,b_2,b_3]$ 44.3}\\
full weighting & 0.940 & 0.050 & 5.81 & 38.4\\
clipped, $M=20$ & 0.897 & 0.056 & 6.58 & 0.0\\
unweighted & 0.733 & 0.018 & 3.71 & 0.0\\
residual sketch, $k=1$ & 0.836 & 0.042 & 4.70 & 0.6\\
residual sketch (SIR), $k=2$ & 0.842 & 0.046 & 4.73 & 1.1\\
mean (SIR on $Y$, $k=2$) $+$ residual sketch, $k=3$ & 0.903 & 0.054 & 4.98 & 9.5\\
estimated shift direction, $k=1$ & 0.934 & 0.051 & 5.55 & 37.2\\
random sketch, $k=8$ & 0.768 & 0.043 & 3.93 & 0.5\\
oracle noise direction $b_3$, $k=1$ & 0.880 & 0.049 & 5.07 & 2.8\\
response-index oracle $[b_1,b_2,b_3]$, $k=3$ & 0.930 & 0.042 & 5.27 & 14.4\\
selected by Prop.~\ref{prop:select} ($n_t=100$) & 0.899 & 0.036 & 5.14 & 12.2\\
best candidate in hindsight & 0.903 & 0.023 & 5.21 & 11.1\\
\addlinespace
\multicolumn{5}{p{0.96\linewidth}}{$\mu=2.5$, $n=4000$, $m=200$, $\chisq=517$;\ $\chisqZ$: residual $k{=}1$ 3.5, mean$+$residual 27.5, random 3.2, oracle $b_3$ 8.49, oracle $[b_1,b_2,b_3]$ 44.3}\\
full weighting & 0.937 & 0.064 & $\infty$ & 52.6\\
clipped, $M=20$ & 0.961 & 0.030 & $\infty$ & 55.2\\
unweighted & 0.735 & 0.031 & 3.70 & 0.0\\
residual sketch, $k=1$ & 0.848 & 0.057 & 4.63 & 3.0\\
residual sketch (SIR), $k=2$ & 0.850 & 0.056 & 4.65 & 5.0\\
mean (SIR on $Y$, $k=2$) $+$ residual sketch, $k=3$ & 0.920 & 0.054 & 5.28 & 22.1\\
estimated shift direction, $k=1$ & 0.938 & 0.061 & $\infty$ & 51.7\\
random sketch, $k=8$ & 0.790 & 0.065 & 3.95 & 2.7\\
oracle noise direction $b_3$, $k=1$ & 0.897 & 0.047 & 5.19 & 10.0\\
response-index oracle $[b_1,b_2,b_3]$, $k=3$ & 0.941 & 0.043 & 5.55 & 28.5\\
selected by Prop.~\ref{prop:select} ($n_t=100$) & 0.901 & 0.034 & 5.30 & 23.5\\
best candidate in hindsight & 0.905 & 0.025 & 5.41 & 24.5\\

\bottomrule
\end{tabular}
\caption{In the nonlinear model a learned three-dimensional sketch (mean directions plus residual direction) keeps coverage near $0.90$ ($0.890$--$0.920$ at $n=4000$) with, at $\mu=2.5$, between a quarter and a half of the infinite sets of full weighting (at $\mu=1.5$ both are below $3\%$), one-dimensional sketches undercover because of misfit leakage, and selection with $100$ target labels tracks the best candidate within $1.4$ points. E7, $100$ replications; the full-score rule is used for every sketch. SD as in Table~\ref{tab:hetero}; $\chisqZ$ values in the block headings are averaged over replications for learned sketches. The estimated shift direction behaves like full weighting ($\chisqZ\approx455$ at $\mu=2.5$).}\label{tab:nonlinear}
\end{table}

\FloatBarrier
\paragraph{E8: real data with controlled covariate-dependent sampling.}
Tables~\ref{tab:california} and~\ref{tab:protein} use two regression data sets, California housing ($20{,}640$ observations, $8$ covariates, response $\log$ median house value) and the UCI protein tertiary structure data ($45{,}730$ observations, $9$ covariates, response RMSD). Covariates are Gaussianised by a quantile transform, winsorised at $\pm3$ and standardised on the pool. A target population is created by exponential tilting: pool points enter the target with probability proportional to $e^{\gamma\langle a,x\rangle}$, with tilt direction $a$ either a random unit vector or a single feature (median income; the first protein feature), and the rules are run with $w(x)\propto e^{\gamma\langle a,x\rangle}$ as the weight. In each of $100$ replications the pool is split into a training sample ($3000$), a calibration sample ($m$), an unlabelled pool used to estimate sketched ratios, and a pool from which $2000$ target test points and $n_t=100$ target selection points are drawn with the tilted probabilities. The predictor is gradient boosting on all covariates and every sketched rule is a full-score rule. This is an approximate finite-pool benchmark rather than an instance of the theorems: calibration points are drawn from one part of a finite data set and target points from a disjoint remainder, so the samples are not i.i.d.\ from two laws with density ratio $w$, and $w$ is the exact ratio only of the tilted and untilted \emph{pool} laws. Moreover, on a finite support a generic linear sketch is injective on the distinct covariate vectors, so the exact sketched ratio $\E[w\mid Z]$ of the pool law equals $w$ itself and the pool divergence does not decrease. The quantities we call $w_Z$ are therefore \emph{smoothed} estimates, obtained by gradient-boosting regression of the known $w$ on $Z$ over the unlabelled pool (disjoint from the calibration, selection and test samples, so that no observation is reused), and the ``$\hat\chi^2_Z$'' reported in the block headings is the dispersion of these estimated weights over the pool, not an estimate of a population divergence. The sketch that consists of the tilt direction itself has $w_Z=w$ and coincides with full weighting, so it is not shown. With these caveats, the experiment asks the practical question: does replacing the known tilt weight by a low-dimensional smoothed weight, chosen by the default procedure, trade coverage for stability in the way the theory suggests?

\begin{table}[!htbp]\centering\footnotesize\setlength{\tabcolsep}{5pt}
\begin{tabular}{lcccc}
\toprule
rule & Coverage & SD & Median length & Infinite sets (\%)\\
\midrule
\multicolumn{5}{p{0.96\linewidth}}{tilt along a random direction, $\gamma=1.5$, $m=500$, pool $\chisq=21.2$;\ $\hat\chi^2_Z$ (dispersion of estimated weights): residual $k{=}1$ 1.1, SIR $k{=}2$ 4.2, mean$+$residual 4.7, random 4.4}\\
full weighting & 0.922 & 0.040 & 0.97 & 6.9\\
clipped, $M=20$ & 0.899 & 0.040 & 0.97 & 0.0\\
unweighted & 0.820 & 0.031 & 0.75 & 0.0\\
residual sketch, $k=1$ & 0.876 & 0.038 & 0.92 & 0.0\\
residual sketch (SIR), $k=2$ & 0.878 & 0.046 & 0.91 & 0.2\\
mean $+$ residual sketch, $k=3$ & 0.889 & 0.041 & 0.97 & 0.1\\
random sketch, $k=2$ & 0.862 & 0.046 & 0.84 & 0.2\\
selected by Prop.~\ref{prop:select} ($n_t=100$) & 0.893 & 0.027 & 0.92 & 1.7\\
best candidate in hindsight & 0.898 & 0.015 & 0.96 & 1.7\\
\addlinespace
\multicolumn{5}{p{0.96\linewidth}}{tilt along a random direction, $\gamma=1.5$, $m=200$, pool $\chisq=21.2$;\ $\hat\chi^2_Z$ (dispersion of estimated weights): residual $k{=}1$ 1.1, SIR $k{=}2$ 4.2, mean$+$residual 4.7, random 4.4}\\
full weighting & 0.932 & 0.048 & 0.92 & 13.4\\
clipped, $M=20$ & 0.928 & 0.042 & 0.97 & 7.4\\
unweighted & 0.824 & 0.038 & 0.77 & 0.0\\
residual sketch, $k=1$ & 0.883 & 0.048 & 0.95 & 0.1\\
residual sketch (SIR), $k=2$ & 0.877 & 0.051 & 0.88 & 0.9\\
mean $+$ residual sketch, $k=3$ & 0.880 & 0.055 & 0.88 & 0.7\\
random sketch, $k=2$ & 0.865 & 0.057 & 0.83 & 1.6\\
selected by Prop.~\ref{prop:select} ($n_t=100$) & 0.890 & 0.033 & 0.87 & 4.0\\
best candidate in hindsight & 0.899 & 0.016 & 0.91 & 4.1\\
\addlinespace
\multicolumn{5}{p{0.96\linewidth}}{tilt along the feature MedInc, $\gamma=2.0$, $m=500$, pool $\chisq=15.1$;\ $\hat\chi^2_Z$ (dispersion of estimated weights): residual $k{=}1$ 2.2, SIR $k{=}2$ 5.7, mean$+$residual 11.5, random 4.8}\\
full weighting & 0.908 & 0.035 & 0.52 & 9.2\\
clipped, $M=20$ & 0.910 & 0.026 & 0.55 & 0.0\\
unweighted & 0.959 & 0.010 & 0.75 & 0.0\\
residual sketch, $k=1$ & 0.922 & 0.020 & 0.59 & 0.0\\
residual sketch (SIR), $k=2$ & 0.921 & 0.028 & 0.58 & 0.1\\
mean $+$ residual sketch, $k=3$ & 0.899 & 0.030 & 0.52 & 1.7\\
random sketch, $k=2$ & 0.951 & 0.023 & 0.71 & 0.2\\
selected by Prop.~\ref{prop:select} ($n_t=100$) & 0.915 & 0.024 & 0.55 & 4.4\\
best candidate in hindsight & 0.905 & 0.011 & 0.53 & 2.0\\
\addlinespace
\multicolumn{5}{p{0.96\linewidth}}{tilt along a random direction, $\gamma=2.0$, $m=500$, pool $\chisq=263.8$;\ $\hat\chi^2_Z$ (dispersion of estimated weights): residual $k{=}1$ 8.1, SIR $k{=}2$ 34.4, mean$+$residual 36.5, random 34.0}\\
full weighting & 0.937 & 0.053 & 1.19 & 28.5\\
clipped, $M=20$ & 0.889 & 0.072 & 1.21 & 0.0\\
unweighted & 0.759 & 0.076 & 0.75 & 0.0\\
residual sketch, $k=1$ & 0.855 & 0.078 & 0.99 & 0.8\\
residual sketch (SIR), $k=2$ & 0.854 & 0.079 & 0.97 & 1.5\\
mean $+$ residual sketch, $k=3$ & 0.872 & 0.072 & 1.08 & 1.1\\
random sketch, $k=2$ & 0.833 & 0.099 & 0.89 & 2.4\\
selected by Prop.~\ref{prop:select} ($n_t=100$) & 0.896 & 0.031 & 1.08 & 5.6\\
best candidate in hindsight & 0.896 & 0.022 & 1.09 & 5.2\\

\bottomrule
\end{tabular}
\caption{California housing (approximate finite-pool benchmark). Under a random-direction tilt the learned sketches reduce the estimated-weight dispersion and the frequency of infinite sets, but undercover by one to seven points and, at the stronger tilt, have larger coverage SD than full weighting, and selection with $100$ target labels recovers coverage near $0.90$ with about one fifth to one third of full weighting's infinite sets; under a tilt along median income unweighted calibration over-covers and every rule is valid. E8, $100$ replications, lengths in $\log$-price units. SD as in Table~\ref{tab:hetero}; $\hat\chi^2_Z$ in the block headings is the dispersion of the smoothed estimated weights over the pool, not a population divergence.}\label{tab:california}
\end{table}

On California housing with a random-direction tilt of divergence $21$ (first two blocks), full weighting covers at $0.92$--$0.93$ with $7$--$13\%$ infinite sets, clipping is valid on average at $m=500$ but has coverage below $0.85$ in $13\%$ of replications, the unweighted rule covers at $0.82$, and the learned sketches sit in between: the residual sketch has estimated-weight dispersion $\hat\chi^2_Z\approx1.1$ against $21$, no infinite sets, and coverage $0.876$--$0.883$; the three-dimensional sketch reaches $0.880$--$0.889$. This experiment measures the coverage effect of replacing the tilt weights by smoothed low-dimensional weights under the stated finite-pool sampling scheme; it does not separately estimate compression leakage, weight-estimation error or sampling effects. Selection with $100$ target labels gives $0.890$--$0.893$ with coverage SD $0.027$--$0.033$ against $0.040$--$0.048$ for full weighting and about one fifth to one third of its infinite sets; the best candidate in hindsight is at $0.898$. With the stronger tilt $\gamma=2$ (divergence $264$) the same pattern is sharper: full weighting returns the whole real line for $28\%$ of test points, clipping has coverage below $0.85$ in a quarter of the replications, the learned sketches undercover by three to seven points, and the selected rule has coverage $0.896$ with $5.6\%$ infinite sets and SD $0.031$ against $0.053$. The tilt along median income (third block) is the opposite situation: the shift moves the test points towards high-income districts where the model's residuals are smaller, so the unweighted rule over-covers at $0.959$, every weighted rule is valid, and the sketches shorten the intervals relative to the unweighted rule (median length $0.58$ against $0.75$) at the cost of up to $9\%$ infinite sets for full weighting; selection picks full weighting or the residual sketch and lands at $0.915$.

\begin{table}[!htbp]\centering\footnotesize\setlength{\tabcolsep}{5pt}
\begin{tabular}{lcccc}
\toprule
rule & Coverage & SD & Median length & Infinite sets (\%)\\
\midrule
\multicolumn{5}{p{0.96\linewidth}}{tilt along a random direction, $\gamma=3.0$, $m=500$, pool $\chisq=21.1$;\ $\hat\chi^2_Z$ (dispersion of estimated weights): residual $k{=}1$ 0.2, SIR $k{=}2$ 2.4, mean$+$residual 10.6, random 6.0}\\
full weighting & 0.913 & 0.035 & 13.96 & 11.8\\
clipped, $M=20$ & 0.914 & 0.033 & 14.42 & 0.0\\
unweighted & 0.927 & 0.015 & 15.21 & 0.0\\
residual sketch, $k=1$ & 0.934 & 0.015 & 15.66 & 0.0\\
residual sketch (SIR), $k=2$ & 0.915 & 0.026 & 14.67 & 0.0\\
mean $+$ residual sketch, $k=3$ & 0.905 & 0.032 & 13.41 & 4.2\\
random sketch, $k=2$ & 0.905 & 0.032 & 13.86 & 1.4\\
selected by Prop.~\ref{prop:select} ($n_t=100$) & 0.907 & 0.021 & 13.96 & 4.0\\
best candidate in hindsight & 0.905 & 0.010 & 13.72 & 3.2\\
\addlinespace
\multicolumn{5}{p{0.96\linewidth}}{tilt along the feature F1, $\gamma=2.0$, $m=500$, pool $\chisq=14.1$;\ $\hat\chi^2_Z$ (dispersion of estimated weights): residual $k{=}1$ 0.6, SIR $k{=}2$ 5.7, mean$+$residual 11.4, random 8.7}\\
full weighting & 0.924 & 0.040 & 17.12 & 6.4\\
clipped, $M=20$ & 0.901 & 0.038 & 17.23 & 0.0\\
unweighted & 0.865 & 0.017 & 15.21 & 0.0\\
residual sketch, $k=1$ & 0.874 & 0.021 & 15.74 & 0.0\\
residual sketch (SIR), $k=2$ & 0.893 & 0.037 & 16.52 & 0.0\\
mean $+$ residual sketch, $k=3$ & 0.910 & 0.039 & 16.95 & 1.6\\
random sketch, $k=2$ & 0.895 & 0.040 & 16.35 & 0.5\\
selected by Prop.~\ref{prop:select} ($n_t=100$) & 0.894 & 0.024 & 16.06 & 1.7\\
best candidate in hindsight & 0.894 & 0.013 & 16.25 & 1.3\\

\bottomrule
\end{tabular}
\caption{Protein tertiary structure (approximate finite-pool benchmark). Under a random-direction tilt the shift is harmless for the residual law, the unweighted rule over-covers, and the three-dimensional sketch gives valid intervals shorter than unweighted calibration with a third of full weighting's infinite sets; under a tilt along the first feature the unweighted rule undercovers and the three-dimensional sketch is valid with a quarter of the infinite sets. E8, $100$ replications, lengths in RMSD units. SD as in Table~\ref{tab:hetero}; $\hat\chi^2_Z$ as in Table~\ref{tab:california}.}\label{tab:protein}
\end{table}

On the protein data the two tilts show the two regimes side by side. Under the random-direction tilt (divergence $21$) the shift barely changes the residual law of a predictor that is, for this response, rather weak: the unweighted rule over-covers at $0.927$, and every sketched rule is valid. The sketches are then a device for \emph{shortening} the intervals: the three-dimensional mean-plus-residual sketch has coverage $0.905$ and median length $13.4$ against $15.2$ for the unweighted rule and $14.0$ for full weighting, with $4\%$ infinite sets against $12\%$; its estimated-weight dispersion is $10.6$ against $21$. Under the tilt along the first feature (divergence $14$) the unweighted rule undercovers at $0.865$, the one-dimensional residual sketch is only marginally better ($0.874$, $\hat\chi^2_Z=0.6$), and the three-dimensional sketch is valid at $0.910$ with $1.6\%$ infinite sets against $6.4\%$ for full weighting; selection lands at $0.894$ with coverage SD $0.024$ against $0.040$. Across the six real-data configurations the selected rule is within $1.5$ points of $0.90$ in every case, which is what Proposition~\ref{prop:select} promises up to its $\sqrt{\log J/n_t}$ term, while no single fixed candidate is.

\FloatBarrier

\section{Discussion and open problems}\label{sec:disc}

\paragraph{What sketching is for.}
The results draw a sharp line. Sketching is a device for \emph{not paying} for shift that does not interact with the response, and its bias is exactly the interaction that is lost. Random Johnson--Lindenstrauss sketches are the right tool wherever the downstream computation consumes pairwise distances among the sample, with kernel-specific caveats: for localised conformal prediction \cite{guan2023} with a Gaussian localiser of bandwidth $h$, a $(1\pm\varepsilon)$ distortion of squared distances changes the weight of a point at distance $r$ by a factor at most $e^{\varepsilon r^2/(2h^2)}$, which is close to one only when $\varepsilon r^2\ll h^2$, and a hard-radius localiser can switch a weight between $0$ and $1$; validity of the localised procedure is unaffected in either case, since it holds for any localiser. Density-ratio estimation for $w_Z$ becomes a $k$-dimensional problem (E2), and sketch-and-solve on the training fold \cite{raskutti2016} with calibration on raw held-out rows is valid without further argument. For capturing a direction, the sketch must be informed, and Corollary~\ref{cor:informed} quantifies how good the information must be.

\paragraph{Choosing the sketch.}
By Theorem~\ref{prop:main}(ii), the sketched divergence is the squared norm of the projection of $w-1$ onto $\sigma(T)$-measurable functions, so it is monotone under refinement of the sketch. The leakage is not (Remark~\ref{rem:family}), so the design objective, bias plus calibration error $\Delta_T+C\sqrt{(1+\chisqZ)/m}$, need not be unimodal along a nested family, and the right way to search is over \emph{informed} candidates: with a full-covariate predictor, directions along which the residual law varies (Section~\ref{sec:choose}); with a sketched score, sufficient-dimension-reduction directions \cite{li1991,cook1998} from the training fold, including the conditional-covariance criterion of \cite{stojanov2019} or the mutual-information forward selection of \cite{polo2022}; and in either case shift directions from unlabelled target covariates. Section~\ref{sec:choose} is honest about the limits: the available confidence bound for the natural source-data certificate of a sketch depends on the full divergence, like the calibration-error bound of full weighting on the same sample (Proposition~\ref{prop:certify}), so in this paper the choice rests either on a structural assumption or on a few target labels (Proposition~\ref{prop:select}), and the scale direction that the residual sketch must learn is estimated far less accurately than a mean direction at the same sample size (E6--E7). Both predecessors already report that reduced-space weighting outperforms full-space weighting on benchmark data, which is the risk-minimisation version of the stability gain in Tables~\ref{tab:hetero}--\ref{tab:est}.

\paragraph{Open problems.}
First, Theorem~\ref{thm:lb} is a lower bound for fixed-score threshold calibration, and Remark~\ref{rem:nonthreshold} shows that non-threshold sets can escape it; a lower bound over all set-valued rules would have to charge for score learning in the sketched space as well. Second, conditional coverage over a sketched function class in the sense of \cite{gibbs2023}: the $K$-atomic lower bound of \cite{isaev2026} suggests that asking for coverage over a $k$-dimensional class costs $\sqrt{(1+\chisqZ)k/m}$, which would give a principled dimension choice for approximate conditional coverage under shift. Third, the interaction bound of Proposition~\ref{prop:cov} is a small-shift bound: its factor $\sqrt{\chi^2(Q_{X|Z}\|P_{X|Z})}$ grows with the discarded shift even when the maximal correlation is tiny, whereas the exact Gaussian leakage stays bounded by one; a version with a bounded dependence measure in place of the fibre standard deviation would be the right general statement. Fourth, the certification cost of Proposition~\ref{prop:certify} is a statement about the natural estimator; a minimax lower bound for estimating the leakage from source data, which we expect to hold at the same $\sqrt{(\chisq-\chisqZ)/n_s}$ scale, would make the obstruction definitive. Fifth, the clipped-weight bound of Theorem~\ref{thm:lognormal} inherits the looseness of the uniform Bernstein step; a training-conditional analysis of the clipped lognormal rule in the style of \cite{pournaderi2026} would bring the constants down to the point where the bound is informative at the calibration sizes used in practice.

\section*{Acknowledgments}
Claude Fable 5.1 (Anthropic) and GPT~6 Astra were used in the preparation of this manuscript; the author thanks Javad Alipanah for providing access to them. The author retains full responsibility for the mathematical content and conclusions of the paper.

\appendix

\section{Proofs of the background results}\label{app:background}

\subsection{Proof of Theorem~\ref{thm:wcp}}\label{app:wcp}

Write $(X_{m+1},Y_{m+1})=(X,Y)$ for the test pair and $V_i=S(X_i,Y_i)$, $i\le m+1$. Under Assumption~\ref{ass:L} the joint law of $(X_i,Y_i)_{i\le m+1}$ has density $w(x_{m+1})$ with respect to $P^{\otimes(m+1)}$. Let $\mathcal F$ be the $\sigma$-field generated by the unordered multiset $\{(X_i,Y_i)\}_{i\le m+1}$. For a realisation $\{(x_i,y_i)\}$ of the multiset, the conditional probability that the test pair is the one labelled $j$ is, by the density formula and symmetry of $P^{\otimes(m+1)}$,
\[
\Prob\{(X_{m+1},Y_{m+1})=(x_j,y_j)\mid\mathcal F\}=\frac{w(x_j)}{\sum_{k=1}^{m+1}w(x_k)}=:p_j
\]
(the denominator is positive a.s.\ because $w(X_{m+1})>0$ $Q$-a.s.; if several pairs coincide, assign the total probability of that value to any one of its labels; what follows is unaffected). This is the \emph{weighted exchangeability} of \cite{tibshirani2019}. Let $v_j=S(x_j,y_j)$ and $G=\sum_{k=1}^{m+1}p_k\delta_{v_k}$, a probability measure on $\R$, and let $q=\Qf(1-\alpha;G)$. Then $G\{v\le q\}\ge1-\alpha$ by definition of the generalised inverse, i.e.
\[
\sum_{j:\,v_j\le q}p_j\ \ge\ 1-\alpha .
\]
Now fix the label $j$ of the test pair. Algorithm~\ref{alg:wcp} with test covariate $x_j$ computes $\widehat Q^w(x_j)=\Qf(1-\alpha;G_j)$, where $G_j$ is $G$ with the atom at $v_j$ moved to $+\infty$. We claim that $v_j\le\widehat Q^w(x_j)$ if and only if $v_j\le q$. Indeed, if $v_j\le q$ then for every $t<v_j$,
\[
G_j(-\infty,t]=G(-\infty,t]<1-\alpha
\]
(the strict inequality because $t<q$), so $\widehat Q^w(x_j)\ge v_j$; conversely, if $v_j>q$ then
\[
G_j(-\infty,q]=G(-\infty,q]\ge1-\alpha,
\]
so $\widehat Q^w(x_j)\le q<v_j$. Therefore
\[
\Prob\{Y\in\Cset^w(X)\mid\mathcal F\}=\Prob\{V_{m+1}\le\widehat Q^w(X_{m+1})\mid\mathcal F\}=\sum_{j:\,v_j\le q}p_j\ \ge\ 1-\alpha,
\]
and taking expectations proves the theorem.\qed

\subsection{Proof of Theorem~\ref{thm:price}}\label{app:price}

\paragraph{(a) Upper bound.}
Fix the training fold. Let $F^Q(t):=Q\{S(X,Y)\le t\}$, which equals $\E_P[w(X)\ind\{S(X,Y)\le t\}]$, be the target score CDF and
\[
\widehat F^w(t):=\frac1m\sum_{i=1}^m w(X_i)\ind\{S_i\le t\},\qquad \widehat F^w(+\infty)=\frac1m\sum_{i=1}^m w(X_i),
\]
the (unnormalised) weighted empirical CDF, so that
\[
\E\widehat F^w(t)=F^Q(t)\quad\text{for every }t\in\Rbar,\qquad \E\widehat F^w(+\infty)=1 .
\]

\emph{Step 1: pointwise Bernstein.} For fixed $t\in\Rbar$ the summands $w(X_i)\ind\{S_i\le t\}$ are i.i.d., take values in $[0,w_{\max}]$, and have variance at most $\E_P[w^2]=1+\chisq$. Bernstein's inequality gives, for $\varepsilon>0$,
\[
\Prob\bigl\{|\widehat F^w(t)-F^Q(t)|>\varepsilon\bigr\}\le2\exp\Bigl(-\frac{m\varepsilon^2}{2(1+\chisq)+\frac23w_{\max}\varepsilon}\Bigr),
\]
so with probability at least $1-\delta'$,
\begin{gather*}
|\widehat F^w(t)-F^Q(t)|\le\varepsilon_m(\delta'),\\
\text{where}\quad\varepsilon_m(\delta'):=\sqrt{2(1+\chisq)\log(2/\delta')/m}+2w_{\max}\log(2/\delta')/(3m).
\end{gather*}
The same holds for the left limits $\widehat F^w(t^-)$, $F^Q(t^-)$ (replace $\le$ by $<$).

\emph{Step 2: uniform bound by discretisation.} Let $N=\lceil\sqrt m\rceil$ and $t_j=\Qf(j/N;F^Q)$ for $j=1,\dots,N-1$, with $t_0=-\infty$, $t_N=+\infty$. By definition of the generalised inverse,
\[
F^Q(t_j^-)\le j/N\le F^Q(t_j).
\]
Apply Step 1 with $\delta'=\delta/(4N)$ to each of the $2N$ quantities $\widehat F^w(t_j),\widehat F^w(t_j^-)$, $j=1,\dots,N$ (with $t_N=+\infty$ this includes the total mass); on the resulting event $\mathcal E$, of probability at least $1-\delta/2$, each deviates from its mean by at most $\varepsilon_m(\delta/4N)$. For $t\in[t_j,t_{j+1})$, monotonicity gives
\[
\widehat F^w(t)-F^Q(t)\le\widehat F^w(t_{j+1}^-)-F^Q(t_j)\le\varepsilon_m(\delta/4N)+\bigl(F^Q(t_{j+1}^-)-F^Q(t_j)\bigr)\le\varepsilon_m(\delta/4N)+\tfrac1N,
\]
and symmetrically
\[
F^Q(t)-\widehat F^w(t)\le\varepsilon_m(\delta/4N)+1/N .
\]
Since $N\le2\sqrt m$, we have $2/\delta'=8N/\delta\le16\sqrt m/\delta$, and on $\mathcal E$,
\[
\sup_{t\in\Rbar}\abs{\widehat F^w(t)-F^Q(t)}\le\eta:=\eta_m(\delta),\qquad \abs{\widehat F^w(+\infty)-1}\le\eta ,
\]
with $\eta_m(\delta)$ as in \eqref{eq:eta}.

\emph{Step 3: localising the threshold.} For a test covariate $x$, Algorithm~\ref{alg:wcp} returns
\[
\widehat Q^w(x)=\Qf\bigl(\beta(x);\widehat F^w\bigr),\qquad \beta(x)=(1-\alpha)\bigl(\widehat F^w(+\infty)+w(x)/m\bigr);
\]
this is the same quantile as in the algorithm after multiplying through by the normalising constant. On $\mathcal E$, $\beta(x)\in[u_{\mathrm{lo}},u_{\mathrm{hi}}]$ with
\[
u_{\mathrm{lo}}:=(1-\alpha)(1-\eta)-\eta,\qquad u_{\mathrm{hi}}:=(1-\alpha)\bigl(1+\eta+w_{\max}/m\bigr)+\eta .
\]
If $q:=\widehat Q^w(x)<\infty$ then $\widehat F^w(q)\ge\beta(x)$, hence
\[
F^Q(q)\ge\beta(x)-\eta\ge u_{\mathrm{lo}};
\]
and for every $t<q$, $\widehat F^w(t)<\beta(x)$, hence
\[
F^Q(t)<\beta(x)+\eta\le u_{\mathrm{hi}},
\]
so $F^Q(q^-)\le u_{\mathrm{hi}}$. If $q=+\infty$ then $F^Q(t)<u_{\mathrm{hi}}$ for all $t$. By continuity of $F^Q$, in all cases
\[
u_{\mathrm{lo}}\le F^Q\bigl(\widehat Q^w(x)\bigr)\le u_{\mathrm{hi}}\qquad\text{for every }x\text{ with }w(x)\le w_{\max},
\]
with the convention $F^Q(+\infty)=1$. (For $x$ with $w(x)=0$ the level is $\beta(x)=(1-\alpha)\widehat F^w(+\infty)$, still in $[u_{\mathrm{lo}},u_{\mathrm{hi}}]$ on $\mathcal E$, so the conclusion holds there too; in any case $\QX\{w=0\}=0$, so only $0<w(x)\le w_{\max}$ matters for the target coverage.)

\emph{Step 4: from threshold level to coverage.} Conditional on the calibration sample, the target coverage is $\int F^Q_{x}(\widehat Q^w(x))\,\mathrm d\QX(x)$, where $F^Q_x(t)=P_{Y|X=x}\{S(x,Y)\le t\}$ is the conditional score CDF, $F^Q=\int F^Q_x\,\mathrm d\QX$, and $F^Q_x(+\infty)=1$. If $u_{\mathrm{lo}}\le0$ the lower bound $\text{coverage}\ge u_{\mathrm{lo}}$ is trivial; otherwise let $q_{\mathrm{lo}}:=\Qf(u_{\mathrm{lo}};F^Q)$. Let $q_{\mathrm{hi}}:=\sup\{t\in\Rbar:F^Q(t)\le u_{\mathrm{hi}}\}$. By Step 3 and the definition of the generalised inverse, $\widehat Q^w(x)\ge q_{\mathrm{lo}}$ for every $x$ (when $u_{\mathrm{lo}}>0$); by Step 3 and the definition of $q_{\mathrm{hi}}$, $\widehat Q^w(x)\le q_{\mathrm{hi}}$ for every $x$. Since each $F^Q_x$ is nondecreasing,
\[
F^Q_x(q_{\mathrm{lo}})\le F^Q_x\bigl(\widehat Q^w(x)\bigr)\le F^Q_x(q_{\mathrm{hi}}),
\]
and integrating against $\QX$,
\[
F^Q(q_{\mathrm{lo}})\ \le\ \Prob_Q\{Y\in\Cset^w(X)\mid\mathcal D^{\mathrm{cal}}_m\}\ \le\ F^Q(q_{\mathrm{hi}}).
\]
By continuity of $F^Q$, $F^Q(q_{\mathrm{lo}})=u_{\mathrm{lo}}$, and $F^Q(q_{\mathrm{hi}})=u_{\mathrm{hi}}$ if $q_{\mathrm{hi}}<\infty$ (continuity from both sides at $q_{\mathrm{hi}}$) while $F^Q(q_{\mathrm{hi}})=1\le u_{\mathrm{hi}}$ if $q_{\mathrm{hi}}=+\infty$. Finally,
\[
u_{\mathrm{hi}}-(1-\alpha)\le2\eta+w_{\max}/m,\qquad (1-\alpha)-u_{\mathrm{lo}}\le2\eta,
\]
which proves (a).\qed

\paragraph{(b) Lower bound.}
\emph{Construction.} Let $\mathcal X=\{x_0,x_D\}$ and
\[
\QX=\delta_{x_0},\qquad \PX=\frac1{1+\kappa}\delta_{x_0}+\frac\kappa{1+\kappa}\delta_{x_D},
\]
so that
\[
w(x_0)=1+\kappa,\qquad w(x_D)=0,\qquad \chisq=\E_P[w^2]-1=(1+\kappa)-1=\kappa .
\]
Let $A$, $B=B^{(\xi)}$ and $S^\star$ be as in the proof of Theorem~\ref{thm:lb} (Appendix~\ref{app:lb}), with $\xi$ to be chosen, and consider the two kernels
\[
P^0_{Y|X}:\ (x_0,x_D)\mapsto(A,A),\qquad P^1_{Y|X}:\ (x_0,x_D)\mapsto(B,A).
\]
Both satisfy Assumption~\ref{ass:A1} by Lemma~\ref{lem:A1check} once $\xi\le\tfrac12\min\{\alpha,1-\alpha\}$.

\emph{Separation.} As in the proof of Theorem~\ref{thm:lb} (Appendix~\ref{app:lb}), thresholds in $\Rbar$ are allowed and may be clipped to $[0,\alpha]$ without increasing either error; the target coverage errors of a clipped threshold written as $\alpha a$ are
\[
D_A(a)=\alpha a,\qquad D_B(a)=(\alpha+\xi)a-\xi,
\]
clipping does not help, and
\[
\inf_{a\in[0,1]}\bigl(|D_A(a)|+|D_B(a)|\bigr)=\frac{\alpha\xi}{\alpha+\xi},
\]
attained at $a=\xi/(\alpha+\xi)$.

\emph{Indistinguishability.} The source laws $P^j=\PX\otimes P^j_{Y|X}$ differ only when $X=x_0$, which has probability $1/(1+\kappa)$, and there the densities differ by the factors $1-\xi/(1-\alpha)$ on $B_0$ and $1+\xi/\alpha$ on $B_1$. Hence
\[
\chi^2(P^1\|P^0)=\frac{1}{1+\kappa}\Bigl[(1-\alpha)\Bigl(\frac{\xi}{1-\alpha}\Bigr)^2+\alpha\Bigl(\frac\xi\alpha\Bigr)^2\Bigr]=\frac{\xi^2}{(1+\kappa)\alpha(1-\alpha)} .
\]
Using $\KL\le\chi^2$, tensorisation of $\KL$, and Pinsker,
\[
\TV\bigl((P^1)^{\otimes m},(P^0)^{\otimes m}\bigr)\le\sqrt{m\,\chi^2(P^1\|P^0)/2}.
\]
Choose
\[
\xi^2=\frac{(1+\kappa)\alpha(1-\alpha)}{2m},
\]
so that $m\chi^2(P^1\|P^0)/2=1/4$ and the total variation is at most $1/2$. The restriction $\xi\le\tfrac12\min\{\alpha,1-\alpha\}$ is equivalent to
\[
m\ \ge\ \frac{2(1+\kappa)\alpha(1-\alpha)}{\min\{\alpha,1-\alpha\}^2}\ =\ m_\star .
\]

\emph{Le Cam.} Let $\nu_j$ be the law of the realised threshold $\widehat q(\mathcal D,U)$ under $P^j$ (the randomisation $U$ has the same law under both). Then
\begin{align*}
\max_j\E_j|D_j(\widehat q)|&\ \ge\ \tfrac12\bigl(\E_0|D_A(\widehat q)|+\E_1|D_B(\widehat q)|\bigr)\ \ge\ \tfrac12\int\bigl(|D_A|+|D_B|\bigr)\,\mathrm d(\nu_0\wedge\nu_1)\\
&\ \ge\ \tfrac12\bigl(1-\TV(\nu_0,\nu_1)\bigr)\inf_a\bigl(|D_A(a)|+|D_B(a)|\bigr),
\end{align*}
and
\[
\TV(\nu_0,\nu_1)\le\TV\bigl((P^0)^{\otimes m},(P^1)^{\otimes m}\bigr)\le\tfrac12
\]
by data processing (Lemma~\ref{lem:dp}; the map $(\mathcal D,U)\mapsto\widehat q$ composed with the independent law of $U$ is a Markov kernel from data to thresholds). Therefore
\[
\max_j\E_j|D_j(\widehat q)|\ \ge\ \frac14\cdot\frac{\alpha\xi}{\alpha+\xi}\ \ge\ \frac\xi8=\frac18\sqrt{\frac{(1+\kappa)\alpha(1-\alpha)}{2m}},
\]
using $\xi\le\alpha$. Since $\QX=\delta_{x_0}$, the target coverage of $\{S^\star\le q\}$ equals its conditional coverage at $x_0$, which is the quantity in the statement.\qed

\section{Sharper calibration bounds and their behaviour under sketching}\label{app:sharper}

\paragraph{The refined price of shift.}
Two refinements of Theorem~\ref{thm:price} are given in \cite{pournaderi2026} and should be kept in mind when the heuristic ``$m/(1+\chisq)$'' is used. First, for the common-threshold weighted quantile on a \emph{given} score, with a bounded ratio and a continuous score distribution, the asymptotic variance of the realised coverage is $\alpha^2K^2+(1-2\alpha)\,\E_Q[w\,\ind\{S>q\}]$ with $K^2=1+\chisq$ and $q$ the target $(1-\alpha)$-quantile of the score \cite[Prop.\ 10]{pournaderi2026}; this equals $\alpha(1-\alpha)K^2$ when $w$ is independent of the score, and for $\alpha\le\tfrac12$ rises to at most $\sigma_\alpha^2=\alpha^2K^2+(1-2\alpha)\Lambda(\alpha)$, with $\Lambda(\alpha)$ the integral of $w$ over the $\alpha$-fraction of the target population where it is largest, the maximum being attained when the upper tail of the score sits on the high-ratio region. Matching the exchangeable variance $\alpha(1-\alpha)$, the effective calibration size is therefore $m/(1+\chisq)$ in the independent case and
\begin{gather*}
m_{\mathrm{eff}}=\frac{m\,\alpha(1-\alpha)}{\sigma_\alpha^2}=\frac{m(1-\alpha)}{\alpha K^2+(1-2\alpha)\,\mathrm{CVaR}_\alpha(w)},\\
\mathrm{CVaR}_\alpha(w)=\Lambda(\alpha)/\alpha\in[K^2,w_{\max}],
\end{gather*}
under adversarial coupling, which is of the order of $m/\mathrm{CVaR}_\alpha(w)$ \cite[Rem.\ 3, Rem.\ 9]{pournaderi2026}; the lower bound of Theorem~\ref{thm:price}(b) certifies the $K^2$ scale, not the coupling term. For the Gaussian mean shift of Section~\ref{sec:gauss} with $\alpha=0.1$, $\mu=1.5$ and $m=500$, $\mathrm{CVaR}_\alpha(w)\approx55.6$ against $K^2\approx9.5$, and $m_{\mathrm{eff}}\approx9.9$ against $m/(1+\chisq)\approx52.7$ in the independent case, so the distinction is not academic. Second, the upper bound \cite[Thm.\ 1]{pournaderi2026} is proved by a single concentration inequality at a fixed population quantile, which removes the uniform-convergence step and the $\log m$ of Theorem~\ref{thm:price}(a); the structure (variance at the $K^2$ scale, supremum only at order $1/m$) is the same. Likewise \cite[Thms.\ 5--6]{isaev2026} remove the logarithm and control the full conditional-coverage profile in $L^p(\QX)$, and \cite[Thm.\ 8]{pournaderi2026} gives a two-point lower bound of the same form as Theorem~\ref{thm:price}(b) with a constant twice as large. Neither refinement changes the conclusion of the paper, because the leading variance proxy does not increase under sketching (Proposition~\ref{prop:proxy}).

\paragraph{Score-specific leakage.}
Sketched calibration with a \emph{common} threshold, $\Cset_\beta(z)=\{y:S_Z(z,y)\le\hat\tau(\beta)\}$ with $\hat\tau(\beta)$ the self-normalised weighted $\beta$-quantile of the sketched scores (no test-point term), is weighted conformal on $\mathcal X$ with the misspecified ratio $\tilde v=w_Z\circ T$ and the score $s(x,y)=S_Z(T(x),y)$. Because the test-point term can only enlarge the set \cite[eq.\ (3)]{pournaderi2026}, the sketched rule of Definition~\ref{def:sketch} contains $\Cset_{1-\alpha}$.

\begin{proposition}[Score-specific leakage]\label{prop:scoreleak}
Under Assumption~\ref{ass:L},
\[
\Delta_S:=\sup_{t}\abs{Q^T\{S_Z\le t\}-\Qt\{S_Z\le t\}}\ \le\ \tfrac12\,\E_P\abs{\E_P[w-w_Z\circ T\mid S_Z]}\ \le\ \Delta_T .
\]
\end{proposition}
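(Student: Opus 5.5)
The plan is to write both laws of the sketched score as weighted source expectations and then compare them on sets of $S_Z$-values. Here $S_Z$ denotes the random variable $S_Z(T(X),Y)$ on the source space.

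For any measurable set $B$ of score values, the definition of $w$ gives
\[
Q^T\{S_Z\in B\}=\E_P\bigl[w(X)\,\ind\{S_Z\in B\}\bigr].
\]
For the surrogate target $\Qt=Q_Z\otimes P_{Y|Z}$, the pair $(Z,Y)$ has density $w_Z(z)$ with respect to $P^T$ (proof of Theorem~\ref{prop:main}). Hence
\[
\Qt\{S_Z\in B\}=\E_P\bigl[w_Z(T(X))\,\ind\{S_Z\in B\}\bigr].
\]
Subtracting,
\[
Q^T\{S_Z\in B\}-\Qt\{S_Z\in B\}=\E_P\bigl[(w-w_Z\circ T)\,\ind\{S_Z\in B\}\bigr].
\]
Both terms are integrable by Assumption~\ref{ass:L}, since $w$ and $w_Z\circ T$ have $P$-mean one by Theorem~\ref{prop:main}(i).

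Next, condition on $S_Z$ via the tower property. Set $g(S_Z):=\E_P[w-w_Z\circ T\mid S_Z]$. The difference above equals $\E_P[g(S_Z)\ind\{S_Z\in B\}]$, and $\E_P[g]=0$ because both ratios have mean one. Taking $B=(-\infty,t]$ gives $\Delta_S\le\sup_B|\E_P[g\ind_B]|$. For a centred integrable $g$, this supremum equals $\E_P[g_+]$: it is attained at $B=\{g>0\}$, with the complement giving the negative side. Since $\E g=0$, we have $\E g_+=\tfrac12\E|g|$, which is the first inequality.

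For the second inequality, view $S_Z$ as a measurable function of $(Z,Y)$. The coarsening $\sigma(S_Z)\subseteq\sigma(Z,Y)$ and conditional Jensen give
\[
\E_P\bigl|\E_P[w-w_Z\circ T\mid S_Z]\bigr|\le\E_P\bigl|\E_P[w-w_Z\circ T\mid Z,Y]\bigr|.
\]
It remains to identify the right-hand side with $2\Delta_T$. Repeating the first step with $\sigma(Z,Y)$ in place of $\sigma(S_Z)$, the function $h(Z,Y):=\E_P[w\mid Z,Y]$ is the density of $Q^T$ with respect to $P^T$. Indeed, for any $E$, $Q^T(E)=\E_P[w\ind_E(Z,Y)]=\E_P[h\ind_E]$. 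Likewise $w_Z(Z)$ is the density of $\Qt$ with respect to $P^T$. By Scheff\'e,
\[
\tfrac12\E_P\bigl|h-w_Z\circ T\bigr|=\TV(Q^T,\Qt)=\Delta_T,
\]
where the last equality is Lemma~\ref{lem:tvmix}. Equivalently, the supremum over all measurable $E\subseteq\Zc\times\R$ dominates the supremum over sets of the form $\{S_Z\in B\}$, by data processing (Lemma~\ref{lem:dp}).

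The main obstacle is a point of care rather than depth. One must check that $Q^T$ is absolutely continuous with respect to $P^T$ with density $\E_P[w\mid Z,Y]$. This requires the test pair to share the kernel $P_{Y|X}$, so that $Q=w\cdot P$ on $\mathcal X\times\R$. One must also keep track that all expectations are under $P$ on the full space $(X,Y)$. Everything else is Jensen's inequality and Scheff\'e.
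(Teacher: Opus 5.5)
Your proof is correct and follows the paper's argument. The tower-property representation $\E_P[g(S_Z)\ind\{S_Z\in B\}]$ with $g$ centred gives the first inequality exactly as in the paper. For the second, your conditional-Jensen step from $\sigma(S_Z)$ to $\sigma(Z,Y)$, followed by Scheff\'e with density $\E_P[w\mid Z,Y]$, is the density form of the paper's data-processing remark that sets $\{S_Z\in B\}$ are measurable in $(z,y)$ and so their supremum is at most $\TV(Q^T,\Qt)$. You also note explicitly that the supremum over $B$ is attained at $\{g>0\}$ and therefore equals $\tfrac12\E_P|g|$, a step the paper leaves implicit when it bounds the middle quantity by $\Delta_T$.
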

\begin{proof}
For a fixed threshold $t$,
\[
Q^T\{S_Z\le t\}-\Qt\{S_Z\le t\}=\E_P\bigl[(w-\tilde v)\,\ind\{s\le t\}\bigr]=\E_P\bigl[\E_P[w-\tilde v\mid s]\,\ind\{s\le t\}\bigr],
\]
and since $\E_P[w-\tilde v]=0$, the supremum over $t$ of the absolute value is at most half the $L^1$ norm of $g:=\E_P[w-\tilde v\mid s]$: for any measurable set $A$,
\[
-\int g^-\le\int_Ag\le\int g^+,\qquad \int g^+=\int g^-=\tfrac12\int|g|,
\]
the equalities because $\int g=0$. The last inequality of the statement holds because $\{S_Z\in B\}$ is a measurable set in $(z,y)$, so the supremum over such sets is at most $\TV(Q^T,\Qt)=\Delta_T$.
\end{proof}

Only the part of the discarded shift $w-\E_P[w\mid Z]$ that is correlated with the sketched score contributes, which is the general form of the interaction phenomenon of Section~\ref{sec:gauss}, where the leakage is driven by $\langle\beta_\perp,a_\perp\rangle$ rather than by $\norm{a_\perp}$.

\paragraph{The variance proxy under sketching.}
Write $B:=\norm{w}_\infty$, $B_Z:=\norm{w_Z}_\infty$, $K_Z^2:=1+\chisqZ$, and $\Lambda(a):=\sup\{\E_Q[w\,h]:0\le h\le1,\ \E_Qh\le a\}$, with $\Lambda_Z$ defined likewise from $(Q_Z,w_Z)$. Suppose $B_Z<\infty$, $\alpha\le\tfrac12$ and $m>\tfrac23\log(1/\delta)$. Then \cite[Thm.\ 1]{pournaderi2026} applied to the surrogate problem $(\Zc,P^T,\Qt,w_Z)$, combined with Proposition~\ref{prop:scoreleak}, gives, with probability $1-\delta$ over the calibration sample, target coverage at least $1-\alpha-\gamma^Z_m(\delta)-\Delta_S$, where $\gamma^Z_m(\delta)$ is the smallest $\gamma>0$ with
\[
\exp\Bigl(-\frac{m\gamma^2}{2\sigma^2_{\alpha,Z}+(2-\tfrac{10}{3}\alpha)B_Z\gamma+\tfrac23\gamma^2}\Bigr)=\delta,
\qquad \sigma^2_{\alpha,Z}:=\alpha^2K_Z^2+(1-2\alpha)\Lambda_Z(\alpha),
\]
the bound (9) of \cite{pournaderi2026} with the lower-order quantile term replaced by its upper bound $B_Z$ and evaluated at $(K_Z^2,\Lambda_Z(\alpha),B_Z)$; the left side decreases to $e^{-3m/2}$ as $\gamma\to\infty$, so the condition $m>\tfrac23\log(1/\delta)$ is exactly what makes a finite root exist, and otherwise one sets $\gamma^Z_m(\delta)=+\infty$ (the statement is then void).

\begin{proposition}[The variance proxy does not increase under sketching]\label{prop:proxy}
Under Assumption~\ref{ass:chi}, $K_Z^2\le K^2$ and $\Lambda_Z(a)\le\Lambda(a)$ for every $a\in(0,1]$; under Assumption~\ref{ass:bdd}, $B_Z\le B$. Consequently, for $\alpha\le\tfrac12$, $\sigma^2_{\alpha,Z}\le\sigma^2_\alpha$, and $\gamma^Z_m(\delta)\le\gamma_m(\delta)$ whenever both are finite.
\end{proposition}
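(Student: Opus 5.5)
The plan is to prove the four claims in turn, each from Theorem~\ref{prop:main}(i) together with a representation of $\Lambda$ as an expectation under $P$. For $K_Z^2\le K^2$ there is nothing new to do: it is the Pythagorean identity of Theorem~\ref{prop:main}(ii). For $B_Z\le B$, I will use that $w_Z\circ T=\E_P[w\mid T(X)]$ by part (i), and a conditional expectation of a function bounded by $B$ is bounded by $B$ almost surely. Since $B_Z$ is the $P_Z$-essential supremum, this also covers the $\PX$-a.s.\ qualifier.

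The substantive step is $\Lambda_Z(a)\le\Lambda(a)$. I will show that every test function admissible for $\Lambda_Z$ lifts to an admissible test function for $\Lambda$ with the same value. Take $h_Z$ on $\Zc$ with $0\le h_Z\le1$ and $\E_{Q_Z}h_Z\le a$, and put $h:=h_Z\circ T$ on $\mathcal X$. Then $0\le h\le1$, and $\E_{\QX}h=\E_{Q_Z}h_Z\le a$ because $Q_Z=T_\#\QX$. For the objective I will rewrite both sides under the source law: $\E_{Q}[w\,h]=\E_P[w^2\,h]$ and $\E_{Q_Z}[w_Zh_Z]=\E_P[(w_Z\circ T)^2\,h_Z\circ T]$. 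Conditioning on $T(X)$ gives $\E_P[w^2\mid T]\ge(\E_P[w\mid T])^2=(w_Z\circ T)^2$ by conditional Jensen, and $h_Z\circ T\ge0$ is $\sigma(T)$-measurable. Hence
\[
\E_{Q_Z}[w_Zh_Z]\le\E_P[w^2\,h]=\E_Q[w\,h]\le\Lambda(a),
\]
and taking the supremum over $h_Z$ gives the claim. This needs $w\in L^2(\PX)$, which is Assumption~\ref{ass:chi}; if $\Lambda(a)=\infty$ the inequality is trivial. This is the step I expect to be the main obstacle. The naive route of comparing $w$ and $w_Z$ directly under $Q$ fails, because $w_Z\circ T$ is not pointwise below $w$. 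The fix is to move to $P$, where the objective becomes a second moment and Jensen applies; I will check that the identities $\E_Q[g]=\E_P[wg]$ hold for the nonnegative integrands involved by monotone convergence.

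For the consequences, I will use $\alpha\le\tfrac12$, so that $1-2\alpha\ge0$. Then
\[
\sigma^2_{\alpha,Z}=\alpha^2K_Z^2+(1-2\alpha)\Lambda_Z(\alpha)\le\alpha^2K^2+(1-2\alpha)\Lambda(\alpha)=\sigma^2_\alpha.
\]
For $\gamma$, the map $\gamma\mapsto\exp(-m\gamma^2/(2\sigma^2+(2-\tfrac{10}{3}\alpha)B\gamma+\tfrac23\gamma^2))$ is nonincreasing in $\gamma$ for fixed parameters. It is also nondecreasing in $\sigma^2$ and in $B$, since $2-\tfrac{10}{3}\alpha>0$ for $\alpha\le\tfrac12$. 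Replacing $(\sigma^2_\alpha,B)$ by the smaller $(\sigma^2_{\alpha,Z},B_Z)$ therefore lowers the whole curve. The level $\delta$ is then reached no later, so the smallest root satisfies $\gamma^Z_m(\delta)\le\gamma_m(\delta)$ whenever both roots are finite.
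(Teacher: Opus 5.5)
Your proposal is correct and follows the paper's proof essentially step for step. $K_Z^2\le K^2$ and $B_Z\le B$ come from Theorem~\ref{prop:main}(i)--(ii), and $\Lambda_Z(a)\le\Lambda(a)$ comes from rewriting the objective under $P$, applying conditional Jensen $(\E_P[w\mid T])^2\le\E_P[w^2\mid T]$, and restricting the supremum defining $\Lambda(a)$ to $\sigma(T)$-measurable test functions (your lift $h=h_Z\circ T$ is exactly that restriction). The consequences then follow, as in the paper, from $1-2\alpha\ge0$ and a pointwise comparison of the defining curves for $\gamma$; your explicit check that $2-\tfrac{10}{3}\alpha>0$ secures the monotonicity in $B$ that the paper uses implicitly.
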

\begin{proof}
$K_Z^2\le K^2$ and $B_Z\le B$ are Theorem~\ref{prop:main}(ii). For the tail average,
\begin{align*}
\Lambda_Z(a)=\sup_{h=h(Z)}\E_{Q_Z}[w_Z h]&=\sup_{h=h(Z)}\E_P\bigl[(\E_P[w\mid Z])^2h\bigr]\\
&\le\ \sup_{h=h(Z)}\E_P[w^2h]=\sup_{h=h(Z)}\E_Q[w\,h]\ \le\ \Lambda(a),
\end{align*}
where the suprema are over $\sigma(Z)$-measurable $h$ with $0\le h\le1$ and $\E_{Q}h\le a$ (note $\E_{Q_Z}h(Z)=\E_Qh$), the inequality is conditional Jensen, and the last step restricts the supremum defining $\Lambda(a)$ to $\sigma(Z)$-measurable $h$. The coefficient $1-2\alpha$ is nonnegative for $\alpha\le\tfrac12$, which gives $\sigma^2_{\alpha,Z}\le\sigma^2_\alpha$; the left side of the defining equation of $\gamma^Z_m$ is then pointwise at most that of $\gamma_m$, so the root is no larger.
\end{proof}

The restriction to $\alpha\le\tfrac12$ is needed because the coefficient $1-2\alpha$ changes sign. The monotonicity is of the proxy, not of every quantity in the sharpest bound: the target-law quantile $\rho_\alpha$ of the ratio that enters the lower-order term of \cite[eq.\ (9)]{pournaderi2026} can increase under sketching (two atoms with $\PX=(\tfrac1{50},\tfrac{49}{50})$, $w=(2,\tfrac{48}{49})$ and a constant sketch give $\rho_{0.1}=\tfrac{48}{49}<1=\rho_{0.1,Z}$), which is why $\gamma^Z_m$ above uses $B_Z$ in its place.

\section{Proof of the lower bound (Theorem~\ref{thm:lb})}\label{app:lb}

\begin{proof}[Proof of Theorem~\ref{thm:lb}]
\emph{Construction.} Let $\mathcal X=\{x_0,x_0',x_D\}$ and $T(x_0)=T(x_0')=z_0\ne z_D=T(x_D)$. (For a linear sketch $\Pi\in\R^{k\times d}$ with $k<d$ take $x_0'=x_0+v$ with $0\ne v\in\ker\Pi$.) For some $p\in(0,1]$ put
\[
\QX=\delta_{x_0},\qquad \PX=\tfrac p2\delta_{x_0}+\tfrac p2\delta_{x_0'}+(1-p)\delta_{x_D}.
\]
Let $A$ be the uniform law on $[-1,1]$ and $S^\star(x,y)=|y|-(1-\alpha)$, so that the score bands
\[
B_0=\{S^\star\le0\}=\{|y|\le1-\alpha\},\qquad B_1=\{0<S^\star\le\alpha\}=\{1-\alpha<|y|\le1\}
\]
carry mass $1-\alpha$ and $\alpha$ under $A$. Let $B=B^{(\xi)}$ be the law on $[-1,1]$ with density
\[
b(y)=\tfrac12\Bigl(1-\frac{\xi}{1-\alpha}\Bigr)\ \text{on } |y|\le1-\alpha,\qquad
b(y)=\tfrac12\Bigl(1+\frac{\xi}{\alpha}\Bigr)\ \text{on } 1-\alpha<|y|\le1,
\]
i.e.\ $A$ with mass $\xi$ moved from $B_0$ to $B_1$, uniformly within each band, so that
\[
B(B_0)=1-\alpha-\xi,\qquad B(B_1)=\alpha+\xi,\qquad \TV(A,B)=\xi .
\]
Define
\[
P^0_{Y|X}:\ (x_0,x_0',x_D)\mapsto(A,B,A),\qquad P^1_{Y|X}:\ (x_0,x_0',x_D)\mapsto(B,A,A).
\]
Lemma~\ref{lem:A1check} verifies Assumption~\ref{ass:A1} for both kernels with the stated constants.

\emph{(a)} Under either kernel the sketched source law puts mass $p$ on $z_0$ with response law $\tfrac12(A+B)$ and mass $1-p$ on $z_D$ with law $A$; $Q_Z=\delta_{z_0}$ and $w_Z(z_0)=1/p$. The target conditional at $z_0$ is $A$ under $P^0$ and $B$ under $P^1$, while $P_{Y|Z=z_0}=\tfrac12(A+B)$ in both cases, so
\[
\Delta_T=\TV\bigl(A,\tfrac12(A+B)\bigr)=\tfrac12\TV(A,B)=\xi/2 .
\]

\emph{(b)} Since the sketched data have the same law under both kernels and $\widehat q$ is a fixed function of $(\mathcal D,U)$, the law of $\widehat q$ is the same under $P^0$ and $P^1$. Under $P^0$ the target coverage of $\{S^\star\le q\}$ is $A\{|y|\le1-\alpha+q\}$ and under $P^1$ it is $B\{|y|\le1-\alpha+q\}$. Thresholds in $\Rbar$ are allowed, including $q=+\infty$ (the set $\R$) and $q=-\infty$ (the empty set). Clipping $q$ to $[0,\alpha]$ does not increase the absolute coverage error under either kernel: for $q\in[-(1-\alpha),0)$ the errors are
\[
D_A(q)=q,\qquad D_B(q)=q\Bigl(1-\frac{\xi}{1-\alpha}\Bigr)-\xi,
\]
so $|D_A(q)|\ge0=|D_A(0)|$ and $|D_B(q)|\ge\xi=|D_B(0)|$; for $q<-(1-\alpha)$, including $q=-\infty$, both coverages are $0$ and the errors are larger still; for $q>\alpha$, including $q=+\infty$, both coverages equal $1$, as at $q=\alpha$. Write the clipped threshold as $\widetilde q=\alpha\,a(\mathcal D,U)$ with $a\in[0,1]$, and $\bar a:=\E[a]$. Then
\[
D_A(a)=\alpha a\ge0,\qquad D_B(a)=(\alpha+\xi)a-\xi ,
\]
so, by Jensen's inequality,
\[
\E|D_A(a)|=\alpha\bar a,\qquad \E|D_B(a)|\ge|\E D_B(a)|=|(\alpha+\xi)\bar a-\xi| .
\]
Hence
\begin{align*}
\max_j\E\abs{D_j(\widehat q)}\ &\ge\ \max_j\E\abs{D_j(\widetilde q)}\ \ge\ \max\bigl\{\alpha\bar a,\ \abs{(\alpha+\xi)\bar a-\xi}\bigr\}\\
&\ge\ \min_{t\in[0,1]}\max\bigl\{\alpha t,\ \abs{(\alpha+\xi)t-\xi}\bigr\}=\frac{\alpha\xi}{2\alpha+\xi},
\end{align*}
the minimum being attained at $t=\xi/(2\alpha+\xi)$, where the two branches are equal. The deterministic rule $\widehat q\equiv\alpha\xi/(2\alpha+\xi)$ has $|D_A|=|D_B|=\alpha\xi/(2\alpha+\xi)$, so the bound is attained.
\end{proof}

\begin{lemma}[Regularity of the lower-bound kernels]\label{lem:A1check}
For $0<\xi\le\tfrac12\min\{\alpha,1-\alpha\}$, both $A$ and $B^{(\xi)}$ satisfy the conditions of Assumption~\ref{ass:A1} with $\mu_{\mathrm{low}}=\tfrac14$, $\mu_{\mathrm{up}}=\tfrac34$, $r_0=\tfrac{2\alpha}3$.
\end{lemma}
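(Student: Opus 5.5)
The plan is to verify the two halves of Assumption~\ref{ass:A1} directly from the explicit piecewise-constant densities. $A$ has density $\tfrac12$ on $[-1,1]$. $B^{(\xi)}$ has density $b_0=\tfrac12(1-\xi/(1-\alpha))$ on the inner band $|y|\le1-\alpha$ and $b_1=\tfrac12(1+\xi/\alpha)$ on the outer band. Both are supported on $[-1,1]$.

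\emph{Upper bound.} The first step is to bound the densities. Since $\xi\le\tfrac12\min\{\alpha,1-\alpha\}$, we have $\xi/\alpha\le\tfrac12$ and $\xi/(1-\alpha)\le\tfrac12$. Hence $\tfrac14\le b_0\le\tfrac12$ and $\tfrac12\le b_1\le\tfrac34$. Every density involved therefore lies in $[\tfrac14,\tfrac34]$ on $[-1,1]$. This immediately gives $P([u,v])\le\tfrac34(v-u)$ for all $u\le v$, which is the $\mu_{\mathrm{up}}=\tfrac34$ condition.

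\emph{Locating the quantiles.} The second step locates the $\alpha/2$- and $(1-\alpha/2)$-quantiles. By symmetry of both laws about $0$, the two tails have mass $\alpha/2$ each. For $A$ the quantiles are $\pm(1-\alpha)$. For $B$ the outer band on each side has mass $(\alpha+\xi)/2>\alpha/2$, so the quantile lies strictly inside the outer band. Solving $b_1(1-|f|)=\alpha/2$ gives $f_{1-\alpha/2}=1-\alpha/(1+\xi/\alpha)$, and $f_{\alpha/2}$ is its negative. The distance from this quantile to the endpoint $\pm1$ is $\alpha/(1+\xi/\alpha)\ge\tfrac23\alpha$. The distance to the band boundary $\pm(1-\alpha)$ is $\alpha\xi/(\alpha+\xi)$, which may be small.

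\emph{Lower bound.} The third step checks the $\mu_{\mathrm{low}}=\tfrac14$ condition with $r_0=\tfrac{2\alpha}3$. For any window of length $h\le r_0$ on either side of the quantile, I need the window to stay inside $[-1,1]$; the mass is then at least $\tfrac14 h$, because the density is at least $\tfrac14$ everywhere on $[-1,1]$. So it suffices that both one-sided windows $[f-h,f]$ and $[f,f+h]$ remain in the support. Towards the origin there is always room of length at least $1-\alpha$ or more. This is at least $\tfrac{2\alpha}{3}$ whenever $1-\alpha\ge\tfrac{2\alpha}{3}$, that is, $\alpha\le\tfrac35$.

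\emph{Main obstacle.} The delicate step is this support check near the edge, and the inward window when $\alpha$ is large. Outward, the room is $\alpha/(1+\xi/\alpha)\ge\tfrac23\alpha=r_0$, using $\xi/\alpha\le\tfrac12$; this is exactly why $r_0=\tfrac{2\alpha}3$. For $A$ the outward room is $\alpha\ge r_0$. Inward, the room is $|f|+1\ge 1\ge\tfrac{2\alpha}3$ if windows may cross $0$, since the density stays $\ge\tfrac14$ on all of $[-1,1]$. With that observation there is no restriction on $\alpha$. Window masses straddling a band boundary are handled by the uniform lower bound $\tfrac14$ on the density, so no case split between bands is needed.
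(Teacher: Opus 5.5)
Your proof is correct and follows the paper's route: bound both densities between $\tfrac14$ and $\tfrac34$ on $[-1,1]$, then show that one-sided windows of length $h\le\tfrac{2\alpha}{3}$ at each quantile stay inside $[-1,1]$ and so carry mass at least $\tfrac h4$. The paper gets the endpoint distance $\min\{q+1,1-q\}\ge(\alpha/2)/\mu_{\mathrm{up}}=\tfrac{2\alpha}{3}$ without computing the quantile, whereas you compute it explicitly; you should also delete the tentative restriction $\alpha\le\tfrac35$ in your third step, which your final paragraph rightly shows is unnecessary because windows may cross the band boundary and the origin.
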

\begin{proof}
$A$ has density $\tfrac12$ on $[-1,1]$. Under the restriction on $\xi$, the two values of the density $b$ of $B^{(\xi)}$ satisfy
\[
\tfrac12\Bigl(1-\frac{\xi}{1-\alpha}\Bigr)\ge\tfrac14,\qquad \tfrac12\Bigl(1+\frac\xi\alpha\Bigr)\le\tfrac34,
\]
so $\tfrac14\le b\le\tfrac34$ on $[-1,1]$ and the upper mass bound holds with $\mu_{\mathrm{up}}=\tfrac34$. For the lower growth, every quantile $q$ of either law at level $\tau\in\{\alpha/2,1-\alpha/2\}$ lies in $(-1,1)$ with
\[
\min\{q+1,1-q\}\ \ge\ \frac{\alpha/2}{\mu_{\mathrm{up}}}=\frac{2\alpha}{3},
\]
because the mass beyond $q$ on the near side is at least $\alpha/2$ and the density is at most $\tfrac34$. Hence for $h\le r_0=2\alpha/3$ the intervals $[q-h,q]$ and $[q,q+h]$ lie in $[-1,1]$ and carry mass at least $\tfrac14h$.
\end{proof}

\section{Proofs of Propositions~\ref{prop:lip} and \ref{prop:fullgauss}}\label{app:lip}

\begin{proof}[Proof of Proposition~\ref{prop:lip}]
From the proof of Proposition~\ref{prop:gauss}, conditional on $Z=z$ the index
\[
t=\langle\beta,X\rangle=\langle\Pi\beta,z\rangle+\langle\beta_\perp,X_\perp\rangle
\]
has law $\Lambda_P:=N(c,\norm{\beta_\perp}^2)$ under $P$ and $N(c+m,\norm{\beta_\perp}^2)$ under $Q$, with $c=\langle\Pi\beta,z\rangle$ and $m=\mu\langle\beta_\perp,a_\perp\rangle$. Hence
\[
P_{Y|Z=z}=\int K(t,\cdot)\,\mathrm d\Lambda_P(t),\qquad Q_{Y|Z=z}=\int K(t+m,\cdot)\,\mathrm d\Lambda_P(t),
\]
and for every measurable $A$,
\[
|Q_{Y|Z=z}(A)-P_{Y|Z=z}(A)|\le\int|K(t+m,A)-K(t,A)|\,\mathrm d\Lambda_P(t)\le L_K|m| .
\]
For $\hat\beta\in\operatorname{row}(\Pi)$, as in Corollary~\ref{cor:informed}(b),
\[
|\langle\beta_\perp,a_\perp\rangle|\le\norm{\beta_\perp}\le\norm\beta\sin\angle(\beta,\hat\beta).
\]
For the Gaussian kernel, Lemma~\ref{lem:gausstv} gives
\[
\TV\bigl(N(t,\sigma^2),N(t',\sigma^2)\bigr)=2\Phi\Bigl(\frac{|t-t'|}{2\sigma}\Bigr)-1\le\frac{|t-t'|}{\sqrt{2\pi}\,\sigma}.
\]
For the heteroscedastic kernel, write $f_{\mu,\sigma}$ for the $N(\mu,\sigma^2)$ density and $U\sim N(0,1)$. Then
\[
\partial_\mu f_{\mu,\sigma}=f_{\mu,\sigma}\,\frac{y-\mu}{\sigma^2},\qquad
\tfrac12\norm{\partial_\mu f_{\mu,\sigma}}_{L^1}=\frac{\E|U|}{2\sigma}=\frac{1}{\sqrt{2\pi}\,\sigma},
\]
and
\[
\partial_\sigma f_{\mu,\sigma}=f_{\mu,\sigma}\,\frac{(y-\mu)^2/\sigma^2-1}{\sigma},\qquad
\tfrac12\norm{\partial_\sigma f_{\mu,\sigma}}_{L^1}=\frac{\E|U^2-1|}{2\sigma}=\frac{2\varphi(1)}{\sigma},
\]
using $\E|U^2-1|=4\varphi(1)$. Moving from $(t,\sigma(t))$ to $(t',\sigma(t'))$ along the path $s\mapsto(s,\sigma(s))$, $s$ between $t$ and $t'$, and integrating the $L^1$ norm of the derivative of $f$ along the path,
\begin{multline*}
\TV\bigl(N(t,\sigma(t)^2),N(t',\sigma(t')^2)\bigr)\le\int\Bigl(\frac{1}{\sqrt{2\pi}\,\sigma(s)}+\frac{2\varphi(1)}{\sigma(s)}\,|\sigma'(s)|\Bigr)\mathrm ds\\
\le\frac{1/\sqrt{2\pi}+2\varphi(1)L}{\sigma_{\min}}\,|t-t'| .
\end{multline*}
(A bound with an additional $\sigma_{\max}/\sigma_{\min}^2$ factor follows alternatively from \cite[Thm.\ 1.3]{devroye2018}; the path argument removes the need for an upper bound on $\sigma$.)
\end{proof}

\begin{proof}[Proof of Proposition~\ref{prop:fullgauss}]
Conditional on $Z=z$, write $X=\Pi^{\!\top}z+X_\perp$ as in the proof of Proposition~\ref{prop:gauss}, with $X_\perp\sim N(0,I-\Pi^{\!\top}\Pi)$ under $P$ and $X_\perp\sim N(\mu a_\perp,I-\Pi^{\!\top}\Pi)$ under $Q$. The residual is
\[
R:=Y-\theta^{\!\top}X-\theta_0=\langle\gamma,X\rangle-\theta_0+\sigma\xi=\langle\Pi\gamma,z\rangle-\theta_0+\langle\gamma_\perp,X_\perp\rangle+\sigma\xi,
\]
so that, with $c_z=\langle\Pi\gamma,z\rangle-\theta_0$,
\[
R\mid Z=z\ \sim\ N(c_z,s_\theta^2)\ \text{ under }P,\qquad
R\mid Z=z\ \sim\ N\bigl(c_z+\mu\langle\gamma_\perp,a_\perp\rangle,s_\theta^2\bigr)\ \text{ under }Q.
\]
The score $S=|R|$ is a function of $R$, so by Lemma~\ref{lem:dp} and Lemma~\ref{lem:gausstv}, for every $z$,
\[
\delta_{\Pi,S}(z)\le2\Phi\Bigl(\frac{\mu|\langle\gamma_\perp,a_\perp\rangle|}{2s_\theta}\Bigr)-1;
\]
the two further inequalities are $2\Phi(t/2)-1\le t/\sqrt{2\pi}$, Cauchy--Schwarz, and $\norm{\gamma_\perp}\le\norm\gamma$, $s_\theta\ge\sigma$. In the heteroscedastic case
\[
R=\langle\gamma,X\rangle-\theta_0+\sigma(\langle\beta,X\rangle)\,\xi,
\]
and conditional on $Z=z$ the pair $(t_1,t_2):=(\langle\gamma,X\rangle,\langle\beta,X\rangle)$ is Gaussian with the same covariance under $P$ and $Q$ and means differing by
\[
(m_1,m_2):=\mu\bigl(\langle\gamma_\perp,a_\perp\rangle,\langle\beta_\perp,a_\perp\rangle\bigr).
\]
Hence, with $K(t_1,t_2)=N(t_1-\theta_0,\sigma(t_2)^2)$ and $\Lambda$ the $P$-law of $(t_1,t_2)$ given $z$,
\[
P_{R|Z=z}=\int K(t_1,t_2)\,\mathrm d\Lambda(t),\qquad Q_{R|Z=z}=\int K(t_1+m_1,t_2+m_2)\,\mathrm d\Lambda(t),
\]
so
\[
\delta_{\Pi,S}(z)\le\sup_{t}\TV\bigl(K(t_1+m_1,t_2+m_2),K(t_1,t_2)\bigr).
\]
Along the path $u\mapsto(t_1+um_1,t_2+um_2)$, $u\in[0,1]$, half the $L^1$ norm of the derivative of the Gaussian density is at most
\[
\frac{|m_1|}{\sqrt{2\pi}\,\sigma(\cdot)}+\frac{2\varphi(1)}{\sigma(\cdot)}\,|\sigma'(\cdot)|\,|m_2|
\]
by the two derivative computations in the proof of Proposition~\ref{prop:lip} (which bound half the $L^1$ norms of $\partial_\mu f$ and $\partial_\sigma f$), and integrating over $u$ with $\sigma\ge\sigma_{\min}$, $|\sigma'|\le L$ bounds the total variation as claimed.
\end{proof}

\section{Proof of Theorem~\ref{thm:lognormal}}\label{app:lognormal}

\paragraph{Closed forms.} Under $P$, $G:=\langle a,X\rangle\sim N(0,1)$ and $w=e^{\mu G-\mu^2/2}$, so $w>M$ iff $G>c$. Completing the square,
\[
e^{\mu g-\mu^2/2}\varphi(g)=\varphi(g-\mu),\qquad e^{2\mu g-\mu^2}\varphi(g)=e^{\mu^2}\varphi(g-2\mu);
\]
hence
\begin{gather*}
\E[w\ind\{w>M\}]=\bar\Phi(c-\mu),\qquad
\E[(w-M)_+]=\bar\Phi(c-\mu)-M\bar\Phi(c)=e_M,\\
\E[w^2\ind\{w\le M\}]=e^{\mu^2}\Phi(c-2\mu),\qquad
\E[w^2\ind\{w>M\}]=e^{\mu^2}\bar\Phi(c-2\mu),\\
\E[v_M^2]=e^{\mu^2}\Phi(c-2\mu)+M^2\bar\Phi(c)=c_M^2K_M^2 .
\end{gather*}
Also, with $c'=(\log M')/\mu+\mu/2$,
\[
\QX\{w>M'\}=\E_P[w\ind\{w>M'\}]=\bar\Phi(c'-\mu).
\]

\paragraph{Clipped rule.} By the proof of Proposition~\ref{prop:clip}, the rule run with $v_M$ coincides with the rule run with the density ratio $r_M=v_M/c_M$, and $\tfrac12\E_P|w-r_M|\le e_M$. The ratio $r_M$ is bounded by $B_M$ and has $\E_P[r_M^2]=K_M^2$, and the score is continuous under $r_M\PX\otimes P_{Y|X}$ because $r_M\PX$ and $\QX$ are mutually absolutely continuous. Theorem~\ref{thm:price}(a) applied to the surrogate problem $(\PX,\,r_M\PX,\,r_M)$ gives, with probability at least $1-\delta$,
\[
\bigl|\Prob_{r_M\PX\otimes P_{Y|X}}\{Y\in\Cset^{v_M}(X)\mid\mathcal D\}-(1-\alpha)\bigr|\le2\eta_m(\delta;K_M^2,B_M)+B_M/m,
\]
and the total-variation step of Proposition~\ref{prop:clip}, which holds conditionally on $\mathcal D$ because the coverage event is then a fixed set, adds $e_M$. For the sketched rule,
\[
w_Z=e^{\mu\langle\Pi a,Z\rangle-\mu^2\norm{a_\parallel}^2/2},\qquad \langle\Pi a,Z\rangle\sim N(0,\norm{a_\parallel}^2)\ \text{ under }P_Z,
\]
so $w_Z$ is lognormal with parameter $\mu\norm{a_\parallel}$, and the argument applies in the sketched space with the additional total-variation step $\Delta_\Pi$ of Theorem~\ref{prop:main}.

\paragraph{Unclipped rule.} Write $w=v_M+(w-M)_+$ and, in the notation of Appendix~\ref{app:price},
\[
\widehat F^w(t)=\widehat F^{v_M}(t)+\widehat F^+(t),\qquad
\widehat F^+(t):=\frac1m\sum_i(w(X_i)-M)_+\ind\{S_i\le t\}\in\bigl[0,\widehat F^+(+\infty)\bigr],
\]
and $F^{v_M}(t):=\E[v_M(X)\ind\{S\le t\}]$, so that $0\le F^Q(t)-F^{v_M}(t)\le e_M$. Steps 1--2 of the proof of Theorem~\ref{thm:price}(a), applied to the summands $v_M(X_i)\ind\{S_i\le t\}\in[0,M]$ with variance at most $\E[v_M^2]=c_M^2K_M^2$, give, with probability at least $1-\delta$,
\[
\sup_t\bigl|\widehat F^{v_M}(t)-F^{v_M}(t)\bigr|\le\eta_m(\delta;c_M^2K_M^2,M).
\]
By Chebyshev's inequality, with probability at least $1-\delta'$,
\[
\widehat F^+(+\infty)\le e_M+\sqrt{\frac{\operatorname{Var}((w-M)_+)}{m\delta'}}\le e_M+\sqrt{\frac{e^{\mu^2}\bar\Phi(c-2\mu)}{m\delta'}}.
\]
On the intersection of the two events, including at $t=+\infty$,
\[
\sup_{t\in\Rbar}\bigl|\widehat F^w(t)-F^Q(t)\bigr|\le\eta':=\eta_m(\delta;c_M^2K_M^2,M)+2e_M+\sqrt{\frac{e^{\mu^2}\bar\Phi(c-2\mu)}{m\delta'}}.
\]
Steps 3--4 of that proof now go through with $\eta'$ in place of $\eta$ for every test covariate $x$ with $w(x)\le M'$, giving
\[
u_{\mathrm{lo}}:=(1-\alpha)(1-\eta')-\eta'\ \le\ F^Q\bigl(\widehat Q^w(x)\bigr)\ \le\ u_{\mathrm{hi}}:=(1-\alpha)(1+\eta'+M'/m)+\eta'
\]
and hence
\[
F^Q_x(q_{\mathrm{lo}})\le F^Q_x\bigl(\widehat Q^w(x)\bigr)\le F^Q_x(q_{\mathrm{hi}})
\]
for such $x$. Integrating over $\QX$ and bounding the contribution of $\{w>M'\}$ by its $\QX$-mass,
\[
u_{\mathrm{lo}}-\QX\{w>M'\}\ \le\ \Prob_Q\{Y\in\Cset^w(X)\mid\mathcal D\}\ \le\ u_{\mathrm{hi}}+\QX\{w>M'\},
\]
and
\begin{gather*}
(1-\alpha)-u_{\mathrm{lo}}\le2\eta',\qquad u_{\mathrm{hi}}-(1-\alpha)\le2\eta'+M'/m,\\
\QX\{w>M'\}=\bar\Phi\bigl((\log M')/\mu-\mu/2\bigr),
\end{gather*}
which is the claim. The two events depend on the fixed $M$ but not on $M'$, so the conclusion holds simultaneously for all $M'>0$.\qed

\section{Auxiliary lemmas}\label{app:aux}

\begin{lemma}[Total variation of mixtures with a common marginal]\label{lem:tvmix}
Let $\mu=\nu\otimes K$ and $\mu'=\nu\otimes K'$ for a probability $\nu$ on $\Zc$ and kernels $K,K'$ from $\Zc$ to $\R$. Then $\TV(\mu,\mu')=\int\TV(K(z,\cdot),K'(z,\cdot))\,\mathrm d\nu(z)$.
\end{lemma}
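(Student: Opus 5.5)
The plan is to reduce the claim to a pointwise statement on $\Zc$ by expressing both measures through densities with respect to a common dominating measure on $\Zc\times\R$. Since all spaces are standard Borel, I will take the kernel $\lambda(z,\cdot):=\tfrac12\bigl(K(z,\cdot)+K'(z,\cdot)\bigr)$, which dominates both $K(z,\cdot)$ and $K'(z,\cdot)$. The measurable version of the Radon--Nikodym theorem for kernels on standard Borel spaces then gives jointly measurable densities $k(z,y)=\mathrm dK(z,\cdot)/\mathrm d\lambda(z,\cdot)$ and $k'(z,y)$. With these, $\mu$ and $\mu'$ have densities $k$ and $k'$ with respect to $\nu\otimes\lambda$.

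The second step applies Scheff\'e on the product space together with Fubini:
\[
\TV(\mu,\mu')=\tfrac12\int|k-k'|\,\mathrm d(\nu\otimes\lambda)=\int\Bigl(\tfrac12\int|k(z,y)-k'(z,y)|\,\lambda(z,\mathrm dy)\Bigr)\mathrm d\nu(z).
\]
Applying Scheff\'e again, this time fibrewise, the inner integral equals $\TV(K(z,\cdot),K'(z,\cdot))$. This proves the identity, and it also shows that $z\mapsto\TV(K(z,\cdot),K'(z,\cdot))$ is measurable.

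The one step that needs care is the joint measurability of the densities. I would avoid citing the kernel version of Radon--Nikodym as a black box and argue directly instead. Since $\R$ is countably generated, Doob's martingale construction of densities along a refining sequence of finite partitions of $\R$ produces, as a limit of jointly measurable ratios, a version of $k$ that is jointly measurable.

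A more elementary route is available and I would keep it as a fallback. It proves the two inequalities separately. For "$\le$", take any measurable $E\subseteq\Zc\times\R$ and integrate fibre sections:
\[
|\mu(E)-\mu'(E)|\le\int|K(z,E_z)-K'(z,E_z)|\,\mathrm d\nu(z)\le\int\TV(K(z,\cdot),K'(z,\cdot))\,\mathrm d\nu(z).
\]
For "$\ge$", test against the Hahn set $E=\{(z,y):k(z,y)>k'(z,y)\}$. This choice still requires the measurable densities, so the obstacle does not go away on this route either.
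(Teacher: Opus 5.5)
Your proof is correct and follows the paper's argument essentially line for line: the paper also dominates both kernels by $\tfrac12\bigl(K(z,\cdot)+K'(z,\cdot)\bigr)$, takes jointly measurable densities, and applies Scheff\'e on the product space, then Fubini, then Scheff\'e again on each fibre. Your Doob-martingale construction of jointly measurable densities spells out a step the paper only asserts.
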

\begin{proof}
For each $z$ let $\rho_z:=\tfrac12(K(z,\cdot)+K'(z,\cdot))$, which dominates both $K(z,\cdot)$ and $K'(z,\cdot)$, and let $k(z,\cdot),k'(z,\cdot)$ be the densities; $z\mapsto\rho_z$ is a kernel and the densities can be chosen jointly measurable. Then $\mu$ and $\mu'$ have densities $k$ and $k'$ with respect to $\nu\otimes\rho$, so by Scheff\'e
\[
\TV(\mu,\mu')=\tfrac12\iint|k-k'|\,\mathrm d\rho_z\,\mathrm d\nu(z)=\int\Bigl(\tfrac12\int|k(z,\cdot)-k'(z,\cdot)|\,\mathrm d\rho_z\Bigr)\mathrm d\nu(z),
\]
and the inner integral is $\TV(K(z,\cdot),K'(z,\cdot))$.
\end{proof}

\begin{lemma}[Data processing for total variation]\label{lem:dp}
Let $\mu,\mu'$ be probability measures on a space $\mathcal U$ and $K$ a Markov kernel from $\mathcal U$ to $\mathcal V$. Then $\TV(\mu K,\mu'K)\le\TV(\mu,\mu')$, where $(\mu K)(B)=\int K(u,B)\,\mathrm d\mu(u)$.
\end{lemma}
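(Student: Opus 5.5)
The statement to prove is the data-processing inequality of Lemma~\ref{lem:dp}: $\TV(\mu K,\mu'K)\le\TV(\mu,\mu')$ for any Markov kernel $K$. The plan is to reduce it to a Jordan-decomposition argument on the signed measure $\mu-\mu'$, using only the definition $\TV(\mu,\nu)=\sup_A|\mu(A)-\nu(A)|$ fixed in the notation section.

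First, fix a measurable $B\subseteq\mathcal V$ and put $f(u):=K(u,B)$. This $f$ is measurable on $\mathcal U$ with values in $[0,1]$, and by the definition of $\mu K$,
\[
(\mu K)(B)-(\mu'K)(B)=\int f\,\mathrm d(\mu-\mu').
\]
The task is therefore to bound $\bigl|\int f\,\mathrm d(\mu-\mu')\bigr|$ by $\TV(\mu,\mu')$ uniformly over measurable $f$ with $0\le f\le 1$.

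Second, let $\rho:=\tfrac12(\mu+\mu')$, which dominates both measures, and let $g,g'$ be the densities of $\mu,\mu'$ with respect to $\rho$. Set $h:=g-g'$ and $E:=\{h>0\}$. Since $\int h\,\mathrm d\rho=0$, we have $\int h^+\,\mathrm d\rho=\int h^-\,\mathrm d\rho=\tfrac12\int|h|\,\mathrm d\rho$. This common value equals $\mu(E)-\mu'(E)$, so it is at most $\TV(\mu,\mu')$. (By Scheff\'e it is in fact exactly $\TV(\mu,\mu')$, but only the upper bound is needed.)

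Third, because $0\le f\le1$,
\[
-\int h^-\,\mathrm d\rho\ \le\ \int fh\,\mathrm d\rho\ \le\ \int h^+\,\mathrm d\rho .
\]
Hence $\bigl|(\mu K)(B)-(\mu'K)(B)\bigr|\le\TV(\mu,\mu')$. Taking the supremum over $B$ gives the lemma.

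There is no real obstacle here. The only points needing care are the measurability of $u\mapsto K(u,B)$, which is part of the definition of a Markov kernel, and the cancellation $\int h\,\mathrm d\rho=0$, which holds because both $\mu$ and $\mu'$ are probability measures. This is the same ``$\int g^+=\int g^-$'' step already used in the proof of Proposition~\ref{prop:scoreleak}.
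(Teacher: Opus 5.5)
Your proof is correct and takes the same route as the paper: write $(\mu K)(B)-(\mu'K)(B)=\int K(\cdot,B)\,\mathrm d(\mu-\mu')$ and bound this uniformly over $[0,1]$-valued integrands by $\TV(\mu,\mu')$. The paper only asserts the identity $\sup_{0\le g\le1}\bigl|\int g\,\mathrm d(\mu-\mu')\bigr|=\TV(\mu,\mu')$, whereas your $h^{\pm}$ density argument proves the inequality that is actually needed.
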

\begin{proof}
For measurable $B\subseteq\mathcal V$, $u\mapsto K(u,B)$ takes values in $[0,1]$, so
\[
|(\mu K)(B)-(\mu'K)(B)|=\Bigl|\int K(\cdot,B)\,\mathrm d(\mu-\mu')\Bigr|\le\sup_{0\le g\le1}\Bigl|\int g\,\mathrm d(\mu-\mu')\Bigr|=\TV(\mu,\mu').
\]
\end{proof}

\begin{lemma}[Chain rule for a deterministic map]\label{lem:chain}
Let $\QX\ll\PX$ on $\mathcal X$, $T:\mathcal X\to\Zc$ measurable, $P_Z=T_\#\PX$, $Q_Z=T_\#\QX$, with fibre laws $P_{X|Z},Q_{X|Z}$. Then $\KL(\QX\|\PX)=\KL(Q_Z\|P_Z)+\E_{Q_Z}\KL(Q_{X|Z}\|P_{X|Z})$ (both sides possibly $+\infty$).
\end{lemma}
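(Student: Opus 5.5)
The plan is to write the fibre densities explicitly and use the chain rule for Radon--Nikodym derivatives. First handle the case $\KL(\QX\|\PX)<\infty$ or, more generally, just work with the density $w=\mathrm d\QX/\mathrm d\PX$ (which exists since $\QX\ll\PX$). By Theorem~\ref{prop:main}(i), $w_Z\circ T=\E_P[w\mid T(X)]$ is the ratio $\mathrm dQ_Z/\mathrm dP_Z$. The key intermediate claim is that, for $Q_Z$-a.e.\ $z$ (equivalently on $\{w_Z>0\}$), $Q_{X|Z=z}\ll P_{X|Z=z}$ with density $w/w_Z(z)$ on the fibre. To prove it, I will check that the kernel $z\mapsto (w/w_Z(z))\,P_{X|Z=z}$ disintegrates $\QX$ over $Q_Z$: for bounded measurable $h$ on $\mathcal X$ and $B\subseteq\Zc$,
\[
\int_B\int h\,\frac{w}{w_Z(z)}\,\mathrm dP_{X|Z=z}\,\mathrm dQ_Z(z)=\int_B\E_P[hw\mid Z=z]\,\mathrm dP_Z(z)=\E_P[hw\,\ind_B(Z)]=\E_Q[h\,\ind_B(Z)],
\]
using $\mathrm dQ_Z=w_Z\,\mathrm dP_Z$ and that $\{w_Z=0\}$ is $Q_Z$-null; the same integral with $h=1$ confirms each fibre measure is a probability $Q_Z$-a.s. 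Uniqueness of regular conditional distributions (standard Borel) then identifies it with $Q_{X|Z=z}$.

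With the factorisation $w(x)=w_Z(T(x))\cdot\bigl(w(x)/w_Z(T(x))\bigr)$ on $\{w_Z\circ T>0\}$, which carries full $\QX$-mass, take logarithms and integrate against $\QX$, disintegrated as $Q_Z\otimes Q_{X|Z}$:
\[
\int\log w\,\mathrm d\QX=\int\log w_Z\,\mathrm dQ_Z+\int\!\!\int\log\frac{w}{w_Z(z)}\,\mathrm dQ_{X|Z=z}\,\mathrm dQ_Z(z).
\]
The middle term is $\KL(Q_Z\|P_Z)$ and the inner integral is $\KL(Q_{X|Z=z}\|P_{X|Z=z})$.

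The main obstacle is justifying the splitting of the integral when the terms may be infinite or when negative parts are not obviously integrable. I would handle it by working with the nonnegative integrands $\phi(u)=u\log u-u+1$: write $\KL(\mu\|\nu)=\int\phi(\mathrm d\mu/\mathrm d\nu)\,\mathrm d\nu$, and verify the pointwise identity
\[
\phi(w)=\phi(w_Z)+w_Z\,\phi\bigl(w/w_Z\bigr)+\bigl(w-w_Z\bigr)\log w_Z
\]
on the fibres. Integrating over each fibre against $P_{X|Z=z}$ kills the last term, since $\E_{P_{X|Z=z}}[w]=w_Z(z)$. This leaves sums of nonnegative quantities, so Tonelli applies and the identity holds in $[0,\infty]$. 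Equivalently, one could first prove the finite case and obtain the infinite case by monotone truncation, but the $\phi$-form is cleaner. The cross-term cancellation also has to be done fibrewise before integrating, so that no $\infty-\infty$ appears; where $\log w_Z$ is unbounded, I would restrict to fibres with $0<w_Z<\infty$, which is $Q_Z$-a.s. all of them.
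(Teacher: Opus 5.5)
Your proposal is correct and follows the paper's route: the same disintegration computation identifies the fibre density as $w/w_Z(z)$, and the factorisation $\log w=\log w_Z\circ T+\log(w/w_Z\circ T)$ is then integrated against $\QX$. Your nonnegative $\phi(u)=u\log u-u+1$ bookkeeping supplies the justification against $\infty-\infty$ that the paper's one-line integration leaves implicit. Since that bookkeeping integrates against $P_Z$, note also that on the $Q_Z$-null but possibly $P_Z$-charged set $\{w_Z=0\}$ one has $w=0$ there, so $\phi(w)=\phi(w_Z)=1$ and the identity still holds.
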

\begin{proof}
Let $w=\mathrm d\QX/\mathrm d\PX$ and $w_Z=\mathrm dQ_Z/\mathrm dP_Z$. By Theorem~\ref{prop:main}(i), $w_Z\circ T=\E_P[w\mid T]$, and $w_Z>0$ $Q_Z$-a.s. For bounded measurable $h$ and $B\subseteq\Zc$,
\begin{align*}
\E_Q[h(X)\ind_B(Z)]&=\E_P[w\,h(X)\ind_B(Z)]=\int_B\E_P[w\,h(X)\mid Z=z]\,\mathrm dP_Z(z)\\
&=\int_B\E_P\Bigl[\frac{w}{w_Z(z)}\,h(X)\,\Big|\,Z=z\Bigr]\,\mathrm dQ_Z(z),
\end{align*}
so for $Q_Z$-a.e.\ $z$ the fibre ratio is
\[
\frac{\mathrm dQ_{X|Z=z}}{\mathrm dP_{X|Z=z}}=\frac{w}{w_Z(z)}\qquad\text{on the fibre}.
\]
Hence
\[
\log w=\log w_Z\circ T+\log\bigl(w/w_Z\circ T\bigr),
\]
and integrating against $\QX$ gives the identity, the first term being $\KL(Q_Z\|P_Z)$ and the second $\int\KL(Q_{X|Z=z}\|P_{X|Z=z})\,\mathrm dQ_Z(z)$.
\end{proof}

\begin{lemma}[Gaussian total variation]\label{lem:gausstv}
For $s>0$ and $\theta\in\R$, $\TV\bigl(N(\theta,s^2),N(0,s^2)\bigr)=2\Phi\bigl(|\theta|/(2s)\bigr)-1$.
\end{lemma}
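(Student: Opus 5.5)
Because both laws share the variance $s^2$, translation invariance lets me put one mean at $0$ and the other at $\theta$, and by the reflection $y\mapsto -y$ I may take $\theta\ge0$; the case $\theta=0$ is immediate since both sides vanish. I would then apply Scheff\'e with respect to Lebesgue measure, $\TV=\tfrac12\int|f_\theta-f_0|$, where $f_\theta$ denotes the $N(\theta,s^2)$ density.

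The first step is to find where the two densities cross. The ratio $f_\theta/f_0=\exp\bigl((\theta y-\theta^2/2)/s^2\bigr)$ is increasing in $y$ and equals one exactly at the midpoint $y=\theta/2$. Hence $f_\theta>f_0$ on $(\theta/2,\infty)$ and $f_\theta<f_0$ on $(-\infty,\theta/2)$. Since both densities integrate to one, half the $L^1$ distance equals the integral of the positive part, so
\[
\TV=\int_{\theta/2}^{\infty}(f_\theta-f_0)\,\mathrm dy .
\]
Equivalently, the supremum defining $\TV$ is attained at the half-line $A=(\theta/2,\infty)$.

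Next I would standardise each normal and evaluate the two tail masses. This gives $N(\theta,s^2)(A)=\bar\Phi(-\theta/(2s))=\Phi(\theta/(2s))$ and $N(0,s^2)(A)=\bar\Phi(\theta/(2s))=1-\Phi(\theta/(2s))$. Their difference is $2\Phi(\theta/(2s))-1$, and undoing the reduction $\theta\ge0$ replaces $\theta$ by $|\theta|$.

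There is no real obstacle; the only point needing care is that the one-sided integral equals half the $L^1$ norm. That follows from $\int(f_\theta-f_0)=0$, the same fact used in the proof of Proposition~\ref{prop:scoreleak}.
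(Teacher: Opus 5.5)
Your proposal is correct and follows the paper's proof: reduce to $\theta\ge0$, observe that the densities cross at $y=\theta/2$, and compute the total variation as the difference of the two masses of the half-line $(\theta/2,\infty)$. The only difference is cosmetic: the paper rescales to $s=1$ first, whereas you keep $s$ throughout. You also spell out the Scheff\'e step (half the $L^1$ norm equals the integral of the positive part), which the paper leaves implicit.
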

\begin{proof}
By scaling take $s=1$, $\theta\ge0$. The densities cross at $y=\theta/2$ and $N(\theta,1)$ dominates on $\{y>\theta/2\}$, so
\[
\TV=\Prob\{N(\theta,1)>\theta/2\}-\Prob\{N(0,1)>\theta/2\}=\Phi(\theta/2)-\bigl(1-\Phi(\theta/2)\bigr).
\]
\end{proof}

\begin{lemma}[Haar-frame moments]\label{lem:haar}
Let $\Pi\in\R^{k\times d}$ have orthonormal rows and be distributed as the first $k$ rows of a Haar-distributed orthogonal matrix. Then for every fixed $v\in\R^d$, $\E\norm{\Pi^{\!\top}\Pi v}^2=\tfrac kd\norm v^2$.
\end{lemma}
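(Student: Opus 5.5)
The identity follows from the rotation invariance of the Haar measure. Write $P:=\Pi^{\!\top}\Pi$, the orthogonal projector onto $\operatorname{row}(\Pi)$. Since $P$ is symmetric and idempotent,
\[
\norm{Pv}^2=\langle v,P^{\!\top}Pv\rangle=\langle v,Pv\rangle .
\]
By linearity of expectation it therefore suffices to show that $\E P=\tfrac kd I_d$. This gives $\E\norm{Pv}^2=\langle v,\E P\,v\rangle=\tfrac kd\norm v^2$.

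To compute $\E P$, let $O$ be the Haar-distributed orthogonal matrix whose first $k$ rows form $\Pi$. For any fixed orthogonal $U$, the matrix $OU^{\!\top}$ is again Haar distributed. Its first $k$ rows are $\Pi U^{\!\top}$, so $\Pi U^{\!\top}$ has the same law as $\Pi$. Consequently
\[
U^{\!\top}(\E P)U=\E\bigl[U^{\!\top}\Pi^{\!\top}\Pi U\bigr]=\E\bigl[(\Pi U)^{\!\top}(\Pi U)\bigr],
\]
and the same argument applied to $U^{\!\top}$ shows this equals $\E P$. So the symmetric matrix $\E P$ commutes with every orthogonal matrix. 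Taking $U$ to be a reflection $I-2uu^{\!\top}$ shows that every unit vector $u$ is an eigenvector of $\E P$. A symmetric matrix for which every vector is an eigenvector is a scalar multiple of the identity, so $\E P=cI_d$ for some constant $c$.

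The constant comes from the trace. Because $\Pi\Pi^{\!\top}=I_k$,
\[
\operatorname{tr}P=\operatorname{tr}\Pi\Pi^{\!\top}=k,
\]
and hence $cd=\operatorname{tr}\E P=k$, giving $c=k/d$.

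The only step needing care is the invariance claim: that right multiplication by a fixed orthogonal matrix preserves the law of the first $k$ rows. This is immediate from the right-invariance of Haar measure on $O(d)$, so the argument presents no real obstacle. Alternatively, one could write $v=\norm v\,O^{\!\top}e$ with $e$ uniform on the sphere and use $\E e_i^2=1/d$.
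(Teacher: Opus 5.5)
Your proof is correct, but it takes a different route from the paper's. The paper works row by row. It writes $\norm{\Pi^{\!\top}\Pi v}^2=\sum_{i=1}^k\langle\pi_i,v\rangle^2$ and notes that each row $\pi_i$ is uniform on the unit sphere. It then gets $\E\langle\pi_i,v\rangle^2=\norm{v}^2/d$ because the $d$ coordinates of a uniform unit vector have equal second moments summing to one.

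You instead average the projector itself. Using only right-invariance of Haar measure, you show that $\E[\Pi^{\!\top}\Pi]$ commutes with every orthogonal matrix. Reflections then force it to be scalar, and $\operatorname{tr}(\Pi^{\!\top}\Pi)=\operatorname{tr}(\Pi\Pi^{\!\top})=k$ fixes the scalar. The normalisation is the same fact in both proofs: your trace identity is the sum of $\norm{\pi_i}^2=1$ over the $k$ rows. Your closing ``alternative'' is essentially the paper's argument, with the rotation moved from the frame onto $v$.

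The paper's version is shorter and needs only the marginal law of a single row. Yours proves the stronger matrix identity $\E[\Pi^{\!\top}\Pi]=\tfrac kd I_d$. This gives the claim for all $v$ at once, together with the bilinear form $\E\langle u,\Pi^{\!\top}\Pi v\rangle=\tfrac kd\langle u,v\rangle$. It also never needs to identify the law of a row, at the cost of a small Schur-type step.
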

\begin{proof}
With $\pi_i$ the rows of $\Pi$,
\[
\norm{\Pi^{\!\top}\Pi v}^2=\sum_{i=1}^k\langle\pi_i,v\rangle^2 .
\]
Each $\pi_i$ is uniform on the unit sphere, so
\[
\E\langle\pi_i,v\rangle^2=\norm v^2/d
\]
by symmetry (the $d$ coordinates of a uniform unit vector have equal second moments summing to $1$).
\end{proof}

\begin{lemma}[Johnson--Lindenstrauss; for context]\label{lem:jl}
Let $G\in\R^{k\times d}$ have i.i.d.\ $N(0,1/k)$ entries, let $u_1,\dots,u_n\in\R^d$, $\varepsilon\in(0,1)$ and $\delta\in(0,1)$. If $k\ge12\,\varepsilon^{-2}\log\bigl(n(n-1)/\delta\bigr)$ then, with probability at least $1-\delta$, $(1-\varepsilon)\norm{u_i-u_j}^2\le\norm{G(u_i-u_j)}^2\le(1+\varepsilon)\norm{u_i-u_j}^2$ for all $i\ne j$.
\end{lemma}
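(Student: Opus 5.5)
The statement is Lemma~\ref{lem:jl}, the classical Johnson--Lindenstrauss lemma for a Gaussian map. I would prove it by the standard route: a two-sided concentration bound for a single fixed vector, followed by a union bound over the $\binom n2$ differences.

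First, by linearity it suffices to treat a fixed nonzero $v=u_i-u_j$, and by scaling we may take $\norm v=1$. Each coordinate $(Gv)_r=\sum_l G_{rl}v_l$ is then $N(0,1/k)$, and the coordinates are independent across rows $r$, because distinct rows of $G$ are independent. Hence $k\norm{Gv}^2\sim\chi^2_k$, and its law does not depend on $v$. The claim for $v$ becomes
\[
\Prob\bigl\{|\chi^2_k/k-1|>\varepsilon\bigr\}\le 2e^{-k\varepsilon^2/c}
\]
for a suitable constant $c$.

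Second, I would prove this tail bound by the Chernoff method using the chi-square moment generating function $\E e^{\lambda\chi^2_1}=(1-2\lambda)^{-1/2}$ for $\lambda<\tfrac12$.
\begin{itemize}
\item For the upper tail, optimise $\lambda$ to get $\Prob\{\chi^2_k\ge(1+\varepsilon)k\}\le\bigl((1+\varepsilon)e^{-\varepsilon}\bigr)^{k/2}$. The elementary inequality $\log(1+\varepsilon)\le\varepsilon-\varepsilon^2/2+\varepsilon^3/3$ then gives the bound $\exp\bigl(-k(\varepsilon^2/4-\varepsilon^3/6)\bigr)$.
\item For the lower tail, use $\lambda<0$ in the same way. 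This gives $\bigl((1-\varepsilon)e^{\varepsilon}\bigr)^{k/2}\le e^{-k\varepsilon^2/4}$.
\end{itemize}
For $\varepsilon\in(0,1)$ we have $\varepsilon^2/4-\varepsilon^3/6\ge\varepsilon^2/12$, so each tail is at most $e^{-k\varepsilon^2/12}$. Tracking the constants so that $12$ comes out is the only delicate point. The cubic correction is what forces $12$ rather than $4$, and I expect this bookkeeping to be the main, if mild, obstacle.

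Finally, take a union bound over the $n(n-1)/2$ unordered pairs, each with failure probability at most $2e^{-k\varepsilon^2/12}$. The total failure probability is at most $n(n-1)e^{-k\varepsilon^2/12}$. This is at most $\delta$ exactly when $k\ge12\varepsilon^{-2}\log\bigl(n(n-1)/\delta\bigr)$, which is the stated condition.

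Pairs with $u_i=u_j$ satisfy the inequality trivially and need no probability. The case $n=1$ is vacuous.
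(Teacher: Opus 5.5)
Your proof is correct and follows the paper's argument: it also reduces to $k\norm{Gv}^2/\norm v^2\sim\chi^2_k$, uses the tail bounds $e^{\frac k2(\varepsilon+\log(1-\varepsilon))}\le e^{-k\varepsilon^2/4}$ and $e^{\frac k2(\log(1+\varepsilon)-\varepsilon)}\le e^{-k(\varepsilon^2/2-\varepsilon^3/3)/2}\le e^{-k\varepsilon^2/12}$, and applies the same union bound over the $n(n-1)/2$ pairs. The only difference is that you derive the two tail bounds by the Chernoff method from the chi-square moment generating function, whereas the paper cites them from Dasgupta and Gupta.
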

\begin{proof}
For a fixed $v\ne0$, $k\norm{Gv}^2/\norm v^2\sim\chi^2_k$, and the Dasgupta--Gupta tail bounds \cite{dasgupta2003} give, for $\varepsilon\le1$,
\begin{align*}
\Prob\bigl\{\norm{Gv}^2\le(1-\varepsilon)\norm v^2\bigr\}&\le\exp\bigl(\tfrac k2(\varepsilon+\log(1-\varepsilon))\bigr)\le e^{-k\varepsilon^2/4},\\
\Prob\bigl\{\norm{Gv}^2\ge(1+\varepsilon)\norm v^2\bigr\}&\le\exp\bigl(\tfrac k2(\log(1+\varepsilon)-\varepsilon)\bigr)\le e^{-k(\varepsilon^2/2-\varepsilon^3/3)/2}\le e^{-k\varepsilon^2/12}.
\end{align*}
A union bound over the $n(n-1)/2$ pairs gives failure probability at most $n(n-1)e^{-k\varepsilon^2/12}\le\delta$.
\end{proof}
The lemma is about the $\binom n2$ difference vectors of a given point set; it says nothing about a fixed direction $v$ not determined by that set, for which $\norm{Gv}^2$ concentrates around $\norm v^2$ but the retained energy after orthonormalisation, $\norm{\Pi^{\!\top}\Pi v}^2$ with $\Pi$ the orthonormalised rows of $G$, is only $\tfrac kd\norm v^2$ on average by Lemma~\ref{lem:haar}. This is the distinction behind Section~\ref{sec:gauss}.

\section{Experimental details}\label{app:exp}

\paragraph{Data.} Covariates are $X\sim N(0,I_{40})$ (source) and $X\sim N(\mu a,I_{40})$ (target). The response is $Y=\langle\beta,X\rangle+\sigma(X)\xi$ with $\xi\sim N(0,1)$, $\beta$ a fixed random vector scaled to $\norm\beta=3$ (seed $0$), and $\sigma(X)=0.5+0.5\,(\langle\beta,X\rangle/\norm\beta)_+$ (E1, E2) or $\sigma(X)\equiv1$ (E3, E4, E5). The shift direction $a$ is $0.7\,\beta/\norm\beta+\sqrt{0.51}\,g$ with $g$ a random unit vector orthogonal to $\beta$ (E1, E2, with different seeds for $g$), a random unit vector drawn with a seed independent of $\beta$ (E3, E5), or $\beta/\norm\beta$ (E4).

\paragraph{Sketches.} Random sketches are Gaussian matrices orthonormalised by QR. The informed sketches of E1, E2 and E5 are $\Pi_1=\hat\beta^{\!\top}/\norm{\hat\beta}$ and $\Pi_8$, which stacks $\hat\beta/\norm{\hat\beta}$ above $7$ Gaussian rows and orthonormalises; $\hat\beta$ is the ordinary-least-squares coefficient on the training fold, and both sketches are redrawn in every replication together with the training fold, so they are independent of the calibration fold as required. The shift-aware sketch of E3 appends $a_\perp/\norm{a_\perp}$ to $\Pi$ and re-orthonormalises.

\paragraph{Weights.} Known weights are the closed forms of Proposition~\ref{prop:gauss}: $w(x)=\exp(\mu\langle a,x\rangle-\mu^2/2)$ for full weighting and $w_Z(z)=\exp(\mu\langle\Pi a,z\rangle-\mu^2\norm{\Pi a}^2/2)$ for the sketched rules. The clipped rule uses $\min\{w(x),20\}$ for both calibration and test points. In E2, weights are $\exp(\hat f(z))$ where $\hat f$ is the fitted logit of an (essentially unregularised, $C=10^4$) logistic regression of the label ``target'' on $z$, trained on $2000$ unlabeled source and $2000$ unlabeled target covariates drawn independently of all other data; with equal class sizes the odds estimate $w_Z$ up to a constant, which the algorithm ignores. The reported estimation error is $\tfrac12\E_{P_Z}|\bar w_Z-w_Z|$ with $\bar w_Z=\hat w_Z/\E_{P_Z}\hat w_Z$, computed exactly as $2\Phi(\norm{\hat b-\mu\Pi a}/2)-1$ from the fitted coefficient vector $\hat b$ (Section~\ref{sec:exp}); no Monte Carlo evaluation is involved.

\paragraph{Procedure.} The training fold fits $\hat\mu(z)=\hat\theta^{\!\top}z+\hat\theta_0$ by least squares on $(\Pi X_i,Y_i)$; the score is $|y-\hat\mu(z)|$. The threshold is computed by sorting calibration scores and returning the smallest score at which the cumulative weight reaches $(1-\alpha)\bigl(\sum_jw(Z_j)+w(z)\bigr)$, or $+\infty$ if none does; this is Algorithm~\ref{alg:wcp}.

\paragraph{Reporting.} Coverage is the fraction of $2\cdot10^4$ independent target test points covered in a replication (an infinite threshold counts as covering), averaged over replications ($200$ for E1, E2 and E5, $300$ for E3, $150$ for E4); SD is the standard deviation of this fraction across replications. Length is the within-replication median over test points of $2\widehat Q^w(z)$, then the median across replications; infinite thresholds are included in the within-replication median, which is finite whenever fewer than half the thresholds are infinite and is reported as $\infty$ otherwise (Table~\ref{tab:nonlinear}). Infinite sets is the mean percentage of infinite thresholds. Leakage and $\chisqZ$ values are the closed forms of Proposition~\ref{prop:gauss} evaluated at the realised $\Pi$, averaged over replications where $\Pi$ varies. Figure~\ref{fig:hetero} is E1 on the grid $\mu\in\{0.5,1,\dots,3\}$ with the random-number generator re-seeded at each $\mu$, so its values at $\mu\in\{1,1.5,2.5\}$ coincide with Table~\ref{tab:hetero}. Random seeds are fixed per experiment and recorded in the code. All scripts, the JSON results of the reported runs, the raw console logs and a script downloading the two public data sets are available at \url{https://github.com/mehrdad-pournaderi/Sketched-Calibration-for-Conformal-Prediction-under-Covariate-Shift}.

\paragraph{E6--E8.} E6 uses the data of E1 ($\mu\in\{1.5,2.5\}$, seed $11$) and of E3 (seed $5$); the full-score rules use the least-squares fit on all covariates, and the residual sketch is the least-squares direction of the cross-fitted absolute residuals (two halves of the training fold) on $X$. E7 uses $d=40$, orthonormal $b_1,b_2,b_3$ and $g$ drawn once (seed $0$), $\mu\in\{1.5,2.5\}$, $n\in\{1000,4000,16000\}$, $m\in\{200,500\}$, $100$ replications (seed $31$), $2\cdot10^4$ target test points, $2000+2000$ unlabelled covariates for the shift direction and $n_t=100$ labelled target points for selection; the predictor is \texttt{HistGradientBoostingRegressor} (scikit-learn, $300$ iterations, learning rate $0.1$), with cross-fitted residuals from two $200$-iteration fits on the halves; sliced inverse regression uses $10$ slices; sketched ratios are the Gaussian closed forms for the realised sketches. E8 uses the California housing data (scikit-learn) and the UCI protein tertiary structure data (CASP); covariates are mapped to normal scores by a quantile transform ($1000$ quantiles), winsorised at $\pm3$, and standardised; each replication (seed $5$, $100$ replications) draws $3000$ training and $m$ calibration points uniformly from the pool, splits the remainder in two, uses one half as the unlabelled pool for estimating $w_Z$ by gradient-boosting regression of the known ratio on the sketch ($200$ iterations), and draws $2000$ test and $100$ selection points from the other half with probabilities proportional to $e^{\gamma\langle a,x\rangle}$ (with replacement); the reported $\hat\chi^2_Z$ is the empirical dispersion of the estimated $w_Z$ over the unlabelled pool; on a finite pool the exact $\E[w\mid Z]$ of a generic linear sketch equals $w$, so this is a property of the smoothed estimate, not of a population divergence (Section~\ref{sec:exp}). In E7 the selected rule minimises $|\widehat{\mathrm{Cov}}_j-0.9|$ over nine candidates: full weighting, clipping, the unweighted rule, the residual sketches with $k=1$ and $k=2$, the mean-plus-residual sketch, the estimated shift direction, the random sketch, and the two-dimensional residual-plus-shift sketch omitted from Table~\ref{tab:nonlinear}; in E8 the candidates are the seven rules listed in Tables~\ref{tab:california}--\ref{tab:protein} other than the selected and hindsight rows, together with the tilt-direction sketch (identical to full weighting). ``Best candidate in hindsight'' minimises the same quantity over the same family with the realised test coverage.

\bibliographystyle{plain}
\bibliography{references}

\begin{thebibliography}{10}

\bibitem{barber2021}
Rina~Foygel Barber, Emmanuel~J. Cand{\`e}s, Aaditya Ramdas, and Ryan~J.
  Tibshirani.
\newblock The limits of distribution-free conditional predictive inference.
\newblock {\em Information and Inference: A Journal of the IMA},
  10(2):455--482, 2021.

\bibitem{brillinger1983}
David~R. Brillinger.
\newblock A generalized linear model with ``{G}aussian'' regressor variables.
\newblock In P.~J. Bickel, K.~A. Doksum, and J.~L. Hodges, editors, {\em A
  Festschrift for {E}rich {L}. {L}ehmann}, pages 97--114. Wadsworth, Belmont,
  CA, 1983.

\bibitem{cannings2017}
Timothy~I. Cannings and Richard~J. Samworth.
\newblock Random-projection ensemble classification.
\newblock {\em Journal of the Royal Statistical Society: Series B},
  79(4):959--1035, 2017.

\bibitem{cook1998}
R.~Dennis Cook.
\newblock {\em Regression Graphics: Ideas for Studying Regressions Through
  Graphics}.
\newblock Wiley, New York, 1998.

\bibitem{dasgupta2003}
Sanjoy Dasgupta and Anupam Gupta.
\newblock An elementary proof of a theorem of {J}ohnson and {L}indenstrauss.
\newblock {\em Random Structures \& Algorithms}, 22(1):60--65, 2003.

\bibitem{devroye2018}
Luc Devroye, Abbas Mehrabian, and Tommy Reddad.
\newblock The total variation distance between high-dimensional {G}aussians
  with the same mean, 2018.
\newblock arXiv:1810.08693.

\bibitem{gebelein1941}
Hans Gebelein.
\newblock Das statistische {P}roblem der {K}orrelation als {V}ariations- und
  {E}igenwertproblem und sein {Z}usammenhang mit der {A}usgleichsrechnung.
\newblock {\em Zeitschrift f{\"u}r Angewandte Mathematik und Mechanik},
  21(6):364--379, 1941.

\bibitem{gibbs2023}
Isaac Gibbs, John~J. Cherian, and Emmanuel~J. Cand{\`e}s.
\newblock Conformal prediction with conditional guarantees, 2023.
\newblock arXiv:2305.12616.

\bibitem{gretton2009}
Arthur Gretton, Alex Smola, Jiayuan Huang, Marcel Schmittfull, Karsten
  Borgwardt, and Bernhard Sch{\"o}lkopf.
\newblock Covariate shift by kernel mean matching.
\newblock In {\em Dataset Shift in Machine Learning}. MIT Press, Cambridge, MA,
  2009.

\bibitem{guan2023}
Leying Guan.
\newblock Localized conformal prediction: a generalized inference framework for
  conformal prediction.
\newblock {\em Biometrika}, 110(1):33--50, 2023.

\bibitem{isaev2026}
R.~Isaev, A.~Conrad, Denis Belomestny, Eric Moulines, and Sergey Samsonov.
\newblock Conformalized quantile regression and minimax limits of fixed-score
  calibration under known covariate shift, 2026.
\newblock arXiv:2609.24929.

\bibitem{jl1984}
William~B. Johnson and Joram Lindenstrauss.
\newblock Extensions of {L}ipschitz mappings into a {H}ilbert space.
\newblock {\em Contemporary Mathematics}, 26:189--206, 1984.

\bibitem{joshi2025}
Sunay Joshi, Shayan Kiyani, George Pappas, Edgar Dobriban, and Hamed Hassani.
\newblock Conformal inference under high-dimensional covariate shifts via
  likelihood-ratio regularization.
\newblock In {\em Advances in Neural Information Processing Systems},
  volume~38, 2025.
\newblock arXiv:2502.13030.

\bibitem{lancaster1957}
H.~O. Lancaster.
\newblock Some properties of the bivariate normal distribution considered in
  the form of a contingency table.
\newblock {\em Biometrika}, 44(1/2):289--292, 1957.

\bibitem{lei2021}
Lihua Lei and Emmanuel~J. Cand{\`e}s.
\newblock Conformal inference of counterfactuals and individual treatment
  effects.
\newblock {\em Journal of the Royal Statistical Society: Series B},
  83(5):911--938, 2021.

\bibitem{li1991}
Ker-Chau Li.
\newblock Sliced inverse regression for dimension reduction.
\newblock {\em Journal of the American Statistical Association},
  86(414):316--327, 1991.

\bibitem{maillard2012}
Odalric-Ambrym Maillard and R{\'e}mi Munos.
\newblock Linear regression with random projections.
\newblock {\em Journal of Machine Learning Research}, 13:2735--2772, 2012.

\bibitem{papadopoulos2002}
Harris Papadopoulos, Kostas Proedrou, Volodya Vovk, and Alex Gammerman.
\newblock Inductive confidence machines for regression.
\newblock In {\em Machine Learning: ECML 2002}, pages 345--356. Springer, 2002.

\bibitem{polo2022}
Felipe~Maia Polo and Renato Vicente.
\newblock Effective sample size, dimensionality, and generalization in
  covariate shift adaptation, 2022.
\newblock arXiv:2010.01184v5.

\bibitem{pournaderi2026}
Mehrdad Pournaderi.
\newblock Sharp training-conditional coverage for conformal prediction under
  covariate shift, 2026.
\newblock arXiv:2609.33456.

\bibitem{rpconformal2024}
X.~Qian, J.~Wu, L.~Wei, and Y.~Lin.
\newblock Random projection ensemble conformal prediction for high-dimensional
  classification.
\newblock {\em Chemometrics and Intelligent Laboratory Systems}, 2024.
\newblock doi:10.1016/j.chemolab.2024.105225.

\bibitem{raskutti2016}
Garvesh Raskutti and Michael~W. Mahoney.
\newblock A statistical perspective on randomized sketching for ordinary
  least-squares.
\newblock {\em Journal of Machine Learning Research}, 17(213):1--31, 2016.

\bibitem{renyi1959}
Alfr{\'e}d R{\'e}nyi.
\newblock On measures of dependence.
\newblock {\em Acta Mathematica Academiae Scientiarum Hungaricae}, 10:441--451,
  1959.

\bibitem{romano2019}
Yaniv Romano, Evan Patterson, and Emmanuel~J. Cand{\`e}s.
\newblock Conformalized quantile regression.
\newblock In {\em Advances in Neural Information Processing Systems},
  volume~32, 2019.

\bibitem{sesia2022}
Matteo Sesia and Stefano Favaro.
\newblock Conformal frequency estimation with sketched data.
\newblock In {\em Advances in Neural Information Processing Systems},
  volume~35, 2022.

\bibitem{stojanov2019}
Petar Stojanov, Mingming Gong, Jaime~G. Carbonell, and Kun Zhang.
\newblock Low-dimensional density ratio estimation for covariate shift
  correction.
\newblock In {\em Proceedings of the 22nd International Conference on
  Artificial Intelligence and Statistics (AISTATS)}, volume~89 of {\em
  Proceedings of Machine Learning Research}, Naha, Okinawa, Japan, 2019.

\bibitem{sugiyama2012}
Masashi Sugiyama, Taiji Suzuki, and Takafumi Kanamori.
\newblock {\em Density Ratio Estimation in Machine Learning}.
\newblock Cambridge University Press, Cambridge, 2012.

\bibitem{tibshirani2019}
Ryan~J. Tibshirani, Rina~Foygel Barber, Emmanuel~J. Cand{\`e}s, and Aaditya
  Ramdas.
\newblock Conformal prediction under covariate shift.
\newblock In {\em Advances in Neural Information Processing Systems},
  volume~32, 2019.

\bibitem{vovk2012}
Vladimir Vovk.
\newblock Conditional validity of inductive conformal predictors.
\newblock In {\em Proceedings of the Asian Conference on Machine Learning
  (ACML)}, volume~25 of {\em Proceedings of Machine Learning Research}, pages
  475--490, 2012.

\bibitem{vovk2005}
Vladimir Vovk, Alex Gammerman, and Glenn Shafer.
\newblock {\em Algorithmic Learning in a Random World}.
\newblock Springer, New York, 2005.

\bibitem{wang2026}
J.~Wang and S.~Goel.
\newblock Weight clipping for robust conformal inference under unbounded
  covariate shifts, 2026.
\newblock arXiv:2605.02072.

\bibitem{xu2025}
Rui Xu, Chao Chen, Yue Sun, Parvathinathan Venkitasubramaniam, and Sihong Xie.
\newblock {W}asserstein-regularized conformal prediction under general
  distribution shift.
\newblock In {\em International Conference on Learning Representations (ICLR)},
  2025.
\newblock arXiv:2501.13430.

\bibitem{yang2024}
Yachong Yang, Arun~Kumar Kuchibhotla, and Eric Tchetgen~Tchetgen.
\newblock Doubly robust calibration of prediction sets under covariate shift.
\newblock {\em Journal of the Royal Statistical Society: Series B},
  86(4):943--965, 2024.

\end{thebibliography}

\end{document}